\newif\iflong
\newif\ifshort
\newcommand\blfootnote[1]{  \begingroup
  \renewcommand\thefootnote{}\footnote{#1}  \addtocounter{footnote}{-1}  \endgroup
}
\newcommand{\MEP}{\lambda}
\renewenvironment{claimproof}[1][\proofname]{\proof[#1]}{\endproof}
\setlist[itemize]{leftmargin=*}
\setlist{nolistsep}
\newcommand{\save}[1]{}
\theoremstyle{plain}
\newtheorem{THE}{Theorem}
\newtheorem{LEM}[THE]{Lemma}
\newtheorem{COR}[THE]{Corollary}
\newtheorem*{THE*}{Theorem}
\newtheorem*{fact*}{Fact}
\theoremstyle{definition}
\newtheorem*{DEF*}{Definition}
\newtheorem{OBS}[THE]{Observation}
\newtheorem{CLM}{Claim}
\newcommand{\SB}{\{\,}
\newcommand{\SM}{\;{|}\;}
\newcommand{\SE}{\,\}}
\newcommand{\AAA}{\mathcal{A}}
\newcommand{\BBB}{\mathcal{B}}
\newcommand{\CCC}{\mathcal{C}}
\newcommand{\SSS}{\mathcal{S}}
\newcommand{\III}{\mathcal{I}}
\newcommand{\FFF}{\mathcal{F}}
\newcommand{\Nat}{\mathbb{N}}
\newcommand{\cc}[1]{{\mbox{\textnormal{\textsf{#1}}}}\xspace}
\newcommand{\NP}{\cc{NP}}
\newcommand{\FPT}{\cc{FPT}}
\newcommand{\XP}{\cc{XP}}
\newcommand{\Weft}{{\cc{W}}}
\newcommand{\W}[1]{{\Weft}{{\normalfont{[#1]}}}}
\newcommand{\paraNP}{\cc{paraNP}}
\newcommand{\hy}{\hbox{-}\nobreak\hskip0pt}
\newcommand{\bigoh}{\mathcal{O}}
\newcommand{\probfont}[1]{\textnormal{\textsc{#1}}}
\newcommand{\blank}{{\small\square}}
\newcommand{\DIAMC}{\probfont{Diam-Cluster}}
\newcommand{\DIAMCq}{\probfont{Diam-Cluster-Completion}}
\newcommand{\RADC}{\probfont{Rad-Cluster}}
\newcommand{\RADCq}{\probfont{Rad-Cluster-Completion}}
\newcommand{\threshold}[1]{#1-saturated}
\newcommand{\DCLUS}{DIAM-Cluster}
\newcommand{\RCLUS}{RAD-Cluster}
\newcommand{\MEv}{\textup{ME}}
\newcommand{\yes}{\textsc{Yes}}
\def\ie{{i.e.}}
\def\eg{{e.g.}}
\def\etal{{et al.}}
\newcommand{\YES}{\yes}
\newcommand{\HSET}{\Delta}
\newcommand{\HDIST}{\delta}
\newcommand{\DIAM}{\gamma}
\newcommand{\DCOOR}{Z}
\newcommand{\HN}[2]{N_{#2}({#1})}
\newcommand{\inst}{\mathcal{I}}
\newcommand{\compG}{G}
\title{Finding a Cluster in Incomplete Data}
\author{Eduard Eiben}{Department of Computer Science, Royal Holloway, University of London, Egham, UK}{eduard.eiben@gmail.com}{https://orcid.org/0000-0003-2628-3435}{}
\author{Robert Ganian}{Algorithms and Complexity Group, TU Wien, Vienna, Austria}{rganian@gmail.com}{https://orcid.org/0000-0002-7762-8045}{Project No. Y1329 of the Austrian Science Fund (FWF)}
\author{Iyad Kanj}{School of Computing, DePaul University, Chicago, USA}{ikanj@cdm.depaul.edu}{}{}
\author{Sebastian Ordyniak}{University of Leeds, School of Computing, Leeds, UK}{sordyniak@gmail.com}{https://orcid.org/0000-0003-1935-651X}{Project EP/V00252X/1 of the Engineering and Physical Sciences Research Council (EPSRC)}
\author{Stefan Szeider}{Algorithms and Complexity Group, TU Wien, Vienna, Austria}{sz@ac.tuwien.ac.at}{https://orcid.org/0000-0001-8994-1656}{Project No. P32441 of the Austrian Science Fund (FWF) and Project No. ICT19-065 of the  Vienna Science and Technology Fund (WWTF)}
\authorrunning{E.\ Eiben, R.\ Ganian, I.\ Kanj, S.\ Ordyniak, S.\ Szeider} %
\keywords{Parameterized complexity, incomplete data, clustering}
\begin{document}

\maketitle

\ifshort
\vspace{1cm}
\fi
\begin{abstract}
We study two variants of the fundamental problem of finding a cluster in incomplete data. In the problems under consideration, we are given a multiset of incomplete $d$-dimensional
vectors over the binary domain and integers $k$ and~$r$, and the goal is to complete the missing
vector entries so that the multiset of complete vectors either contains (i) a cluster of $k$ vectors of radius at most $r$, or (ii) a cluster of $k$ vectors of diameter at most $r$.  We give tight characterizations of the parameterized
complexity of the problems under consideration with respect to the parameters $k$,
$r$, and a third parameter that captures the missing vector entries. 
\ifshort\blfootnote{\noindent
\emph{Statements whose proofs
            or details are provided in the appendix are marked with
            $\spadesuit$.}}
 	\fi
 \end{abstract}

\section{Introduction}\label{sec:intro}

 We consider two formulations of the fundamental problem of finding a sufficiently large cluster in incomplete data~\cite{boucherma,BulteauS20,GrammNiedermeierRossmanith03,Li}.
  In the setting under consideration, the input is a multiset $M$ of $d$-dimensional Boolean vectors---regarded as the rows of a matrix,
 some of whose entries might be missing---and two parameters $k, r \in \Nat$. In the first problem under consideration, referred to as \DIAMCq{}, the goal is to decide whether there is a completion of $M$ that admits a multiset of $k$ vectors ($k$-cluster) of diameter at most $r$; that is, a $k$-cluster such that the Hamming distance between any two cluster-vectors is at most $r$. In the second problem, referred to as \RADCq{}, the goal is to decide whether there is a completion of $M$ that admits a $k$-cluster of radius at most $r$; that is, a $k$-cluster such that there is a \emph{center} vector $\vec{s}\in \{0, 1\}^d$ with Hamming distance at most $r$ to each cluster-vector.

The cluster-diameter and cluster-radius are among the most widely used measures for intra-cluster similarity~\cite{charikar,frieze,normclustering,lingashammingcenter,lingasapxclustering,gonzalez,GrammNiedermeierRossmanith03}.
 Our interest in studying these problems stems from the recent relevant research within the theory community~\cite{EibenGKOS21,icml,hermelin,KoanaFN20,KoanaFN21}, as well as the ubiquitous presence of incomplete data in relevant areas, such as recommender systems, machine learning, computer vision, and data science~\cite{balzano,nips2016-incompleteclustering,ev13,icdm}.

We study the parameterized complexity of the above problems
with respect to the two parameters $k$, $r$, and a third
parameter that captures the occurrence of missing vector
entries. Naturally, parameterizing by the number of missing entries
alone is not desirable since one would expect that number
to be rather large. In their recent related works on
clustering problems, Koana, Froese and
Niedermeier~\cite{KoanaFN20,KoanaFN21} restricted the occurrence of
missing entries by using the maximum number of missing entries per row
as the parameter. Another parameter for restricting the occurrence of missing entries is the minimum number of vectors plus coordinates
needed to ``cover'' all missing entries, which was proposed and used by Eiben \etal~\cite{EibenGKOS21} and Ganian \etal~\cite{icml}, who
studied various data completion and clustering problems.  In
this paper, we propose and use a parameter that unifies and subsumes
both previous parameterizations:
the ``\emph{deletion distance
  to near-completion}'', denoted $\lambda(M)$, which
is the minimum integer $p$ such that at most $p$ vectors can be removed
from $M$ so that every remaining vector contains at most $p$ missing
entries. Clearly, the parameter $\lambda(M)$ is computable in polynomial
time and is not larger than any of (and hence subsumes) the two parameters
considered by Koana \etal~\cite{KoanaFN20,KoanaFN21}, Eiben
\etal~\cite{EibenGKOS21}, and Ganian \etal~\cite{icml}.  

\smallskip

\noindent \textbf{Results and Techniques.}\quad
We perform an in-depth analysis of the two considered data
completion problems w.r.t.~the aforementioned parameterizations. We
obtain results that provide a nearly complete complexity landscape of these problems. An overview of our results is provided in Table~\ref{tab:results-combined}. 
As a byproduct, our results establish that both problems under consideration are fixed-parameter tractable parameterized by $k+r$ when the data is complete, which answers an open question in the literature~\cite{boucherma,BulteauS20}.
 
\begin{table*}[htbp]
 
  \begin{center}
    \begin{tabular}{@{}l@{\quad}c@{\quad}c@{\quad}c@{\quad}c@{\quad}c@{\quad}c@{}}\toprule
      ~& $k$ & $r$ & $k+r$ & $k+\MEP$ & $r + \MEP$ & $k+r+\MEP$ \\ \midrule

    \textsc{Diam-Cluster-C.}
            & \W{1}-h/\XP{} & \paraNP{}\hy c &  \W{1}-h/\XP{} & \W{1}-h/\XP{} & \FPT{}& \FPT{}\\

      \textsc{Rad-Cluster-C.}
      & \W{1}-h/\XP{} & \paraNP{}\hy c &  \W{1}-h/\XP{} &
                                                                    ?/\XP{} & ?/\XP{} & \FPT{} \\  
                                               
     \bottomrule
    \end{tabular}
  \end{center}
  \vspace{-0.5cm}
  \caption{\small Overview of the results obtained in this paper.}

  \label{tab:results-combined}
\end{table*}

We summarize the new results obtained in this paper below.
\smallskip

\noindent \textbf{(1)}\quad
We show that \DIAMCq{} is fixed-parameter tractable (\FPT{}) parameterized by $r+\MEP(M)$ (Theorem~\ref{thm:diamfpt}). The significance of the above result is in removing the dependency on the cluster size $k$ in the running time of the algorithm, thus showing that finding a large cluster in incomplete data can be feasible when both the cluster diameter and the parameter $\MEP(M)$ are small.  This result is the pinnacle of our technical contributions and relies on two ingredients: a fixed-parameter algorithm for the same problem parameterized by $k+r+\MEP(M)$ (Theorem~\ref{thm:DIAM_k_r_comb}), which is then used as a subroutine in the main algorithm,
and a new technique that we dub \emph{iterative sunflower harvesting}.  Crucial to this new technique is a general structural lemma, allowing us to represent a family of sets in a succinct manner in terms of sunflower cores, which we believe to be interesting in its own right.
\iflong
We note that the use of sunflowers to obtain a succinct set representation (leading to a kernel) is not uncommon in such settings. For instance, a similar set representation was also used in a previous work by Marx~\cite{marx} (see also Kratsch, Marx and Wahlstr\"om~\cite{kmw}) to obtain \FPT{} and kernelization results, albeit in the context of weighted CSPs. Koana \etal~\cite{KoanaFN20, KoanaFN21} also use the Sunflower Lemma in their results. \fi
\ifshort
We note that the use of sunflowers to obtain a succinct set representation (leading to a kernel) is not uncommon in such settings~\cite{KoanaFN20,KoanaFN21,kmw,marx}.
\fi
What makes the sunflower harvesting technique novel is that it allows us to (1) show that each solution can be covered by a small number of sunflowers, and to (2) iteratively “harvest” these sunflowers to obtain a solution in boundedly-many (in the parameter) branching steps. 

\smallskip
\noindent \textbf{(2)}\quad We give an \XP-algorithm for \DIAMCq{} parameterized by $k$ alone (Theorem~\ref{the:diam-xp-k}). Together with Theorem~\ref{the:diam-w1-k} (showing the \W{1}\hy hardness of \DIAMC{} w.r.t.\ $k$)  and Theorem~\ref{thm:diamrad-Whard-k}~(showing the \W{1}\hy hardness for \DIAMCq{} w.r.t.\ $k$ even for $r=0$), this gives a complete
complexity landscape for \DIAMCq{} parameterized by any combination
of the parameters~$k$, $r$, and $\MEP(M)$.

\smallskip
\noindent \textbf{(3)}\quad We show that \RADCq{} is \FPT{} parameterized by $k+r+\MEP(M)$ (Theorem~\ref{thm:ANY_k_r_comb}); this result answers open questions in the literature~\cite{boucherma,BulteauS20}, which asked about the fixed-parameter tractability of the easier complete version of the problem (i.e., when $\MEP(M)=0$). 

\smallskip
\noindent \textbf{(4)}\quad  We provide an \XP-algorithm for \RADCq\ parameterized by $r+\MEP(M)$ in which the degree of the polynomial in the runtime has only a logarithmic dependence on~$r$ (Theorem~\ref{thm:MAX-ANY-r+comb}). We remark that the problem is in \XP\ parameterized by $k$ alone (Observation~\ref{obs:XPk}).

\smallskip
\noindent \textbf{(5)}\quad We provide an accompanying \W{1}-hardness result for \RADCq\ that rules out its fixed-parameter tractability when parameterized by $k+r$ (Theorem~\ref{thm:diamrad-Whard-k}). Since the problem is \NP-hard for fixed $r$ (as also follows from Theorem~\ref{thm:diamrad-Whard-k}), this leaves only two questions open for the considered parameterizations: whether \RADCq\ is \FPT\ when parameterized by either $k+\MEP(M)$ or by $r+\MEP(M)$. 

\smallskip
\noindent \textbf{(6)}\quad We give an \FPT{}-approximation scheme for the optimization version (w.r.t.~the cluster size) of \RADCq{}.

\smallskip
\noindent \textbf{Related Work.}\quad
The \RADC{} problem (i.e., \RADCq{}  for complete data) and variants of it were studied as early as the 1980's, albeit under different names. Dyer and Frieze presented a heuristic algorithm for approximating a variant of \RADC{}, referred to as the {\sc $p$-Center} problem, where the goal is to compute $p \in \Nat$ clusters, each of radius at most $r$, that contain all vectors of $M$; hence, \RADC{} corresponds to the case of {\sc $p$-Center} where $p=1$ and $k=|M|$ (i.e., when the cluster contains all vectors in $M$). Cabello \etal~\cite{cabello} studied the parameterized complexity of the geometric \textsc{$p$-Center Problem} in $\mathbb{R}^d$.

Frances and Litman~\cite{litman} studied the complexity of \RADC{} with $k=|M|$, in the context of computing the radius of a binary code; they referred to it as the {\sc Covering Radius} problem and showed it to be \NP-hard. G\c{a}sieniec \etal~\cite{lingashammingcenter,lingasapxclustering} studied (the
optimization versions of) \RADC{} and \DIAMC{} with $k=|M|$ and obtained polynomial-time algorithms
as well as lower bounds for a number of cases. They also obtained $2$-approximation algorithms for these problems by extending an earlier algorithm by Gonzalez~\cite{gonzalez}.

The \RADC{} problem restricted to the subcase of $k=|M|$ was also extensively studied under the nomenclature {\sc Closest String}. Li \etal~\cite{Li} showed that the problem admits a polynomial time approximation scheme if the goal is to minimize $r$.
Gramm \etal~\cite{GrammNiedermeierRossmanith03} studied
{\sc Closest String} from the parameterized complexity perspective and showed it to be fixed-parameter tractable parameterized by $r$. Following this naming convention, Boucher and Ma~\cite{boucherma}, and Bulteau and Schmid~\cite{BulteauS20} studied the parameterized complexity of \RADC{} under the nomenclature {\sc Closest String with Outliers}. They considered several parameters, including some of the parameters under consideration in this paper. Notably, the restriction of our fixed-parameter algorithm for \RADC{} parameterized by $k+r+\lambda$ to the subcase where $\lambda=0$ answers an open question in~\cite[see, \eg, Table 1]{boucherma}.
Moreover, our \XP\ algorithm for \RADCq\ provided in
Theorem~\ref{thm:MAX-ANY-r+comb} that has a run-time in which the degree
of the polynomial has only a logarithmic dependence on $r$,
immediately implies an algorithm of the same running time for
\textsc{Closest String with Outliers}, as a special case.
 
For incomplete data, Hermelin and Rozenberg~\cite{hermelin} studied the parameterized complexity of \RADCq{} for $k=|M|$ under the nomenclature {\sc Closest String with Wildcards} problem, with respect to several parameterizations. Very recently, Koana \etal~\cite{KoanaFN20} revisited the earlier work of Hermelin and Rozenberg~\cite{hermelin} and obtained, among other results, a fixed-parameter algorithm for the problem parameterized by $r$ plus the maximum number of missing entries per row.
Even more recently, the same group~\cite{KoanaFN21} also studied a problem related to \DIAMCq{} for $k=|M|$. They obtain a classical-complexity classification w.r.t.\ constant lower and upper bounds on the diameter and the maximum number of missing entries per row.

 \section{Preliminaries}\label{section:prelims}

 \ifshort
We assume basic familiarity with parameterized complexity, including the classes \W{1}, \FPT, \XP, as well as Turing kernelization and \FPT-approximation schemes~\cite{DowneyFellows13,CyganFKLMPPS15,Marx08FPTAS}. ($\spadesuit$)
\fi

\newcommand{\compGq}{\compG}

\newcommand{\HSETq}{\HSET}
\newcommand{\HDISTq}{\HDIST}
\newcommand{\DIAMq}{\DIAM}
\newcommand{\MDIAMq}{\DIAM_{\max}}
\newcommand{\HNq}{\HN}

\smallskip
\noindent \textbf{Vector Terminology.}\quad
Let $\vec{a}$ and $\vec{b}$ be two vectors in $\{0,1,\blank\}^d$, where $\blank$ is used to represent coordinates whose value is unknown (\ie, missing entries). We denote
by $\HSET(\vec{a},\vec{b})$ the set of coordinates in which
$\vec{a}$ and $\vec{b}$ are guaranteed to differ, \ie,
$\HSET(\vec{a},\vec{b})=\SB i
\SM (\vec{a}[i]=1 \wedge \vec{b}[i]=0)\vee (\vec{a}[i]=0 \wedge \vec{b}[i]=1) \SE$, and we denote by
$\HDIST(\vec{a},\vec{b})$ the \emph{Hamming distance} between
$\vec{a}$ and~$\vec{b}$ measured only between known entries, \ie,
$|\HSET(\vec{a},\vec{b})|$. Moreover, for a subset $D' \subseteq [d]$ of coordinates, where $[d]=\{1,\dots,d\}$, we denote by $\vec{a}[D']$ the vector $\vec{a}$
restricted to the coordinates in $D'$.

There is a one-to-one correspondence between vectors in
$\{0,1\}^d$ and
subsets of coordinates, \ie, for every vector, we can associate the unique
subset of coordinates containing all its one-coordinates and
vice-versa.
It will be useful to represent a vector by the set of coordinates where the vector has the value $1$.
We introduce the following notation for vectors to switch between their set-representation
and vector-representation. We denote by $\HSET(\vec{a})$ the set
$\HSET(\vec{0},\vec{a})$.
We extend this notation to sets of
vectors as follows: for a set $N$ of
vectors in $\{0,1,\blank\}^d$, we denote by $\HSET(N)$ the set $\SB\HSET(\vec{v}) \SM \vec{v} \in N\SE$.
We say that a vector $\vec{a}\in \{0,1,\blank\}^d$
is a \emph{$t$-vector} if $|\HSET(\vec{a})|=t$ and
we say that $\vec{a}$ \emph{contains} a subset $S$ of coordinates if $S \subseteq \HSET(\vec{a})$.

\iflong
\enlargethispage{4mm}
\fi
We say that a multiset\footnote{We remark that, in the interest of brevity and when clear from context, we will sometimes use standard set notation such as $A\subseteq B$ in conjunction with multisets.}
$M^*\subseteq \{0,1\}^d$ is a \emph{completion} of a multiset $M\subseteq \{0,1,\blank\}^d$ if there is a bijection $\alpha: M \rightarrow M^*$ such that for all $\vec{a}\in M$ and all $i\in [d]$ it holds that either $\vec{a}[i]=\blank$ or $\alpha(\vec{a})[i]=\vec{a}[i]$.
For a multiset $M$ of vectors over $\{0,1,\blank\}^d$, we let \emph{the deletion distance to near-completion}, $\lambda(M)$, denote the minimum integer such that there exists a subset $D_M\subseteq M$ with the following properties: (a) $|D_M|\leq \lambda(M)$, and (b) every vector in $M\setminus D_M$ contains at most $\lambda(M)$ missing entries. We call $D_M$ the \emph{deletion (multi-)set}, and observe that $\lambda(M)$ along with a corresponding deletion set can be trivially computed from $M$ in linear time.

A \emph{sunflower} in a set family $\FFF$ is a subset $\FFF' \subseteq \FFF$ such that all pairs of elements in $\FFF'$ have the same intersection.
\iflong The following lemma captures a useful observation that is used in several of our proofs.

\begin{LEM}[\cite{EibenGKOS21}]\label{lem:CLUS-SF-BASIC-A}
    Let $t, r \in \Nat$. Let $N \subseteq \{0,1\}^d$ be a set
    of $t$-vectors such that $\FFF:=\HSET(N)$ is a sunflower with core
    $C$. If $|N|>r$, then
    for every vector $\vec{a} \in \{0,1\}^d$ with $|\HSET(\vec{a})|\leq
    r$, $N$ contains a vector that has maximum distance to $\vec{a}$ among all
    $t$-vectors that contain $C$.
  \end{LEM}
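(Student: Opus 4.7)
The plan is to unpack the sunflower structure and then combine a distance-formula calculation with a pigeonhole argument on the petals. Write $C_{\vec{v}}:=\HSET(\vec{v})$ for each $\vec{v}\in N$, so $C_{\vec{v}}=C\cup P_{\vec{v}}$ where the petals $P_{\vec{v}}$ are pairwise disjoint and, since every $\vec{v}\in N$ is a $t$-vector, $|P_{\vec{v}}|=t-|C|$ for every $\vec{v}$. Set $A:=\HSET(\vec{a})$, so $|A|\le r$.

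The key identity is that for any $t$-vector $\vec{w}\in\{0,1\}^d$, viewed through its support $W:=\HSET(\vec{w})$,
\[
  \HDIST(\vec{w},\vec{a})=|W\triangle A|=|W|+|A|-2|W\cap A|=t+|A|-2|W\cap A|.
\]
Hence maximising the distance to $\vec{a}$ over all $t$-vectors $\vec{w}$ whose support contains $C$ is the same as minimising $|W\cap A|$ subject to $C\subseteq W$ and $|W|=t$. Since $C\subseteq W$, the trivial lower bound is $|W\cap A|\ge |C\cap A|$, so it suffices to exhibit a vector in $N$ that attains this lower bound, i.e.\ whose petal is disjoint from $A$.

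The pigeonhole step delivers exactly this. Because the petals $\{P_{\vec{v}}\}_{\vec{v}\in N}$ are pairwise disjoint, any element of $A$ lies in at most one petal; hence at most $|A|\le r$ petals meet $A$. From $|N|>r$ we conclude that some $\vec{v}^{*}\in N$ has $P_{\vec{v}^{*}}\cap A=\emptyset$, and therefore $C_{\vec{v}^{*}}\cap A=C\cap A$. Plugging this into the distance formula gives
\[
  \HDIST(\vec{v}^{*},\vec{a})=t+|A|-2|C\cap A|\ge t+|A|-2|W\cap A|=\HDIST(\vec{w},\vec{a})
\]
for every $t$-vector $\vec{w}$ with $C\subseteq \HSET(\vec{w})$, which is the desired conclusion.

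There is not really a hard step here: the only thing to be careful about is bookkeeping between the set- and vector-representations of $\vec{v},\vec{w},\vec{a}$ (ensuring that because all of them are in $\{0,1\}^d$ with no blanks, Hamming distance really equals symmetric difference of supports) and noting that the common petal size $t-|C|$ is automatic from the $t$-vector hypothesis, so no case distinction on whether $t-|C|=0$ is needed.
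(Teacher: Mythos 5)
The paper does not prove this lemma itself: it is quoted verbatim from Eiben et al.\ \cite{EibenGKOS21}, so there is no in-paper proof to compare against. Your argument is correct and is the standard one: expressing distance as $t+|A|-2|W\cap A|$ reduces maximisation to minimising $|W\cap A|$ over supports $W\supseteq C$ with $|W|=t$, the trivial lower bound is $|C\cap A|$, and the disjointness of the $|N|>r\ge|A|$ petals guarantees by pigeonhole that some petal avoids $A$, so the corresponding vector $\vec{v}^*\in N$ attains the bound. One sanity check worth making explicit (and which you have in effect covered) is that $N$ is nonempty, so $t\ge|C|$ and the class of $t$-vectors containing $C$ is nonempty; beyond that, no case analysis is needed, and your remark that the $|P_{\vec v}|=t-|C|$ bookkeeping is automatic is accurate.
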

\fi
  We will say that a multiset $P$ is a
 \emph{\DCLUS{}} (or $|P|$-\emph{\DCLUS{}}) if $\HDIST(\vec{p},\vec{q})\leq r$ for every pair $\vec{p},\vec{q}\in P$. Similarly, $P$ is a \emph{\RCLUS{}} (or $|P|$-\emph{\RCLUS{}}) if there exists a vector $\vec{c}\in \{0,1\}^d$ such that $\HDIST(\vec{c},\vec{p})\leq r$ for every $\vec{p}\in P$.

\iflong
\smallskip
\noindent \textbf{Parameterized Complexity.}\quad
In parameterized complexity~\cite{FlumGrohe06,DowneyFellows13,CyganFKLMPPS15},
the complexity of a problem is studied not only with respect to the
input size, but also with respect to some problem parameter(s). The
core idea behind parameterized complexity is that the combinatorial
explosion resulting from the \NP-hardness of a problem can sometimes
be confined to certain structural parameters that are small in
practical settings. We now proceed to the formal definitions.

A {\it parameterized problem} $Q$ is a subset of $\Omega^* \times
\mathbb{N}$, where $\Omega$ is a fixed alphabet. Each instance of $Q$ is a pair $(I, \kappa)$, where $\kappa \in \Nat$ is called the {\it
parameter}. A parameterized problem $Q$ is
{\it fixed-parameter tractable} (\FPT)~\cite{FlumGrohe06,DowneyFellows13,CyganFKLMPPS15}, if there is an
algorithm, called an {\em \FPT-algorithm},  that decides whether an input $(I, \kappa)$
is a member of $Q$ in time $f(\kappa) \cdot |I|^{\bigoh(1)}$, where $f$ is a computable function and $|I|$ is the input instance size.  The class \FPT{} denotes the class of all fixed-parameter
tractable parameterized problems.

A parameterized problem $Q$
is {\it \FPT-reducible} to a parameterized problem $Q'$ if there is
an algorithm, called an \emph{\FPT-reduction}, that transforms each instance $(I, \kappa)$ of $Q$
into an instance $(I', \kappa')$ of
$Q'$ in time $f(\kappa)\cdot |I|^{\bigoh(1)}$, such that $\kappa' \leq g(\kappa)$ and $(I, \kappa) \in Q$ if and
only if $(I', \kappa') \in Q'$, where $f$ and $g$ are computable
functions. By \emph{\FPT-time}, we denote time of the form $f(\kappa)\cdot |I|^{\bigoh(1)}$, where $f$ is a computable function.
Based on the notion of \FPT-reducibility, a hierarchy of
parameterized complexity, {\it the \cc{W}-hierarchy} $=\bigcup_{t
\geq 0} \W{t}$, where $\W{t} \subseteq \W{t+1}$ for all $t \geq 0$, has
been introduced, in which the $0$-th level \W{0} is the class {\it
\FPT}. The notions of hardness and completeness have been defined for each level
\W{$i$} of the \cc{W}-hierarchy for $i \geq 1$ \cite{DowneyFellows13,CyganFKLMPPS15}. It is commonly believed that $\W{1} \neq \FPT$ (see \cite{DowneyFellows13,CyganFKLMPPS15}). The
\W{1}-hardness has served as the main working hypothesis of fixed-parameter
intractability. The class \XP{} contains parameterized problems that can be solved in time  $\bigoh(|I|^{f(\kappa)})$, where $f$ is a computable function; it
contains the class \W{t}, for $t \geq 0$, and every problem in \XP{} is polynomial-time solvable when the parameters are bounded by a constant.
The class \paraNP{} is the class of parameterized problems that can be solved by non-deterministic algorithms in time $f(\kappa)\cdot |I|^{\bigoh(1)}$, where $f$ is a computable function.
A problem is \emph{\paraNP{}-hard} if it is \NP-hard for a constant value of the parameter~\cite{FlumGrohe06}.

A parameterized problem is {\em kernelizable}
if there exists a polynomial-time reduction that maps an instance $(I, \kappa)$ of
the problem to another instance $(I', \kappa')$ such that (1) $|I'| \leq f(\kappa)$ and $\kappa' \leq f(\kappa)$, where $f$ is a computable function, and (2) $(I,\kappa)$ is a \yes-instance
of the problem if and only if $(I',\kappa')$ is. The instance
$(I',\kappa')$ is called the {\em kernel} of~$I$. It is well known that a
decidable problem is \FPT{} if and only if it is
kernelizable~\cite{DowneyFellows13}.
A \emph{polynomial kernel} is a kernel whose size can be bounded by a
polynomial in the parameter.

A \emph{Turing kernelization} for a parameterized problem $Q$ is an algorithm that, provided
with access to an oracle for $Q$, decides in polynomial time whether or not an input $(I, \kappa)$
is a YES-instance of $Q$. During its
computation, the algorithm can produce polynomially-many oracle queries on input of size $f(\kappa)$, for some computable function $f$. The function $f$ is referred to as the \emph{kernel size}. 
An \emph{\FPT{} approximation scheme} for a parameterized
maximization problem is an algorithm that takes as input an instance
$(I, \kappa)$ of $Q$ and $0<\epsilon \leq
1$, and produces in \FPT{}-time in $\kappa+1/\epsilon$ a solution to $I$ that is an $(1-\epsilon)$-approximation for the instance.  

\fi
\section{Finding a DIAM-Cluster in Incomplete Data}
In this section, we present our results for \DIAMCq{}. Our main algorithmic results
are that \DIAMCq{} is \FPT{} parameterized by
$r+\MEP(M)$ and is in \XP{} parameterized by $k$ alone. Together with
Theorem~\ref{the:diam-w1-k} (showing the \W{1}\hy hardness of \DIAMC{} parameterized by $k$)
and Theorem~\ref{thm:diamrad-Whard-k} (showing the \W{1}\hy hardness of
\DIAMCq{} parameterized by $k$ even for $r=0$), this gives a complete
complexity landscape for \DIAMCq{} parameterized by any combination
of the parameters $k$, $r$, and $\MEP(M)$.

\subsection{\DIAMCq{} Parameterized by $k+r+\MEP(M)$}
\label{sssec:diam-k-r-comb}
We start by showing that \DIAMCq{} parameterized by $k+r+\MEP(M)$
is \FPT{}. We will later show a stronger
result, namely that the same result already holds if we only
parameterize by $r+\MEP(M)$. Showing the weaker result here is
important for the following reasons: (1) we use the algorithm presented here as a subroutine in our result
for the parameterization $r+\MEP(M)$, (2) the techniques developed here can also be employed for \RADCq, and
(3) we obtain a Turing kernel of size polynomial in~$k$.

\ifshort
The main approach behind the Turing kernel is to guess two vectors of maximum distance in
the desired cluster. This will allow us to pre-process the instance such that if the
resulting instance contains too many vectors, then it has a
solution. Note that this approach only works for the case that a
solution contains at least two vectors from $M\setminus D_M$ (recall that $D_M$ denotes the deletion set);
otherwise, we can guess the at most one vector from $M\setminus D_M$ that
is in the solution and remove all the other vectors from $M\setminus
D_M$. Therefore, in all cases, we end up with a reduced instance with
boundedly many vectors and we will then show that we can remove all
but boundedly many coordinates while preserving solutions.
\fi
\iflong
The following lemma shows how this
can be used to pre-process the instance (by identifying vectors that
cannot be in a cluster with the two vectors of maximum distance). Note
that this idea and the following lemma only deal with the case that the
solution cluster uses at least two vectors from $M\setminus D_M$,
where $D_M$ is a deletion set such that every vector in $M\setminus
D_M$ has at most $\MEP(M)$ missing entries. For a vector $\vec{v}$, we
denote by $\MEv(\vec{v})$ the set of all missing coordinates of $\vec{v}$.

\iflong \begin{LEM} \fi \ifshort \begin{LEM}[$\spadesuit$] \fi\label{lem:krcom-prune}
  Let $\vec{v}$ and $\vec{u}$ be two vectors in $M\setminus D_M$ and
  let $t$ be an integer with $\HDIST(\vec{v},\vec{u}) \leq t \leq
  \HDIST(\vec{v},\vec{u})+2|D_M|$. Then any vector $\vec{m}$ satisfying
  $|\HSET(\vec{v},\vec{m})|>t$, $|\HSET(\vec{u},\vec{m})|>t$, or
  $|\HSET(\vec{v},\vec{m})\cap\HSET(\vec{u},\vec{m})|>t/2$ cannot be
  contained in a \DCLUS{}/\RCLUS{} containing $\vec{v}$ and
  $\vec{u}$, and in which $\vec{v}$ and
  $\vec{u}$ are two vectors of maximum distance $t$.
\end{LEM}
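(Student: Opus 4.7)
The plan is to argue from the completion semantics: once we fix any completion $M^*$, every blank can only \emph{add} to the pairwise Hamming distance, so $|\HSET(\cdot,\cdot)|$ is always a lower bound on the post-completion distance. The three inequalities in the lemma split into two kinds of arguments --- the first two will follow from this monotonicity alone, while the third exploits a counting identity between $\HDIST(\vec v^*,\vec u^*)$ and $\HDIST(\vec v^*,\vec m^*)+\HDIST(\vec u^*,\vec m^*)$ that is specific to the Boolean domain.

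Assume for contradiction that $\vec m$ lies in a \DCLUS{} or \RCLUS{} $P$ together with $\vec v,\vec u$ in which $\vec v,\vec u$ realize the maximum pairwise distance $t$; fix any completion witnessing this and let $\vec v^*,\vec u^*,\vec m^*$ be the completions of the corresponding vectors. The maximum-distance property gives $\HDIST(\vec v^*,\vec u^*)=t$, $\HDIST(\vec v^*,\vec m^*)\le t$, and $\HDIST(\vec u^*,\vec m^*)\le t$. For the first condition, every $i\in\HSET(\vec v,\vec m)$ is a position where $\vec v,\vec m$ are both non-blank and already disagree, so $\vec v^*[i]\ne\vec m^*[i]$; hence $\HDIST(\vec v^*,\vec m^*)\ge|\HSET(\vec v,\vec m)|$, and $|\HSET(\vec v,\vec m)|>t$ contradicts $\HDIST(\vec v^*,\vec m^*)\le t$. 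The second condition is handled symmetrically.

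For the third condition, set $C:=\HSET(\vec v,\vec m)\cap\HSET(\vec u,\vec m)$. For each $i\in C$, all of $\vec v[i],\vec u[i],\vec m[i]$ are non-blank and both $\vec v[i]$ and $\vec u[i]$ differ from $\vec m[i]$; since the domain is $\{0,1\}$, this forces $\vec v[i]=\vec u[i]\ne\vec m[i]$, so coordinate $i$ contributes $0$ to $\HDIST(\vec v^*,\vec u^*)$ and $2$ to $\HDIST(\vec v^*,\vec m^*)+\HDIST(\vec u^*,\vec m^*)$. Partitioning $[d]$ according to whether $\vec v^*[i]=\vec u^*[i]$, each of the $t$ coordinates on which $\vec v^*\ne\vec u^*$ contributes exactly $1$ to the sum $\HDIST(\vec v^*,\vec m^*)+\HDIST(\vec u^*,\vec m^*)$, while every coordinate in $C$ sits in the ``$\vec v^*=\vec u^*$'' part and contributes $2$. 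This yields
\[\HDIST(\vec v^*,\vec m^*)+\HDIST(\vec u^*,\vec m^*)\;\ge\;t+2|C|,\]
and combining with $\HDIST(\vec v^*,\vec m^*)+\HDIST(\vec u^*,\vec m^*)\le 2t$ gives $|C|\le t/2$, contradicting $|C|>t/2$. Since the only property of $P$ we used is that $\vec v,\vec u$ are within post-completion distance $t$ of every other cluster-vector, the argument applies uniformly to \DCLUS{} and \RCLUS{}.

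I expect no serious obstacle here --- the whole argument is a short bookkeeping exercise. The only subtle points are remembering that $\HDIST$ on incomplete vectors ignores blanks (so it only \emph{lower-bounds} the post-completion distance), and invoking the Boolean-domain assumption to force $\vec v^*=\vec u^*$ at every coordinate of $C$. The stated range $t\in[\HDIST(\vec v,\vec u),\HDIST(\vec v,\vec u)+2|D_M|]$ is not itself invoked in the proof; it merely delimits the values of $t$ for which some completion can realize $\HDIST(\vec v^*,\vec u^*)=t$, given that $\vec v,\vec u$ together contain at most $2|D_M|$ blanks.
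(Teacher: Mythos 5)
Your proof is correct and follows essentially the same route as the paper's: the first two conditions are immediate from the monotonicity of disagreements under completion, and the third hinges on the same Boolean-domain observation that coordinates in $\HSET(\vec v,\vec m)\cap\HSET(\vec u,\vec m)$ force $\vec v^*[i]=\vec u^*[i]\neq\vec m^*[i]$ and are disjoint from the $t$ coordinates where $\vec v^*$ and $\vec u^*$ differ. The only cosmetic difference is that you bound the sum $\HDIST(\vec v^*,\vec m^*)+\HDIST(\vec u^*,\vec m^*)\ge t+2|C|$ and compare it to $2t$, whereas the paper pigeonholes to conclude $\vec m^*$ disagrees with one of $\vec v^*$ or $\vec u^*$ on more than $t$ coordinates; both reach the same contradiction from the same counting.
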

\iflong \begin{proof}
  Clearly, the lemma holds for any vector $\vec{m}$ satisfying either
  $|\HSET(\vec{v},\vec{m})|>t$ or $|\HSET(\vec{v},\vec{m})|>t$,
  since we are only looking for clusters where $\vec{v}$ and $\vec{u}$
  are of distance $t$
    to one another and have maximum distance within
  the cluster. Now let $\vec{m}$ be a vector satisfying
  $|\HSET(\vec{v},\vec{m})\cap\HSET(\vec{u},\vec{m})|>t/2$.
  Since the completions of $\vec{v}$ and $\vec{u}$ in a solution differ in exactly $t$ coordinates in
  $\HSET(\vec{v},\vec{u})\cup \MEv(\vec{v})\cup\MEv(\vec{u})$, it follows that $\vec{m}$ differs
  in at least $t/2$ of those coordinates from either $\vec{v}$ or
  $\vec{u}$. It follows that $\vec{m}$ differs in more than $t$ coordinates
  from either $\vec{u}$ or $\vec{v}$, and hence cannot be part of a
  cluster with $\vec{v}$ and $\vec{u}$ of cluster diameter at most $t$.
\end{proof} \fi

The following lemma shows that the instance obtained after
identifying two vectors of maximum distance and removing the vectors
identified by the previous lemma, cannot contain too many vectors since
otherwise, it would be a \YES{}\hy instance.
\iflong \begin{LEM} \fi \ifshort \begin{LEM}[$\spadesuit$] \fi\label{lem:krcom-kernel}
  Let $\vec{v}$ and $\vec{u}$ be two vectors in $M\setminus D_M$ and
  let $t$ be an integer with $\HDIST(\vec{v},\vec{u}) \leq t \leq
  \HDIST(\vec{v},\vec{u})+2|D_M|$. Let $M'$ be the set of all vectors
  in $M\setminus (D_M\cup \{\vec{v},\vec{u}\})$ obtained after removing all vectors $\vec{m}$
  for which either
  $|\HSET(\vec{v},\vec{m})|)>t$, $|\HSET(\vec{u},\vec{m})|)>t$, or
  $|\HSET(\vec{v},\vec{m})\cap\HSET(\vec{u},\vec{m})|>t/2$. If
  $|M'|>k3^{2|D_M|+t}$, then $M'$ contains a
  \DCLUS{} of size at least $k$ and diameter $t$, which is
  also a \RCLUS{} of radius $t/2$.
\end{LEM}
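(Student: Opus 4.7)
The plan is a pigeonhole argument on a small ``critical'' coordinate set, followed by a direct construction of the completion. First I would define $D = \HSET(\vec{v},\vec{u}) \cup \MEv(\vec{v}) \cup \MEv(\vec{u})$. Since $\vec{v},\vec{u}\in M\setminus D_M$ each have at most $|D_M|$ missing entries and $|\HSET(\vec{v},\vec{u})|=\HDIST(\vec{v},\vec{u})\le t$, we get $|D|\le t+2|D_M|$. Moreover, for every $c \notin D$ the values $\vec{v}[c]$ and $\vec{u}[c]$ are both known and equal; call this common value $z_c$.

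Next, for each $\vec{m}\in M'$, its restriction $\vec{m}[D]\in\{0,1,\blank\}^{|D|}$ lies in a universe of size at most $3^{t+2|D_M|}$, so the assumption $|M'|>k\cdot 3^{t+2|D_M|}$ and pigeonhole yield a subset $P\subseteq M'$ with $|P|\ge k$ whose vectors all share the same restriction to $D$. I would then complete each $\vec{m}\in P$ to a vector $\vec{m}^*\in\{0,1\}^d$ by filling every missing coordinate in $D$ with a fixed value (say $0$) and every missing coordinate outside $D$ with $z_c$. By construction, the completions agree on $D$, so the pairwise distance within $P$ is contributed entirely by coordinates outside $D$.

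The key observation bounding that remaining distance is that for every $\vec{m}\in P$, the set $A(\vec{m})=\{c\notin D : \vec{m}^*[c]\ne z_c\}$ is contained in $\HSET(\vec{v},\vec{m})\cap\HSET(\vec{u},\vec{m})$: indeed, $\vec{m}^*[c]\ne z_c$ for $c\notin D$ forces $\vec{m}[c]$ to be known and distinct from $z_c=\vec{v}[c]=\vec{u}[c]$. The hypothesis defining $M'$ therefore gives $|A(\vec{m})|\le t/2$. Pairwise, $\HDIST(\vec{m}_i^*,\vec{m}_j^*)=|A(\vec{m}_i)\triangle A(\vec{m}_j)|\le t$, which yields the diameter bound, while taking a center $\vec{c}$ that agrees with the common completion on $D$ and with $z_c$ outside $D$ gives $\HDIST(\vec{c},\vec{m}_i^*)\le|A(\vec{m}_i)|\le t/2$. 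The main obstacle — and precisely the reason the preprocessing in Lemma~\ref{lem:krcom-prune} requires the third, seemingly ad-hoc condition $|\HSET(\vec{v},\vec{m})\cap\HSET(\vec{u},\vec{m})|>t/2$ — is this uniform $t/2$-bound on $A(\vec{m})$; without it, the symmetric-difference step would only yield diameter $2t$ and radius $t$, which is insufficient for the claimed bounds.
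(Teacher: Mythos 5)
Your proof is correct and follows essentially the same route as the paper's: pigeonhole on the restriction to $D=\HSET(\vec{v},\vec{u})\cup\MEv(\vec{v})\cup\MEv(\vec{u})$, complete uniformly on $D$ and to the common value $z_c$ of $\vec{v},\vec{u}$ outside $D$, and use the third filter condition to bound each $A(\vec{m})$ by $t/2$. Your write-up is in fact a bit more careful than the paper's in two small places: you explicitly fill only \emph{missing} entries outside $D$ rather than ``setting them equal to $\vec{v}$'', and you express the pairwise distance as $|A(\vec{m}_i)\triangle A(\vec{m}_j)|$ instead of bounding it by the union, though both give the same $t$ and $t/2$ bounds.
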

\iflong \begin{proof}
  Assume that there are at least $k3^{2|D_M|+t}$
  vectors $\vec{m}$ in $M'$. Then there is a set $S\subseteq M'$ of
  size at least $k$ that agrees on
  all the coordinates in $\HSET(\vec{v},\vec{u})\cup \MEv(\vec{v})\cup
  \MEv(\vec{u})$; this is true since
  $|\HSET(\vec{v},\vec{u})\cup \MEv(\vec{v})\cup
  \MEv(\vec{u})|\leq t+2|D_M|$ and there are at most
  $3^{2|D_M|+t}$ possible assignments to these coordinates. Since
  $\vec{v}$ and $\vec{u}$ are equal on all other coordinates, \ie,
  the coordinates in $[d]\setminus (\HSET(\vec{v},\vec{u})\cup
  \MEv(\vec{v})\cup \MEv(\vec{u}))$,
  we obtain that any such vector $\vec{m}$ disagrees with $\vec{v}$ (or
  $\vec{u}$)
  on at most $t/2$ of these coordinates; since otherwise
  $|\HSET(\vec{v},\vec{m})\cap\HSET(\vec{u},\vec{m})|>t/2$ and we
  would have removed $\vec{m}$ from
  $M\setminus(D_M\cup\{\vec{v},\vec{u}\})$. Since the vectors in $S$
  agree on all coordinates in $\HSET(\vec{v},\vec{u})\cup \MEv(\vec{v})\cup
  \MEv(\vec{u})$, we can complete all of them in the same manner on
  these coordinates. Moreover, on all other coordinates, $\vec{v}$ is
  equal to $\vec{u}$ and we can complete all vectors in $S$ by setting
  them equal to $\vec{v}$ (or equivalently $\vec{u}$) on all other coordinates.
  Let $S'$ be the set of vectors obtained from $S$ after completing
  the vectors in this manner. Then $\HSET(\vec{m},\vec{m}') \subseteq [d]\setminus (\HSET(\vec{v},\vec{u})\cup
  \MEv(\vec{v})\cup \MEv(\vec{u}))$; moreover,
  $|\HSET(\vec{m},\vec{m}')|\leq|(\HSET(\vec{v},\vec{m})\cap\HSET(\vec{u},\vec{m}))\cup(\HSET(\vec{v},\vec{m}')\cap\HSET(\vec{u},\vec{m}'))|\leq
  2t/2=t$ for every two vectors $\vec{m}$ and $\vec{m}'$ in
  $S'$. Therefore, the vectors in $S'$ form a \DCLUS{} with diameter
  at most $t$. Moreover, the vectors in $S'$ also form a \RCLUS{} with
  radius at most $t/2$, as witnessed by a center equal to
  $\vec{v}$ (or equivalently $\vec{u}$) on all coordinates in
  $[d]\setminus
  (\HSET(\vec{v},\vec{u})\cup\MEv(\vec{v})\cup\MEv(\vec{u}))$, and
  equal to any vector in $S'$ on all other coordinates.
\end{proof} \fi
We will now introduce notions required to reduce the number of coordinates.
Let $|\DCOOR(M)|$, for $M \subseteq \{0,1,\blank\}^d$, be the set of
all coordinates $i$ such that at least two vectors in $M$ disagree on
their $i$-th coordinate, \ie, there are two vectors $\vec{y},\vec{y}'\in M$
such that $\{\vec{y}[i],\vec{y}'[i]\}=\{0,1\}$.
Intuitively,
$\DCOOR(M)$ is the set of \emph{important coordinates}, since all other
coordinates can be safely removed from the instance; this is because
they can always be completed to the same value.
We are now ready to show that \DIAMCq{} has a Turing kernel
parameterized by $k+r+\MEP(M)$.
\fi
\iflong \begin{THE} \fi \ifshort \begin{THE}[$\spadesuit$] \fi\label{thm:DIAM_k_r_comb}
  \DIAMCq{} parameterized by $k+r+\MEP(M)$ has a Turing-kernel
  containing at most $n=k3^{2\MEP(M)+r}+\MEP(M)+2$ vectors, each
  having at most $\max\{r(n-1)+\MEP(M),\binom{\MEP(M)}{2}(r+1)\}$ coordinates.
\end{THE}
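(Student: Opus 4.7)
The plan is to use a branching Turing reduction that exhaustively guesses a small amount of structural information about the target cluster and then invokes Lemma~\ref{lem:krcom-prune} and Lemma~\ref{lem:krcom-kernel} to bound both the number of surviving vectors and, separately, a coordinate-reduction step to bound the number of surviving coordinates. To begin, I would compute the deletion set $D_M$ of size at most $\MEP(M)$ in linear time. The main case split is based on how many vectors from $M \setminus D_M$ the hypothetical solution cluster uses: either at most one, or at least two. This split is motivated by the fact that Lemma~\ref{lem:krcom-prune}/\ref{lem:krcom-kernel} require two ``anchor'' vectors from $M\setminus D_M$ of maximum distance.

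In the first case, the entire cluster lies inside $D_M \cup \{\vec{w}\}$ for some optional $\vec{w}\in M\setminus D_M$. I would branch over all $|M\setminus D_M|+1$ choices (including the ``no such $\vec{w}$'' branch), each yielding one oracle call on an instance containing at most $\MEP(M)+1$ vectors. On this instance, I would then remove every coordinate on which all known entries agree, since any such coordinate can be completed to the common value and contributes $0$ to every pairwise distance; what remains lies in $\DCOOR$ of the reduced instance, and a simple counting argument over the $\binom{\MEP(M)}{2}$ pairs and the diameter bound $r$ per pair gives the coordinate bound $\binom{\MEP(M)}{2}(r+1)$.

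In the second case, I would additionally branch over an ordered pair $(\vec{v},\vec{u})\in(M\setminus D_M)^2$ of maximum-distance vectors in the completed cluster and over the integer $t$ with $\HDIST(\vec{v},\vec{u})\leq t\leq\HDIST(\vec{v},\vec{u})+2\MEP(M)$, giving polynomially many oracle calls in total. For each guess, I apply Lemma~\ref{lem:krcom-prune} to discard every vector violating the three distance conditions with respect to $(\vec{v},\vec{u},t)$; by Lemma~\ref{lem:krcom-kernel}, if more than $k\cdot 3^{2\MEP(M)+t}$ vectors survive, I can immediately return \YES. Otherwise the remaining instance, taken together with $D_M\cup\{\vec{v},\vec{u}\}$, contains at most $n=k\cdot 3^{2\MEP(M)+r}+\MEP(M)+2$ vectors. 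I again remove all coordinates on which the known entries uniformly agree; because every other vector can disagree with the (near-complete) $\vec{v}$ on at most $r$ known coordinates in the cluster, and $\vec{v}$ has at most $\MEP(M)$ missing entries, the surviving coordinate set has size at most $r(n-1)+\MEP(M)$.

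The main obstacle in executing this plan is the coordinate-reduction step: we must argue that dropping coordinates outside $\DCOOR$ of the reduced instance preserves both \YES-instances and \NO-instances. The forward direction is immediate (any completion of the reduced instance extends to a completion of the original by setting the removed coordinates to their forced common value), while the converse requires verifying that a valid completion of the original instance induces a valid completion of the reduced one on the retained coordinates; this is straightforward since the retained coordinates are exactly where meaningful disagreements can occur. Combining the two cases and taking the maximum of the two coordinate bounds yields exactly the stated bound $\max\{r(n-1)+\MEP(M),\binom{\MEP(M)}{2}(r+1)\}$, and the number of oracle calls is bounded polynomially by $|M|^2\cdot(2\MEP(M)+1)$, which completes the Turing kernelization.
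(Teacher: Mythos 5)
Your case with two anchors $\vec{v},\vec{u}\in M\setminus D_M$ matches the paper's argument: guess the pair and the realized distance $t$, prune via Lemma~\ref{lem:krcom-prune}, detect large instances via Lemma~\ref{lem:krcom-kernel}, then discard any vector of $D_M$ that is more than $t$ away from $\vec{v}$, after which every vector is close to the nearly-complete $\vec{v}$ and $\DCOOR$ is bounded by $t(n-1)+\MEP(M)$. That part is fine (though you should state the extra pruning of $D_M$ explicitly, since without it $\DCOOR$ is unbounded even in this case).

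The gap is in the branch where the whole cluster lies in $D_M\cup\{\vec{w}\}$, and especially in the ``no such $\vec{w}$'' sub-branch. There you claim that after deleting coordinates on which all known entries agree, what remains is $\DCOOR$, and that ``a simple counting argument over the $\binom{\MEP(M)}{2}$ pairs and the diameter bound $r$ per pair'' yields $\binom{\MEP(M)}{2}(r+1)$ coordinates. This is not true: vectors in $D_M$ may have arbitrarily many missing entries, so a pair $\vec{m},\vec{m}'\in D_M$ can disagree on $\Theta(d)$ known coordinates, making $\DCOOR$ as large as $d$. You cannot bound each pair's contribution by $r$, because you do not know in advance which pairs land in the cluster, and you cannot simply drop a vector participating in a ``bad'' pair, since each member of such a pair may still appear in some valid cluster. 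The paper resolves this with a truncation step you do not have: for each pair $(\vec{m},\vec{m}')$, retain all of $\HSET(\vec{m},\vec{m}')$ if it has at most $t$ elements, and otherwise retain only an arbitrary $t+1$ of its elements. This keeps distances $\leq t$ exact and keeps distances $>t$ still $>t$, so the reduced instance is equivalent, and the set of retained coordinates has size at most $\binom{\MEP(M)}{2}(t+1)$. Without this idea your coordinate bound in the anchorless case does not hold, and the Turing kernel is not obtained.
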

\iflong \begin{proof}
    We distinguish three cases: (1) the solution cluster contains at
  least two vectors in $M\setminus D_M$, (2) the solution cluster
  contains exactly one vector in $M\setminus D_M$, and (3) the
  solution cluster does not contain any vector in $M\setminus D_M$.

  We start by showing the result for case (1).
  Our first goal is to reduce the number of vectors in $M\setminus
  D_M$. As a first step, we
  guess two vectors $\vec{v}$ and $\vec{u}$ in $M\setminus D_M$ that
  are farthest apart in the solution cluster w.r.t.\ to all vectors in
  $M\setminus D_M$; this is possible since we assume that the cluster
  contains at least two vectors in $M\setminus D_M$.
              We then guess the exact distance, $t$, between $\vec{v}$
  and $\vec{u}$ in a completion leading to the cluster. Note that $t$ can
  be any value between $\HDIST(\vec{v},\vec{u})$ and
  $\HDIST(\vec{v},\vec{u})+2|D_M|$, but is at
  most $r$.
  We now remove all vectors $\vec{m}$
  from $M$ for which either $\HDIST(\vec{v},\vec{m})>t$,
  $\HDIST(\vec{u},\vec{m})>t$, or
  $|\HSET(\vec{v},\vec{m})\cap\HSET(\vec{u},\vec{m})|>t/2$.
  Note that this is safe by of Lemma~\ref{lem:krcom-prune}.
  It now follows from Lemma~\ref{lem:krcom-kernel} that if $M\setminus
  (D_M\cup\{\vec{v},\vec{u}\})$
  contains more than $k3^{2|D_M|+t}$ vectors, then we can return a
  trivial \YES\hy instance of \DIAMCq{}.
  Otherwise, we obtain that $|M\setminus
  D_M|\leq k3^{2|D_M|+t}+2$. We now add back the vectors
  in $D_M$. Clearly, if a vector in $D_M$ differs in more than $t$
  coordinates from $\vec{v}$, then it
  cannot be part of a \DCLUS{} containing $\vec{v}$, and we can safely remove it from
  $D_M$. Hence every vector in $M$ now differs from $\vec{v}$ in at most $t$
  coordinates, which implies that $\DCOOR(M)\leq
  t(|M|-1)+|D_M|$. We now remove all coordinates outside of
  $\DCOOR(M)$ from $M$, since they can always be completed in the
  same manner for all the vectors. Now the remaining instance is a
  kernel containing at most $m=k3^{2|D_M|+t}+2+|D_M|$ vectors, each
  having at most $t(m-1)+|D_M|$ coordinates. This completes the proof for
  the case where the solution cluster contains at least two vectors in
  $M\setminus D_M$.

  In the following let $t=r$.
  For the second case, \ie, the case that the solution cluster
  contains exactly one vector in $M\setminus D_M$, we first guess the
  vector say $\vec{m}$ in $M\setminus D_M$ that will be included in the solution
  cluster and remove all other vectors from $M \setminus
  D_M$.  This leaves us with at most $|D_M|+1$ vectors, and it only remains
  to reduce the number of relevant coordinates. Since we guessed
  that $\vec{m}$ will be in the solution cluster, we can now safely
  remove all vectors $\vec{m}' \in M$, with
  $\HDIST(\vec{m},\vec{m}')>t$. Every remaining vector
  differs from $\vec{m}$ in at most $t$ coordinates, and hence
  $\DCOOR(M) \leq t|D_M|+|D_M|$, which gives us the desired kernel.

  For the third case, \ie, the case that the solution cluster
  contains no vectors in $M\setminus D_M$, we first remove all vectors
  in $M\setminus D_M$. This leaves us with $|D_M|$ vectors, and it
  only remains to reduce the number of coordinates. To achieve this,
      we compute a set of coordinates
  that preserves the distance up to $t$ between any pair of vectors.
  Namely, we compute a set $D$ of relevant coordinates, starting from
  $D=\emptyset$, by adding the following coordinates to $D$
  for every two distinct vectors $\vec{m}$ and $\vec{m}'$ in
  $M$:
  \begin{itemize}
  \item if $|\HSET(\vec{m},\vec{m}')|\leq t$, we add
    $\HSET(\vec{m},\vec{m}')$ to $D$ and otherwise
  \item we add an arbitrary subset of exactly $t+1$ coordinates in
    $\HSET(\vec{m},\vec{m}')$ to $D$.
  \end{itemize}
  Let $M_D$ be the matrix obtained from $M$ after removing all
  coordinates/columns not in $D$. We claim that $(M,k,r)$ and $(M_D,k,r)$ are
  equivalent instances of \DIAMCq{}. This follows because the vectors
  in any \DCLUS{} for $(M_D,k,r)$ can be completed in the same manner
  on all coordinates not in $D$. Since $|D|\leq
  \binom{|D_M|}{2}(t+1)$, the remaining instance has at most $|D_M|$
  vectors each having at most $\binom{|D_M|}{2}(t+1)$ coordinates.
\end{proof} \fi

\newcommand{\core}{center}

\newcommand{\allowedSet}{\ensuremath{\Lambda}}

\subsection{\DIAMCq{} Parameterized by $r+\MEP(M)$}
With Theorem~\ref{thm:DIAM_k_r_comb} in hand, we can move on to establishing the fixed-parameter tractability of \DIAMCq\ parameterized by $r+\MEP(M)$.
At the heart of our approach lies a new technique for analyzing the structure of vectors through sunflowers in their set representations, which we dub \emph{iterative sunflower harvesting}. 
We first preprocess the instance to establish some basic properties. We then show a general result about sunflowers that allows us to derive a succinct representation of the solution cluster---in particular, this guarantees that the hypothetical solution can be described by a bounded number of sunflower cores. Finally, we proceed to ``harvesting'' these sunflowers cores using a branching procedure, thus computing the solution cluster.

\subsubsection{Preprocessing the Instance}
Let $(M,k,r)$ be an instance of \DIAMCq{}, let $M^*$ be a completion of $M$ that contains a maximum size \DCLUS{}, and fix $P^*$ to be such a maximum size \DCLUS{} in $M^*$. We also fix $D_M$ to be a $\MEP(M)$\hy deletion set. The goal of the algorithm is to find $M^*$ and $P^*$.
 
If $k< \MEP(M)+2$, then we can use the algorithm in Theorem~\ref{thm:DIAM_k_r_comb} to obtain a Turing kernel whose size is a function of $r+\MEP(M)$, which, in turn, would imply that \DIAMCq{} if \FPT{} parameterized by $r+\MEP(M)$, thus giving the desired result. We assume henceforth that $P^*$ contains the completion of at least two vectors in $M\setminus D_M$. We can guess the subset $P_R$ of $D_M$ that will be completed to vectors in $P^*$, and restrict our attention to finding a
\DCLUS{} in $M\setminus D_M$ of size $|P^*\setminus D_M|$. We will do so by enumerating all such \DCLUS{}'s in $M\setminus D_M$ and at the end check whether one of them, together with $P_R$, can be extended into a \DCLUS{}.

Therefore, we will focus on what follows on finding a cluster in $M\setminus D_M$. We first
guess two vectors $\vec{v}$ and $\vec{u}$ in $M\setminus D_M$, together with their completions $\vec{v}^*$ and $\vec{u}^*$, respectively, such that $\vec{v}^*$ and $\vec{u}^*$ are both in $P^*$ and are the farthest vectors apart in $P^*\setminus D_M$; fix $r_{\max}=\HDIST(\vec{v}^*,\vec{u}^*)$. We remove all other vectors that can be completed into $\vec{v}^*$ or $\vec{u}^*$, and reduce $k$ accordingly; hence, we do not keep duplicates of the two vectors that we already know to be in $P^*$.
We then normalize all vectors in $M$ so that
$\vec{v}^*$ becomes the all-zero vector, \ie, we replace $\vec{v}^*$ by
the all-zero vector, and for every other vector $\vec{w}\neq \vec{v}$,
we replace it with the vector $\vec{w}'$ such that $\vec{w}'[i]=0$ if
$\vec{v}^*[i]=\vec{w}[i]$, $\vec{w}'[i]=\blank$ if $\vec{w}[i]=\blank$, and $\vec{w}'[i]=1$, otherwise. Finally, for each vector $\vec{w}\in M\setminus D_M$, we compute the set $\allowedSet(\vec{w})$ of all completions of $\vec{w}$ at distance at most $r_{\max}$ from both $\vec{v}^*$ and $\vec{u}^*$. Note that $\allowedSet(\vec{w})$ can be computed in $\bigoh(2^{\MEP(M)}\cdot d)$ time for each vector in $M\setminus D_M$, where $d$ is the dimension of the vectors in $M$. We then remove all vectors $\vec{w}$ with $\allowedSet(\vec{w})=\emptyset$ from $M\setminus D_M$.

We will extend the notation $\allowedSet(\vec{w})$ to $\allowedSet_C(\vec{w})$, for a multiset $C$ of vectors in $\{0,1\}^d$, such that $\allowedSet_C(\vec{w})$ is the set of all completions of a vector $\vec{w}$ at distance at most $r_{\max}$ to all vectors in $C$. We are now ready to show that after normalizing the vectors in $P^*$, the multiset $P^*\setminus D_M$ satisfies certain structural properties that we refer to as an \threshold{$r$} subset (of $M^*$); these structural properties allow for a succinct representation of $P^*$.

\subsubsection{Sunflower Fields or Representing Sets by Cores of Sunflowers}

In this subsection, we provide the central component for our algorithm based on
the iterative sunflower harvesting technique. Crucial to this component
is a general structural lemma that allows us to represent a family
of sets in a succinct manner in terms of sunflower cores, which we believe
to be interesting in its own right. We first state the result in its
most general form (for sets), and then show how to adapt it to
our setting.

\begin{definition}
Let $U$ be a universe, $\BBB$ a family of subsets of $U$ and
$\AAA\subseteq \BBB$. We say that $\AAA$ is an \emph{\threshold{$r$}} subfamily of
$\BBB$ (for $r \in \mathbb{N}$) if the following holds for every $t\in\Nat$ and every sunflower
$S\subseteq \AAA$ containing at least $r+1$ sets of cardinality $t$ with
core $C$: $\AAA$ contains every set $B\in \BBB$ of cardinality $t$ such that
$C\subseteq B$.
\end{definition}

Intuitively, this property states that $\AAA$ contains
all sets in $\BBB$ which are super-sets of cores of every sufficiently-large sunflower in $\AAA$ (with sets of the same cardinality).

The connection of this set property to clusters is as follows. We will show that every maximal cluster is an \threshold{$r$} subset of $M^*$. 
Since $P^*$ contains the all-zero vector $\vec{v}^*$, any vector in $P^*$ contains at most $r$ ones. Fix $t \in [r]$, and consider the set of all vectors in $P^*$ containing exactly $t$ ones. The above notion will allow to draw the following assumption: if the aforementioned set of vectors is large, then it must contain a large sunflower and all the vectors in $M$ whose completions share the core of this sunflower \emph{must be} in $P$.  This property will subsequently allow us to represent every hypothetical solution using a bounded number of sunflowers, as we show next.

The crucial insight now is that every \threshold{$r$} subfamily
containing only sets of bounded size admits a succinct representation, where we can
completely describe the set via a bounded number of
sunflower cores. This is made precise in the following lemma.
\iflong
\begin{LEM}
\fi \ifshort \begin{LEM}[$\spadesuit$] \fi
\label{lem:setsofsets}
	Let $\AAA$ and $\BBB$ be two families of sets of cardinality at most
	$r'$ over universe $U$ such that $\AAA \subseteq \BBB$.
	If $\AAA$ is an $r$-saturated subset of $\BBB$,
	then there is a set $\SSS$ of at most $(rr')^{r'}$ subsets of $U$
	such that $\AAA$ is equal to the
	set of all sets $B$ in $\BBB$ satisfying that $S \subseteq B$ for some $S
	\in \SSS$. Moreover, for each set $S \in \SSS$, $S$ is either the core of a
	sunflower in $\AAA$ with at least $r+1$ petals, or $|S|=r'$.
\end{LEM}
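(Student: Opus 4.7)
The plan is to prove the lemma by iteratively applying the classical sunflower lemma of Erdős--Ko to the family $\AAA$, processing it layer by layer according to set cardinality. The $r$-saturation property is the key structural tool: whenever one finds a sunflower with sufficiently many petals, its core automatically ``certifies'' an entire coset of $\BBB$ as lying inside $\AAA$, so the core can be safely used as a compact generator.

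Concretely, for each cardinality $t \in [r']$ I would maintain a residual family $F_t$, initialized to $\AAA_t := \SB A \in \AAA \SM |A| = t \SE$, and repeat the following until $|F_t|$ falls below the sunflower-lemma threshold guaranteeing $r+1$ petals (roughly $t! \cdot r^t$): extract a sunflower with core $C \subseteq U$ from $F_t$, add $C$ to $\SSS$, and remove from $F_t$ every set that contains $C$. The $r$-saturation of $\AAA$ at size $t$ ensures that every $B \in \BBB$ with $|B|=t$ and $C \subseteq B$ is already in $\AAA$, so including $C$ in $\SSS$ produces only sets in $\AAA$. Conversely, every element of $F_t$ removed during the step is covered by $C$, so correctness of the greedy extraction is immediate. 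Each extraction contributes exactly one element to $\SSS$, which is a legitimate sunflower core of $\AAA$ with at least $r+1$ petals, satisfying the ``moreover'' clause.

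When the loop terminates, the residual family has size bounded by the sunflower threshold. For the top layer $t = r'$ I would simply add each remaining set to $\SSS$ directly: these are automatically allowed by the ``or $|S|=r'$'' clause. For layers $t < r'$, the critical observation is that any residual $A$ of size $t$ is itself the core of a sufficiently large sunflower in $\AAA$ whose petals lie in strictly higher layers; this will be established by exploiting the fact that $A$ escaped all previous extractions together with the saturation property applied at the relevant layer. Summing the per-layer contributions (each bounded by roughly $(rr')^{r'-1}$) across the $r'$ layers yields the total bound $|\SSS| \leq (rr')^{r'}$.

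The main obstacle I expect is the handling of residual sets at layers $t < r'$: arguing rigorously that every such residual is a genuine sunflower core with $\geq r+1$ petals (rather than an isolated low-cardinality set that cannot be represented) is the delicate part, and may require interleaving the extractions across layers, or sharpening the sunflower-lemma application so that residuals are automatically cores by construction. A secondary, bookkeeping obstacle is tightening the per-layer count from the naive $r'! \cdot r^{r'}$ down to $(rr')^{r'-1}$, which may require iterating the sunflower lemma on the ``petal parts'' rather than the whole family.
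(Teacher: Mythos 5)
Your approach diverges substantially from the paper's: the paper builds a \emph{bounded-depth branching search tree} over candidate cores $X\subseteq U$ (starting from $X=\emptyset$), whereas you propose a \emph{layer-by-layer greedy extraction} over set cardinalities $t\in[r']$. These are not equivalent, and the greedy route as you have sketched it has a gap that is more fundamental than the two obstacles you flag.

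The core problem is that your main loop (``repeat until $|F_t|$ falls below the sunflower threshold: extract a sunflower, add its core to $\SSS$, remove all supersets of the core'') does not have a number of iterations bounded by any function of $r$ and $r'$ alone, and hence gives no bound on $|\SSS|$. Each extraction may remove as few as $r+1$ sets, so the iteration count scales with $|\AAA_t|$, which is an input quantity. Concretely, take $r'=2$, $r=2$, and $\AAA=\BBB=\binom{[m]}{2}$ (all $2$-subsets of $[m]$); every superset of $\BBB$ trivially makes $\AAA$ $r$-saturated since $\AAA=\BBB$. If your extraction rule picks, say, a maximum sunflower at each step, it will choose the star at some vertex $v$ (with $m-1$ petals and core $\{v\}$, which beats the matching with core $\emptyset$ once $m\geq 4$), remove the $m-1$ edges incident to $v$, and then iterate on a clique on $m-1$ vertices. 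This continues for $\Theta(m)$ rounds, producing $|\SSS|=\Theta(m)\gg (rr')^{r'}=16$, even though the correct answer is simply $\SSS=\{\emptyset\}$ ($\emptyset$ is the core of a sunflower of $r+1$ pairwise-disjoint pairs). You have not fixed the extraction rule, and at least one natural choice fails; so the argument as stated does not establish the cardinality bound. This is not the bookkeeping issue you identify (tightening $r'!\cdot r^{r'}$ down to $(rr')^{r'-1}$ for the residuals) --- it is the extraction count itself that is unbounded.

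What the paper does differently, and what saves it, is the hitting-set branching: at a node with associated set $X$, if there is no sunflower in $\AAA$ with $\geq r+1$ petals and core exactly $X$, then the union of the petals of a \emph{maximal} sunflower with core $X$ (at most $r$ petals) is a hitting set of size at most $r(r'-|X|)\leq rr'$ for all sets in $\AAA$ that contain $X$; the algorithm branches on each element of this hitting set, growing $X$ by one element per level. The tree therefore has branching factor $\leq rr'$ and depth $\leq r'$, giving at most $(rr')^{r'}$ leaves. The indexing variable is $|X|$ (the current core size), not the cardinality $t$ of the sets in $\AAA$, and the two constructions are genuinely different decompositions. Your instinct that ``residuals are automatically cores by construction'' can be realized, but it is realized via this hitting-set certification --- not via the classical Erd\H{o}s--Rado threshold, which plays no role in the paper's cardinality count.

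On your secondary worry about layers $t<r'$: you are right to be uneasy, but for a sharper reason than you give. In the $r'$-uniform setting (which is the only setting the paper actually uses, via Corollary~\ref{cor:setsofsets} applied separately for each value of $r'$), the residuals at the final stage all have cardinality $r'$ and may be placed in $\SSS$ one by one, which is exactly the ``or $|S|=r'$'' clause of the lemma. In a genuinely non-uniform setting the statement becomes delicate: saturation only certifies membership in $\AAA$ for sets $B$ of the \emph{same} cardinality as the petals of the witnessing sunflower, so a core found from petals of cardinality $t$ gives no information about $\BBB$-sets of cardinality $t'\neq t$ that happen to contain the core. Your instinct that this is the ``delicate part'' is correct, but in the paper's application the uniformity assumption quietly dissolves it.
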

\iflong
\begin{proof}
	We show the claim by giving a bounded-depth search tree algorithm
	that computes a rooted (search) tree $T$, where each node $t$ of $T$ is
	associated with a subset $X_t$ of $U$, and in which each leaf node is
	marked as either successful or unsuccessful. Note that the algorithm is not necessarily efficient, and all
	the steps can be brute-forced. The set $\SSS$ then
	contains all sets associated with a successful leaf node of $T$.
	Initially, $T$ only consists of the root node $r$ for which we set
	$X_r:=\emptyset$. The algorithm then proceeds as follows as long as
	$T$ contains a leaf that is not yet marked as either successful or unsuccessful.
	Let $t$ be a leaf node of $T$ (that has
	not yet been marked). We
	first check whether $|X_t|=r'$ and $X_t \in \AAA$, or whether $\AAA$
	contains a sunflower with at least $r+1$ petals and core $X_t$; 		if this is the case, we
	mark $t$ as a successful leaf node. Note that, since $\AAA$ is an
	$r$-saturated subset of $\BBB$, $\AAA$ contains all sets $B \in \BBB$ such
	that $X_t \subseteq B$.
	Otherwise, let $\AAA_X$ be the subset of $\AAA$ containing all sets
	$A \in \AAA$ with $X_t \subseteq A$. If $\AAA_X=\emptyset$, we mark
	$t$ as an unsuccessful leaf node.
	Otherwise, let $\AAA_X'$ be a maximal sunflower in $\AAA_X$ with core $X_t$. Note that
	$\AAA_X' \neq \emptyset$ because $\AAA_X\neq \emptyset$ (and we can choose the
	core of a sunflower containing only one set/petal arbitrarily). Moreover,
	$|\AAA_X'|\leq r$, since otherwise $\AAA$ contains a sunflower with at
	least $r+1$ petals and core $X_t$. We now claim that the set
	$H:=(\bigcup_{A \in \AAA_X'}A)\setminus X_t$ is a hitting set for the sets
	in $\AAA_X$. Assume, for a contradiction, that this is not the case and
	let $A \in \AAA_X$ be such that $H\cap A=\emptyset$. Then
	we can extend $\AAA_X'$ by $A$ contradicting our
	assumption that $\AAA_X'$ is inclusion-wise maximal. Now, for every $h \in
	H$, we add a new child node $t_h$ to $t$ and set $X_{t_h}:=X_t\cup
	\{h\}$. Note that we add at most $|H|\leq r(r'-|X_t|)$ children to $t$.
	
	Let $\SSS$ be the set computed by the above algorithm. Note that $T$
	has height at most $r'$. This is because $|X_t|=l$ for any node $t$
	of $T$ at level $l$ and we stop when there is at most one set in
	$\AAA$ containing $X_t$, which implies that $t$ is marked either as
	successful or unsuccessful leaf node.
	Moreover, since every node of $T$ has at most $rr'$ children,
	we obtain that $T$ has at most $(rr')^{r'}$ (leaf) nodes and
	hence 	$|\SSS|\leq (rr')^{r'}$, as required. 		It remains to show
	that, 	for every set $A$ in $\AAA$, there is a $S\in \SSS$ such that
	$S\subseteq A$.
	Let $A$ be an arbitrary set in $\AAA$. We first
	show that there is a leaf node $l$ such that $X_{l}$ is contained in
	$A$. Towards showing this, first note that $X_r=\emptyset$ is
	clearly contained in $A$ for the root node $r$ of
	$T$. Furthermore, if $t$ is an inner node of $T$ such that $X_t$ is
	contained in $A$, then there is a child $t'$ of $t$ in $T$
	with $X_{t'}$ contained in $A$. This follows from the
	construction of $T$ since $t$ has a child $t''$ for every $h\in H$
	with $X_{t''}:=X_t\cup \{h\}$, where $H$ is a hitting set of
	$\AAA_{X_t}$ and hence also hits $A$. Starting with the root $r$
	of $T$ and going down the tree $T$ by always choosing a child $t$
	such that $X_t$ is contained in $A$, we will eventually end up
	at the leaf node $l$ such that $X_l$ is contained in
	$A$. Finally, because $\AAA_{X_l}$ contains $A$, $l$ is
	marked as a successful leaf node meaning that $X_l\in \SSS$ is contained in $A$.
\end{proof}
\fi 
We will now show how Lemma~\ref{lem:setsofsets} can be employed in our
setting. Let $M,M'\subseteq \{0,1\}^d$ with $M' \subseteq M$. Then $M'$ is an
\emph{\threshold{$r$} subset} of $M$ if $\HSET(M')$ is an
\threshold{$r$} subset of $\HSET(M)$. Herein, one can think of $M$ as
being (a part of) the input matrix and $M'$ as being an inclusion-wise maximal
cluster; we will later show that this property guarantees that $M'$ is an
\threshold{$r$} subset of $M$.
Using this definition, the following
corollary now follows immediately from Lemma~\ref{lem:setsofsets}.
\begin{COR}\label{cor:setsofsets}
	Let $r',r\in \Nat$ and $M,M'\subseteq \{0,1\}^d$ be sets of
	$r'$-vectors such that $M'$ is an $r$-saturated subset of $M$.
	There is a set $\SSS$ of at most $(rr')^{r'}$ subsets of
	$[d]$ such that $M'$ is equal to the set of all
	vectors $\vec{m}$ in $M$ satisfying $S\subseteq \HSET(\vec{m})$
	for some $S\in \SSS$. Moreover, for each set $S\in\SSS$, $S$ is either a core
	of a sunflower in $\HSET(M')$ with at least $r+1$ petals, or $|S|=r'$.
			\end{COR}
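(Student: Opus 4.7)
The plan is to apply Lemma \ref{lem:setsofsets} directly through the canonical bijection between vectors in $\{0,1\}^d$ and subsets of $[d]$ given by $\HSET(\cdot)$. Under this bijection, by the very definition of an $r$-saturated subset of vectors, the hypothesis that $M' \subseteq M$ is an $r$-saturated subset of $M$ translates verbatim into the statement that $\HSET(M')$ is an $r$-saturated subfamily of $\HSET(M)$. Moreover, since every vector in $M$ is an $r'$-vector, each element of $\HSET(M) \supseteq \HSET(M')$ is a subset of $[d]$ of cardinality exactly $r'$, and in particular of size at most $r'$, which is exactly what Lemma \ref{lem:setsofsets} requires.

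With these correspondences in place, I would invoke Lemma \ref{lem:setsofsets} using universe $U := [d]$, $\AAA := \HSET(M')$, and $\BBB := \HSET(M)$. The lemma yields a collection $\SSS$ of at most $(rr')^{r'}$ subsets of $[d]$ such that $\HSET(M')$ is precisely the family of those sets $B \in \HSET(M)$ for which some $S \in \SSS$ satisfies $S \subseteq B$. Unwinding the bijection, this is exactly the assertion that $M'$ equals the set of vectors $\vec{m} \in M$ for which $S \subseteq \HSET(\vec{m})$ for some $S \in \SSS$, which is the conclusion demanded by the corollary.

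The additional structural guarantee on the members of $\SSS$ (namely, that each $S \in \SSS$ is either the core of a sunflower in $\HSET(M')$ with at least $r+1$ petals, or has $|S| = r'$) transfers unchanged from the corresponding clause of Lemma \ref{lem:setsofsets}, since $\AAA = \HSET(M')$ under our identification. In short, there is no real technical obstacle here: the corollary is simply the vector-language specialisation of the lemma to uniform-cardinality families, obtained by relabeling $\AAA$ and $\BBB$ via $\HSET(\cdot)$.
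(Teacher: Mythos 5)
Your proof is correct and follows precisely the route the paper has in mind: instantiate Lemma~\ref{lem:setsofsets} with $U=[d]$, $\AAA=\HSET(M')$, and $\BBB=\HSET(M)$, noting that the bijection between $\{0,1\}^d$ and subsets of $[d]$ lets the definition of ``$r$-saturated subset'' for vectors and the conclusion of the lemma translate verbatim. The paper itself only remarks that the corollary ``follows immediately'' from the lemma, and your argument is the standard unwinding it is alluding to.
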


We now can show that after normalizing the vectors in $P^*$, the multiset $P^*\setminus D_M$ is an \threshold{$r$} subset of $M^*$.

\ifshort
\begin{LEM}[$\spadesuit$]

\label{lem:r-uniform-PAIR}
	Let $(M,k,r)$ be an instance of \DIAMCq{}, let $M^*$ be a completion of $M$ and let $P^*$ be a \DCLUS{} in $M^*$ of maximum size such that $\vec{0}\in P^*$. Then for every $N\subseteq M^*$, $P^*\setminus N$ is an \threshold{$r$} subset of $M^*\setminus N$.
\end{LEM}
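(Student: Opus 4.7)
The plan is to argue by contradiction, leveraging the maximality of $P^*$ together with Lemma~\ref{lem:CLUS-SF-BASIC-A}. Suppose $P^*\setminus N$ is not an \threshold{$r$} subset of $M^*\setminus N$. Unfolding the definition, there exist $t\in\Nat$, a sunflower $S\subseteq P^*\setminus N$ consisting of at least $r+1$ vectors whose $\HSET$-image all have cardinality $t$ and common core $C$, together with a witness $\vec{b}\in M^*\setminus N$ with $|\HSET(\vec{b})|=t$ and $C\subseteq\HSET(\vec{b})$, such that $\vec{b}\notin P^*\setminus N$. Since $\vec{b}\notin N$, this means $\vec{b}\notin P^*$.

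Next I would use the maximum-size hypothesis on $P^*$ to produce an obstruction: because $P^*\cup\{\vec{b}\}$ is strictly larger than $P^*$ (as a multiset of vectors in $M^*$), it cannot be a \DCLUS{} of diameter at most $r$, and hence there must exist $\vec{p}\in P^*$ with $\HDIST(\vec{p},\vec{b})>r$. Since $\vec{0}\in P^*$ and $P^*$ has diameter at most $r$, we get $|\HSET(\vec{p})|=\HDIST(\vec{0},\vec{p})\leq r$, so $\vec{p}$ is exactly the kind of ``light'' vector to which Lemma~\ref{lem:CLUS-SF-BASIC-A} applies.

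Finally I would invoke Lemma~\ref{lem:CLUS-SF-BASIC-A} with the sunflower $S$ (regarded as a set of $t$-vectors whose $\HSET$-images form a sunflower with core $C$) and with target vector $\vec{a}:=\vec{p}$. As $|S|\geq r+1>r$, the lemma yields some $\vec{s}\in S$ achieving the maximum Hamming distance to $\vec{p}$ among all $t$-vectors containing $C$. Since $\vec{b}$ is itself a $t$-vector containing $C$, this forces $\HDIST(\vec{s},\vec{p})\geq\HDIST(\vec{b},\vec{p})>r$. But $\vec{s},\vec{p}\in P^*$, which contradicts the assumption that $P^*$ is a \DCLUS{} of diameter at most $r$, completing the proof.

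I do not expect any real obstacle: the only point where care is needed is verifying that the vector $\vec{p}$ used in the application of Lemma~\ref{lem:CLUS-SF-BASIC-A} satisfies the cardinality bound $|\HSET(\vec{p})|\leq r$, which is precisely where the normalizing assumption $\vec{0}\in P^*$ is used.
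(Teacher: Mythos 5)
Your argument is correct and is essentially the paper's own proof, just contrapositively packaged: the paper picks an arbitrary $t$-vector $\vec{m}\in M^*\setminus(N\cup P^*)$ containing the core and, via Lemma~\ref{lem:CLUS-SF-BASIC-A} and the normalization $\vec{0}\in P^*$, shows its distance to every $\vec{c}\in P^*$ is at most $r$ (contradicting maximality), while you start from a hypothetical violating vector $\vec{b}$, extract a far witness $\vec{p}\in P^*$ from maximality, and derive the same contradiction by the same appeal to Lemma~\ref{lem:CLUS-SF-BASIC-A}. Both use $\vec{0}\in P^*$ exactly where you flagged, to bound $|\HSET(\vec{p})|\le r$, so there is no real gap.
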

\fi

 \iflong
\begin{LEM}

\ifshort
\begin{LEM}[$\spadesuit$]
\fi
\label{lem:r-uniform-PAIR}
	Let $(M,k,r)$ be an instance of \DIAMCq{}, let $M^*$ be a completion of $M$ and let $P^*$ be a \DCLUS{} in $M^*$ of maximum size such that $\vec{0}\in P^*$. Then for every $N\subseteq M^*$, $P^*\setminus N$ is an \threshold{$r$} subset of $M^*\setminus N$.
\end{LEM}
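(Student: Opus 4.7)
The plan is to establish the saturation property by contradiction, using the maximality of $P^*$ as a cluster and invoking Lemma~\ref{lem:CLUS-SF-BASIC-A} to control distances from a hypothetical new cluster member. Unfolding the definition of $r$-saturation for our setting, I must show that whenever $\mathcal{S} \subseteq \HSET(P^* \setminus N)$ is a sunflower consisting of at least $r+1$ sets of cardinality $t$ with core $C$, every set $B \in \HSET(M^* \setminus N)$ of cardinality $t$ with $C \subseteq B$ already lies in $\HSET(P^* \setminus N)$. So I assume for a contradiction that some such $B$ violates this, and fix any $\vec{b} \in M^* \setminus N$ with $\HSET(\vec{b}) = B$. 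At the level of multiplicities, the failure $B \notin \HSET(P^* \setminus N)$ means that no copy of $\vec{b}$ survives in the multiset difference $P^* \setminus N$, which, together with $\vec{b} \in M^* \setminus N$, yields $\mu_{P^*}(\vec{b}) \leq \mu_N(\vec{b}) < \mu_{M^*}(\vec{b})$; hence at least one copy of $\vec{b}$ present in $M^*$ is not used by $P^*$ and is available to be added to it.

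Next, I will argue that the enlarged multiset $P^* \cup \{\vec{b}\}$ is still a \DCLUS{} of diameter at most $r$, contradicting the maximality of $P^*$. For each of the $r+1$ petals $P_i$ of the sunflower, pick a representative $\vec{s}_i \in P^* \setminus N$ with $\HSET(\vec{s}_i) = P_i$; since the petals are pairwise distinct, so are the $\vec{s}_i$, so the resulting set $N_{\mathrm{sf}} \subseteq P^* \setminus N$ has size at least $r+1 > r$ and $\HSET(N_{\mathrm{sf}})$ is a sunflower of $t$-sets with core $C$. Now fix any $\vec{p} \in P^*$. Because $\vec{0} \in P^*$ and $P^*$ has diameter at most $r$, we have $|\HSET(\vec{p})| = \HDIST(\vec{0},\vec{p}) \leq r$, so Lemma~\ref{lem:CLUS-SF-BASIC-A} applied to $N_{\mathrm{sf}}$ with $\vec{a} := \vec{p}$ produces some $\vec{s}_i \in N_{\mathrm{sf}}$ whose distance to $\vec{p}$ is maximal among all $t$-vectors containing $C$. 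Since $\vec{b}$ is such a $t$-vector, $\HDIST(\vec{b},\vec{p}) \leq \HDIST(\vec{s}_i,\vec{p}) \leq r$, where the last inequality uses $\vec{s}_i,\vec{p} \in P^*$. Thus $\vec{b}$ is within Hamming distance $r$ of every vector of $P^*$, so $P^* \cup \{\vec{b}\}$ is a \DCLUS{} strictly larger than $P^*$, the desired contradiction.

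The principal subtlety I anticipate is the multiset bookkeeping: the definition of $r$-saturation is phrased set-theoretically via $\HSET(\cdot)$, whereas $P^*$, $N$, and $M^*$ are genuine multisets, so one must verify carefully that $B \notin \HSET(P^* \setminus N)$ truly guarantees an unused copy of $\vec{b}$ in $M^* \setminus P^*$, making the enlarged cluster $P^* \cup \{\vec{b}\}$ a legitimate submultiset of $M^*$ (and hence an admissible competitor against the maximality of $P^*$). Once this detail is handled, the rest of the proof reduces to the clean distance estimate furnished by Lemma~\ref{lem:CLUS-SF-BASIC-A}, with the hypothesis $\vec{0} \in P^*$ ensuring that the arbitrary vector $\vec{p} \in P^*$ meets the $|\HSET(\vec{p})| \leq r$ precondition of that lemma.
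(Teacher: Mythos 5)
Your proof is correct and follows the same strategy as the paper's: assume a qualifying set $B$ outside $\HSET(P^*\setminus N)$, pick a corresponding vector, use Lemma~\ref{lem:CLUS-SF-BASIC-A} (with the hypothesis $\vec{0}\in P^*$ providing the $|\HSET(\vec{p})|\le r$ bound) to show it lies within distance $r$ of every vector in $P^*$, and contradict the maximality of $P^*$. The only difference is that you spell out the multiset bookkeeping showing an unused copy of $\vec{b}$ exists in $M^*\setminus P^*$, which the paper leaves implicit by directly choosing $\vec{m}\in M^*\setminus(N\cup P^*)$.
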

\fi
\iflong
\begin{proof}
	\iflong
	Let $t\in \Nat$ and $S\subseteq (P^*\setminus N)$ such that
	\begin{itemize}
		\item $|S|>r$,
		\item for all $\vec{w}\in S$ it holds that $|\HSET(\vec{w})|=t$, and
		\item $\FFF=\HSET(S)$ is a sunflower with core $X$.
	\end{itemize}
	\fi
	\ifshort
	Let $t\in \Nat$ and $S\subseteq (P^*\setminus N)$ such that (1) $|S|>r$, (2) for all $\vec{w}\in S$ it holds that $|\HSET(\vec{w})|=t$, and (3) $\FFF=\HSET(S)$ is a sunflower with core $X$.
	\fi
	Let $\vec{m}$ be an arbitrary $t$-vector in $M^*\setminus (N\cup  P^*)$ containing $X$, and let $\vec{c}$ be an arbitrary vector in $P^*$. Then
	$|\HSET(\vec{c})|= \HDIST(\vec{0},\vec{c})\le r$, and it follows from
	Lemma~\ref{lem:CLUS-SF-BASIC-A} that $S$ contains a vector $\vec{n}$
	that has maximum distance to $\vec{c}$ among all $t$-vectors in $M^*$
	containing $X$. Hence, the distance between $\vec{m}$ and $\vec{c}$
	is at most that between $\vec{n}$ and $\vec{c}$,
	which, since both $\vec{n}$ and $\vec{c}$ are in $P^*$, is at most
	$r$. Consequently, the distance between $\vec{m}$ and any vector in
	$P^*$ is at most $r$, contradicting the maximality~of~$P^*$.
\end{proof}
\fi

Since $P^*\setminus N$ is an \threshold{$r$}
subset of $M^*\setminus N$, by Corollary~\ref{cor:setsofsets}, applied separately for each  $r'\in [r]$, there exists a set $\SSS=\{(S_1,r_1),\ldots, (S_\ell,r_\ell) \}$, with $\ell\le \sum_{r'\in [r]}(rr')^{r'}\le r^{2r+1}$, such that $P^*\setminus N$ contains precisely all the vectors $\vec{w}$ in $M^*$, such that for some $(S_i,r_i)$, $i\in [\ell]$, $S_i\in \HSET(\vec{w})$ and $|\HSET(\vec{w})|=r_i$.

We call the pair $(S_i,r_i)$ an
\emph{$r_i$-\core{}} (of $P^*\setminus N$ in $M^*\setminus N$).
We say
that a vector $\vec{w}\in \{0,1,\blank \}^d$ is \emph{compatible} with
$r_i$-\core{} $(S_i,r_i)$ if there is a completion $\vec{w}^*\in
\{0,1\}^d$  of $\vec{w}$, called \emph{witness of compatibility}, such that
$S_i\subseteq \HSET(\vec{w}^*)$ and $|\HSET(\vec{w}^*)|=r_i$. We say that $\vec{w}\in \{0,1,\blank\}^d$
is \emph{compatible} with $\SSS$ if it is compatible with some
$(S_i,r_i)\in \SSS$.

The \emph{size} of an $r_i$-\core{} is the
number of vectors that are compatible with it in $M$. Moreover, for a set
$\SSS=\{(S_1,r_1),\ldots,(S_\ell,r_\ell) \}$ and a multiset $C$ of
vectors from $\{0,1\}^d$, we say that $\SSS$ \emph{defines} $C$, if
every vector $\vec{c}\in C$ is compatible with $\SSS$ and for every
$(S_i,r_i)\in \SSS$ there is a vector in $C$ compatible with
$(S_i,r_i)$. We say that $\SSS$ \emph{properly} defines $C$, if $|\SSS|\le
r^{2r+1}$, $\SSS$ defines $C$, and for every $(S_i,r_i)\in \SSS$
either:
\begin{itemize}
\item $|S_i|=r_i$ and the unique vector that is compatible with
  $(S_i,r_i)$ is in $C$; or
\item $|S_i|<r_i$ and $C$ contains a set $N$ of $r+1$ $r_i$-vectors such that $\HSET(N)$ forms a sunflower with core $S_i$.
\end{itemize}
Note that if every vector in $C$ has at most $r_{\max}$ $1$'s, then since $C$ is an \threshold{$r$} subset of $C$, it follows from Corollary~\ref{cor:setsofsets} that there always exists a set $\SSS$ that properly defines $C$.
\begin{OBS}\label{obs:properlyDefines}
	Let $C$ be a multiset of vectors from $\{0,1\}^d$, with $\max_{\vec{c}\in C}|\HSET(\vec{c})|\le r$. Then there exists a set $\SSS$ of at most $r^{2r+1}$ $r_i$-\core{s} that properly defines $C$.
\end{OBS}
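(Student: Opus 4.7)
The plan is to derive the observation directly from Corollary~\ref{cor:setsofsets} by partitioning $C$ according to the number of ones in each vector. Specifically, for each $r'\in [r]$, I would define $C_{r'}:=\{\vec{c}\in C \mid |\HSET(\vec{c})|=r'\}$, so that $C=\bigcup_{r'\in [r]}C_{r'}$ (vectors with $0$ ones can be handled as a trivial special case or folded into the $r'=0$ layer, since there is at most one such vector). Each $C_{r'}$ is a set of $r'$-vectors, and it is trivially an $r$-saturated subset of itself, since the defining condition of $r$-saturation is vacuously satisfied when $\AAA=\BBB$.

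Next, I would invoke Corollary~\ref{cor:setsofsets} with $M=M'=C_{r'}$ to obtain a family $\SSS_{r'}$ of at most $(rr')^{r'}$ subsets of $[d]$, and I would let
\[
  \SSS \;:=\; \bigcup_{r'\in [r]}\SB (S,r') \SM S\in \SSS_{r'}\SE.
\]
The size bound $|\SSS|\le \sum_{r'=1}^{r}(rr')^{r'}\le r\cdot r^{2r}=r^{2r+1}$ follows immediately. It then remains to verify that this $\SSS$ properly defines $C$ in the sense laid out just before the observation.

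For the defining property, I would argue that every $\vec{c}\in C$ lies in some $C_{r_i}$ and is, by the guarantee of Corollary~\ref{cor:setsofsets}, a super-set of some $S_i\in \SSS_{r_i}$; hence $\vec{c}$ is compatible with $(S_i,r_i)$ (the identity completion is a witness). Conversely, each $(S_i,r_i)\in \SSS$ has at least one compatible vector in $C$ because the corollary produces only sets $S_i$ that occur as super-sets of at least one vector in $C_{r_i}$. For the \emph{proper} part, I would split on the second clause of the corollary: if $|S_i|=r_i$, then the unique $r_i$-vector containing $S_i$ is exactly the vector whose $\HSET$ equals $S_i$; the corollary guarantees that this vector lies in $C_{r_i}\subseteq C$. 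If instead $|S_i|<r_i$, the corollary tells us $S_i$ is the core of a sunflower in $\HSET(C_{r_i})$ with at least $r+1$ petals, which translates directly to the existence of a set $N\subseteq C_{r_i}$ of $r+1$ $r_i$-vectors whose $\HSET$s form a sunflower with core~$S_i$.

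I do not anticipate a genuine obstacle in this proof: the heavy structural work has already been done in Lemma~\ref{lem:setsofsets} and Corollary~\ref{cor:setsofsets}, and the observation is essentially a bookkeeping exercise that lifts the corollary from a single cardinality $r'$ to all cardinalities in $[r]$ simultaneously and repackages the output in the $(S_i,r_i)$ notation used by the algorithm. The only minor subtlety worth a line of prose is the trivial $r$-saturation of $C_{r'}$ inside itself, which justifies applying the corollary without any additional hypothesis on~$C$.
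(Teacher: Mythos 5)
Your proof is correct and takes essentially the same route as the paper: the paper likewise notes that $C$ (or more precisely its partition into cardinality layers) is trivially $r$-saturated in itself and applies Corollary~\ref{cor:setsofsets} separately for each $r'\in[r]$, summing $\sum_{r'\in[r]}(rr')^{r'}\le r^{2r+1}$ to get the bound. The only minor wording nit is that the $r$-saturation condition for $\AAA=\BBB$ is satisfied \emph{trivially} (its conclusion holds automatically) rather than \emph{vacuously}, but this does not affect the argument.
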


Suppose that we have a correct guess for $\SSS$, then we can already solve the
problem as follows. For each $(S_i,r_i)\in \SSS$ such that
$|S_i|=r_i$, there is only one possible $r_i$-vector that contains
$S_i$. If our guess is correct, then $P^*$ contains at least one vector that can be completed to this particular $r_i$-vector, and by maximality of $P^*$, $P^*$ has to contain all such vectors.
If $(S_i,r_i)\in \SSS$ such that $|S_i|<r_i$, then $P^*$
contains a sunflower containing $r_i$-vectors of size at least $r+1$ whose core is $S_i$. Clearly,
$P^*$ contains all the vectors that can be completed to an
$r_i$-vector containing $S_i$, since Lemma~\ref{lem:r-uniform-PAIR}
holds for any completion $M^*$ of $M$ that contains $P^*$ as a
subset. The following lemma shows that all such vectors can
be completed arbitrarily since all that matters is that their
completion is compatible with $\SSS$.

\iflong
\begin{LEM} 
\fi
\ifshort
\begin{LEM}[$\spadesuit$]
\fi
\label{lem:largeSunflowerArbitrary}
	Let $(M,k,r)$ be an instance of \DIAMCq{}, $M^*$ a completion of $M$, $P^*$ a \DCLUS{} in $M^*$ of maximum size with $\vec{0}\in P^*$, and $N\subseteq M^*$. 	If $\SSS$ properly defines $P^*\setminus N$, then for every pair of vectors $\vec{w}_1, \vec{w}_2\in \{0,1\}^d$ compatible with $\SSS$ it holds that $\HDIST(\vec{w}_1, \vec{w}_2)\le r$.
\end{LEM}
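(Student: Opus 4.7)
The plan is to upper-bound $\HDIST(\vec{w}_1,\vec{w}_2)$ by a distance between two vectors that both lie in $P^*$, via two successive applications of Lemma~\ref{lem:CLUS-SF-BASIC-A}---one to ``absorb'' $\vec{w}_1$ into $P^*$, and a second to absorb $\vec{w}_2$. A preliminary observation is that every $r_\ell$-\core{} $(S_\ell,r_\ell)\in \SSS$ satisfies $r_\ell\le r$: by the proper-definition property, $P^*$ contains some $r_\ell$-vector, which together with $\vec{0}\in P^*$ and the fact that $P^*$ has diameter at most $r$ forces $r_\ell\le r$. In particular, both $\vec{w}_1$ and $\vec{w}_2$ have weight at most $r$.

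Fix $(S_i,r_i),(S_j,r_j)\in \SSS$ so that $\vec{w}_1$ is compatible with $(S_i,r_i)$ and $\vec{w}_2$ with $(S_j,r_j)$, and treat each side with the same case distinction. If $|S_i|=r_i$, then the unique $r_i$-vector containing $S_i$ equals $\vec{w}_1$, which lies in $P^*$ by the first bullet of the proper-definition property; set $\vec{n}_i:=\vec{w}_1$. Otherwise $|S_i|<r_i$, and by the second bullet $P^*$ contains a set $N_i$ of $r+1$ distinct $r_i$-vectors whose set-representations form a sunflower with core $S_i$. Applying Lemma~\ref{lem:CLUS-SF-BASIC-A} to $N_i$ and $\vec{w}_2$ (valid because $|\HSET(\vec{w}_2)|=r_j\le r$) produces $\vec{n}_i\in N_i\subseteq P^*$ that maximises the distance to $\vec{w}_2$ among all $r_i$-vectors containing $S_i$; since $\vec{w}_1$ is itself such a vector, $\HDIST(\vec{w}_1,\vec{w}_2)\le \HDIST(\vec{n}_i,\vec{w}_2)$.

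Applying the symmetric construction on the $\vec{w}_2$ side, this time using the already-obtained $\vec{n}_i\in P^*$ (which has $|\HSET(\vec{n}_i)|=r_i\le r$) as the target vector for Lemma~\ref{lem:CLUS-SF-BASIC-A}, produces $\vec{n}_j\in P^*$ with $\HDIST(\vec{n}_i,\vec{w}_2)\le \HDIST(\vec{n}_i,\vec{n}_j)$. Chaining the two inequalities yields $\HDIST(\vec{w}_1,\vec{w}_2)\le \HDIST(\vec{n}_i,\vec{n}_j)\le r$, where the last bound follows from $\vec{n}_i,\vec{n}_j\in P^*$ and the diameter assumption on $P^*$.

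The main obstacle is purely bookkeeping: in every branch of the case analysis I must check the two hypotheses of Lemma~\ref{lem:CLUS-SF-BASIC-A}, namely that the sunflower has strictly more than $r$ petals of equal weight, and that the target vector has weight at most $r$. Both preconditions reduce to the uniform observation above that every $r_\ell$ appearing in $\SSS$ is at most $r$, together with the fact that each intermediate vector produced by the argument is either a $\vec{w}_t$ of weight $r_t\le r$ or an element of $P^*$ and hence of weight at most $r$. Beyond this bookkeeping, no additional ingredients beyond Lemma~\ref{lem:CLUS-SF-BASIC-A} and the two bullets of the proper-definition property are needed.
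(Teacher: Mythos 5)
Your proof is correct and follows essentially the same route as the paper: the same case split on $|S_i|$ versus $r_i$ on each side, the same two successive invocations of Lemma~\ref{lem:CLUS-SF-BASIC-A} (first pushing $\vec{w}_1$ to a sunflower petal $\vec{n}_i\in P^*$ and then pushing $\vec{w}_2$ to a petal $\vec{n}_j\in P^*$ using $\vec{n}_i$ as the anchor), and the same final appeal to $\operatorname{diam}(P^*)\le r$. The only stylistic difference is that you unify the four cases through the auxiliary vectors $\vec{n}_i,\vec{n}_j$ (setting $\vec{n}_i:=\vec{w}_1$ when $|S_i|=r_i$), whereas the paper enumerates them explicitly; the underlying argument is the same.
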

\iflong
\begin{proof}
	Let $\vec{w}_i$, $i\in \{1,2\}$, be compatible with the $r_i$-\core{} $(S_i,r_i)$. If $|S_i|< r_i$, $i\in \{1,2\}$, then let $N_i\subseteq P^*\setminus N$ be the set of $r_i$-vectors such that $\FFF_i=\HSET(N_i)$ is a sunflower with core $S_i$. Note that such a set $N_i$ always exists by the definition of $\SSS$. Moreover, if $|S_i|=r_i$, then $\vec{w}_i$ is uniquely defined and already in $P^*\setminus N$. Hence, if both $|S_1|=r_1$ and $|S_2|=r_2$, then the lemma holds.
	
	If $|S_1|=r_1$ and $|S_2|<r_2$, then $P^*\setminus N$ contains a vector identical to $\vec{w}_1$, $\HSET(\vec{w}_1)=r_1\le r$ and by Lemma~\ref{lem:CLUS-SF-BASIC-A} there is a vector $\vec{v}_2$ in $N_2$ that has the maximum distance to $\vec{w}_1$ among all $r_2$-vectors that contain $C_2$ and $\HDIST(\vec{w}_1,\vec{w}_2)\le r$. The case where $|S_1|<r_1$ and $|S_2|=r_2$ is symmetric.
	
	Finally, if both $|S_1|<r_1$ and $|S_2|<r_2$, then for every $\vec{v}_1\in N_1$ is $\HSET(\vec{v}_1)=r_1\le r$ and by Lemma~\ref{lem:CLUS-SF-BASIC-A} there is a vector $\vec{v}_2$ in $N_2$ that has the maximum distance to $\vec{v}_1$ among all $r_2$-vectors that contain $C_2$. Hence, $\vec{w}_2$ has distance at most $r$ to all vectors in $N_1$. Now $\HSET(\vec{w}_2)=r_2\le r$ and by Lemma~\ref{lem:CLUS-SF-BASIC-A} there is a vector $\vec{v}_1$ in $N_1$ that has the maximum distance to $\vec{w}_2$ among all $r_1$-vectors that contain $C_1$. Since $\HDIST(\vec{w}_2,\vec{v}_1)\le r$, it follows that $\HDIST(\vec{w}_1,\vec{w}_2)\le r$.
\end{proof}
\fi

It is easy to see that there are at most $d^{r+1}$ choices for a pair $(S_i,r_i)$ (i.e., a $r_i$-\core), and hence we have at most $(d^{r+1})^{r^{2r+1}}=d^{\bigoh(r^{2r+2})}$ possible choices for the set $\SSS$ that properly defines $P^*$.
This bound already implies an
\XP-algorithm. To obtain a fixed-parameter algorithm, it suffices to find the correct guess for $\SSS$ in \FPT-time, which is our next goal.

\subsubsection{Iterative Sunflower Harvesting}

We are now ready to describe the iterative sunflower harvesting procedure, which allows us to obtain
the desired \FPT{}-algorithm.
Namely, we show that instead of enumerating all $d^{r^{2r+2}}$
possible sets $\SSS$ of $r_i$-\core{s} to find the one that properly
defines $P^*\setminus D_M$, it suffices to enumerate only
$f(r,\MEP(M))$-many ``important'' $r_i$-\core{s} for each choice of
$\vec{v}^*$ and $\vec{u}^*$ (recall that $\vec{v}^*$ and $\vec{u}^*$
are the two fixed vectors in $P^*\setminus D_M$  that were guessed),
where $f$ is some function that depends only on $r$ and
$\MEP(M)$. Moreover, we can enumerate these possibilities in
\FPT-time.

\newcommand{\someCompletion}[1]{\ensuremath{\zeta}^{#1}}
\ifshort

We compute $\SSS$ by iteratively adding $r_i$-\core{}s
one by one. The main idea is to show that, for any partial solution
$\SSS'$, there is a bounded number of choices for the next
$r_i$-\core{} to add. As a first step in this direction, the following
lemma shows that for $\SSS'$, there is
always a ``large'' $r_i$-\core{} $(S_i,r_i)$ that can be added to
$\SSS'$, i.e., of size at least
a $(2^rr^{2r+1})$-fraction of the remaining vectors. Before we state
the lemma, we introduce the following notations.
If $\vec{w}$ is compatible with $\SSS$, we
will denote by $\someCompletion{\SSS}(\vec{w})$ the set of witnesses
of compatibility for $\vec{w}$ and $\SSS$. 
Recall that,
for a vector $\vec{w}\in \{0,1,\blank \}^d$ and multiset $C$ of
vectors from $\{0,1\}^d$, $\allowedSet_C(\vec{w})$ denotes the set of
all completions of vector $\vec{w}$ at distance at most $r_{\max}$ to
all vectors in $C$, \ie, $\max_{\vec{c}\in
  C}\{\HDIST(\vec{c},\vec{c}_w)\}\le r_{\max}$.

\begin{LEM}[$\spadesuit$]
  \label{lem:largeCore}
Let $P^*$ be a maximum \DCLUS{} in $(M,k,r)$, $\SSS$ the set
  of $r_i$-\core{s} that properly define $P^*\setminus D_M$, and
  $\SSS'\subseteq \SSS$. Moreover, let $C'$ be the multiset of vectors
  $\vec{w}$ in $M\setminus D_M$ with
  $\someCompletion{\SSS'}(\vec{w})\neq \emptyset$ and $C$ the
  multiset containing a vector
  $\vec{w}_c\in\someCompletion{\SSS'}(\vec{w})$ for every
  $\vec{w}\in C'$.
  Finally, let $M'$ be the multiset consisting of all
  the vectors $\vec{w}\in M\setminus (C\cup D_M)$ with
  $\allowedSet_C(\vec{w})\neq \emptyset$.
  Then there exists
  $(S_i,r_i)\subseteq \SSS\setminus\SSS'$ such that at least
  $(|M'|/2^{r_{\max}}-|D_M|)/r^{2r+1}$ vectors in $M'$ are
  compatible with $(S_i,r_i)$.   \end{LEM}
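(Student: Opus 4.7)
The plan is to combine a bucketing argument on completions of vectors in $M'$ with the maximality of $P^*$ and a final pigeonhole over the cores in $\SSS\setminus\SSS'$. First, for every $\vec{w}\in M'$ I would fix an arbitrary witness $\vec{w}^\circ\in\allowedSet_C(\vec{w})$. Assuming (as the iterative procedure ensures by initialization) that the completions of $\vec{v}^*=\vec{0}$ and $\vec{u}^*$ both belong to $C$, the completion $\vec{w}^\circ$ lies at Hamming distance at most $r_{\max}$ from each of them. A short direct calculation (in the spirit of the one in Lemma~\ref{lem:krcom-prune}) then forces $\vec{w}^\circ$ to have at most $r_{\max}/2$ ones outside the set $D=\HSET(\vec{u}^*)$, which has size exactly $r_{\max}$.

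Next, I would bucket the vectors of $M'$ by the projection $\vec{w}^\circ[D]\in\{0,1\}^{r_{\max}}$, yielding at most $2^{r_{\max}}$ buckets and, by pigeonhole, a bucket $B$ of size $|B|\ge|M'|/2^{r_{\max}}$. Inside $B$ all chosen completions agree on $D$ and each has at most $r_{\max}/2$ ones on $[d]\setminus D$, so any two of them are at distance at most $r_{\max}$; moreover each such completion is within distance $r_{\max}$ of every vector of $C$ by definition of $\allowedSet_C$; and by Lemma~\ref{lem:largeSunflowerArbitrary} the vectors of $C$ are at pairwise distance at most $r$. Hence the completions corresponding to $B\cup C$ form a \DCLUS{} of diameter at most $r$, and maximality of $P^*$ yields $|B|+|C'|\le|P^*|\le|P^*\setminus D_M|+|D_M|$.

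I would then partition $P^*\setminus D_M=P_A\sqcup P_B$, where $P_A$ collects the vectors compatible with $\SSS'$ and $P_B$ the rest. Since $C'$ comprises every vector of $M\setminus D_M$ compatible with $\SSS'$, we have $|P_A|\le|C'|$, so the inequality above gives $|P_B|\ge|B|-|D_M|\ge|M'|/2^{r_{\max}}-|D_M|$. Because $\SSS$ properly defines $P^*\setminus D_M$, every $\vec{w}\in P_B$ is compatible with some core in $\SSS\setminus\SSS'$; pigeonhole over the at most $r^{2r+1}$ such cores thus selects one $(S_i,r_i)$ compatible with at least $(|M'|/2^{r_{\max}}-|D_M|)/r^{2r+1}$ vectors of $P_B$.

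The main obstacle I anticipate is arguing that these $P_B$-vectors actually lie in $M'$, i.e., that their $P^*$-completions meet the $\allowedSet_C$ constraint. I plan to resolve this by selecting the maximum cluster $P^*$ so that the completions in $C$ form a subset of $P^*$; such a $P^*$ exists because $C$ is itself a \DCLUS{} by Lemma~\ref{lem:largeSunflowerArbitrary} and can therefore be extended to a maximum \DCLUS{} of the same size. Under this choice, the $P^*$-completion of each $\vec{w}\in P_B$ is at distance at most $r_{\max}$ from every $\vec{c}\in C\subseteq P^*\setminus D_M$, since $\vec{v}^*$ and $\vec{u}^*$ realize the maximum pairwise distance in $P^*\setminus D_M$. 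This places $P_B\subseteq M'$ and completes the argument.
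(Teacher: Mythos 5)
Your proof follows the same blueprint as the paper's: (1) exhibit a large sub-multiset of $M'$ that can be completed to a \DCLUS{} together with $C$, (2) invoke maximality of $P^*$, and (3) pigeonhole over the at most $r^{2r+1}$ cores in $\SSS\setminus\SSS'$. Your bucketing on $\vec{w}^\circ[\HSET(\vec{u}^*)]$ and the calculation that each chosen completion has at most $r_{\max}/2$ ones outside $\HSET(\vec{u}^*)$ are in effect a re-derivation of Lemma~\ref{lem:largeSubClique}, which the paper instead cites directly. That part is correct (in fact slightly tighter than the paper's case analysis). The counting steps $|B|+|C'|\le|P^*|$ and $|P_B|\ge|B|-|D_M|$ are also sound.

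The genuine gap is in your resolution of the obstacle you yourself flag, namely that $P_B\subseteq M'$. You propose to re-select $P^*$ so that $C\subseteq P^*$, justifying this by saying that $C$ is itself a \DCLUS{} and ``can therefore be extended to a maximum \DCLUS{} of the same size.'' This does not follow: being a cluster does not imply extendability to a \emph{maximum} one, and the natural way to make it plausible (swap the $C'$-completions inside $P^*$ for the vectors of $C$) is not justified against the vectors in $P^*\cap D_M$, for which Lemma~\ref{lem:largeSunflowerArbitrary} gives no distance bound because they need not be compatible with $\SSS$. Worse, even if a maximum cluster containing $C$ existed, you would lose the invariants that $\vec{v}^*,\vec{u}^*$ are farthest apart in the new $P^*\setminus D_M$ and that $\SSS$ properly defines it, both of which are fixed at the start of the analysis and used throughout (including in your own earlier steps). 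The paper avoids this entirely: every vector of $P^*\setminus(D_M\cup C')$ is compatible with $\SSS$ (by ``properly defines''), and every vector of $C$ is compatible with $\SSS'\subseteq\SSS$, so Lemma~\ref{lem:largeSunflowerArbitrary} bounds all pairwise distances directly and places $P^*\setminus(D_M\cup C')$ inside $M'$ without touching $P^*$. You should replace the re-selection step by this direct appeal to Lemma~\ref{lem:largeSunflowerArbitrary}.
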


Note that each normalised vector $\vec{w}$ can be compatible with at most $2^{r+\MEP(M)}$
$r_i$-\core{}s $(S_i,r_i)$, since $S_i \subseteq \HSET(\vec{w}^*)$ for
some completion $\vec{w}^*$ of $\vec{w}$. Now it follows from a counting argument
that the number of large $r_i$-centers is at most
$2^{r+\MEP(M)}(2^rr^{2r+1})=2^{2r+\MEP(M)}r^{2r+1}$ and those can be
enumerated in time $\bigoh(2^{r+\MEP(M)}|M|)$ ($\spadesuit$).
By Observation~\ref{obs:properlyDefines},
$|\SSS|$ and hence the depth of the branching algorithm, is at most
$r^{2r+1}$, which implies the following theorem.
\begin{THE}[$\spadesuit$] 
\label{thm:diamfpt}
	\DIAMCq{} is fixed-parameter tractable parameterized by $r+\MEP(M)$.
\end{THE}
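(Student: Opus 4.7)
The plan is to combine the three main ingredients developed in the excerpt (the Turing kernel of Theorem~\ref{thm:DIAM_k_r_comb}, the succinct representation of maximal clusters via sets of $r_i$-\core{s}, and the iterative sunflower harvesting captured by Lemma~\ref{lem:largeCore}) into a branching algorithm that locates the hypothetical set $\SSS$ properly defining $P^*\setminus D_M$. First I would dispatch the ``small $k$'' regime: if $k<\MEP(M)+2$, Theorem~\ref{thm:DIAM_k_r_comb} yields a Turing kernel whose size is bounded by a function of $r+\MEP(M)$ alone, and we are done. Otherwise, I would invoke the preprocessing already spelled out in the excerpt: compute a deletion set $D_M$, branch over all subsets $P_R\subseteq D_M$ (at most $2^{\MEP(M)}$ choices) that are intended to lie in the solution, and then guess two vectors $\vec{v},\vec{u}\in M\setminus D_M$ together with their completions $\vec{v}^*,\vec{u}^*$ playing the role of the two farthest vectors of $P^*\setminus D_M$. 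Normalizing so that $\vec{v}^*=\vec{0}$ and filtering via $\allowedSet(\vec{w})$ leaves us in the setting where Lemmas~\ref{lem:r-uniform-PAIR}--\ref{lem:largeCore} apply.

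The core of the algorithm is an iterative procedure that builds $\SSS$ one $r_i$-\core{} at a time. Starting from $\SSS'=\emptyset$, at each step I maintain the multiset $C$ of already-committed witnesses (one completion per vector compatible with the current $\SSS'$) and the multiset $M'$ of remaining vectors that still admit a completion within distance $r_{\max}$ of every vector in $C$. If $|M'|$ is at most the bound $2^{r_{\max}}(|D_M|+k)$ we stop, since the remaining instance reduces to the kernel of Theorem~\ref{thm:DIAM_k_r_comb}. Otherwise, Lemma~\ref{lem:largeCore} guarantees that $\SSS\setminus\SSS'$ contains an $r_i$-\core{} of size at least $(|M'|/2^{r_{\max}}-|D_M|)/r^{2r+1}$ among the vectors of $M'$. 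Because each vector is compatible with at most $2^{r+\MEP(M)}$ $r_i$-\core{s}, a standard double counting shows that the number of such ``large'' candidate cores is at most $2^{2r+\MEP(M)}r^{2r+1}$, and all of them can be enumerated in \FPT-time by iterating over every $\vec{w}\in M'$, every completion of $\vec{w}$, and every subset of its support.

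The recursion then branches on which of these candidate cores to append to $\SSS'$. Since Observation~\ref{obs:properlyDefines} bounds $|\SSS|$ by $r^{2r+1}$, the search tree has depth at most $r^{2r+1}$ and branching factor at most $2^{2r+\MEP(M)}r^{2r+1}$, yielding in total at most $\bigl(2^{2r+\MEP(M)}r^{2r+1}\bigr)^{r^{2r+1}}$ leaves, which is a function of $r+\MEP(M)$. At each leaf I check whether the current $\SSS'$ properly defines a cluster of the required size: by Lemma~\ref{lem:largeSunflowerArbitrary}, \emph{any} choice of witnesses for the vectors compatible with $\SSS'$ yields a \DCLUS{} of diameter at most $r$, so it suffices to greedily complete each compatible vector and verify that the resulting multiset, together with $P_R$, has cardinality at least $k$. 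Multiplying the branching cost with the polynomial bounds incurred during preprocessing gives the desired \FPT\ running time.

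The main obstacle I expect is showing that the iterative procedure cannot miss the correct $\SSS$: one needs to argue that at every step some large candidate core really does belong to $\SSS\setminus\SSS'$, which is precisely what Lemma~\ref{lem:largeCore} provides, and to account for the fact that several distinct $\SSS$'s may properly define the same optimum cluster (so we may terminate as soon as any valid one is discovered). With those points in place, correctness follows from the fact that a properly defining $\SSS$ is guaranteed to exist by Observation~\ref{obs:properlyDefines}, and every completion consistent with it produces a genuine solution by Lemma~\ref{lem:largeSunflowerArbitrary}.
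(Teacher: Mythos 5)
Your overall approach matches the paper's: dispatch small $k$ via Theorem~\ref{thm:DIAM_k_r_comb}, normalize after guessing the two farthest-apart vectors, and iteratively ``harvest'' large sunflower cores via Lemma~\ref{lem:largeCore}, stopping after at most $r^{2r+1}$ additions and verifying the accumulated cluster via Lemma~\ref{lem:largeSunflowerArbitrary}.

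There is, however, a genuine gap in how you handle the deletion-set vectors $P_R \subseteq D_M$ at the leaves. You write that at each leaf it ``suffices to greedily complete each compatible vector and verify that the resulting multiset, together with $P_R$, has cardinality at least $k$'' --- but this silently assumes that every vector in $P_R$ can be completed so that it lies within distance $r$ of every vector already placed in the cluster. The vectors in $D_M$ can have arbitrarily many (up to $d$) missing entries, so one cannot enumerate their completions naively, and it is not obvious that \emph{any} completion is admissible. The paper closes this hole with a dedicated claim: for every $\vec{w}\in P_R$, any coordinate $j$ outside $\HSET(\vec{u})\cup\bigcup_{(S',r')\in\SSS_i}S'$ can be safely set to $0$, since the distance from $\vec{w}$ to a compatible vector $\vec{c}\in C_i$ is dominated by its distance to a farthest-away member of the sunflower with core $S'$ (another application of Lemma~\ref{lem:CLUS-SF-BASIC-A}). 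That restricts the relevant completions of each $P_R$-vector to a set of coordinates of size $\bigoh(r\cdot r^{2r+1})$, yielding only $r^{\bigoh(r^2|P_R|)}$ completions to try --- a bounded quantity since $|P_R|\le\lambda(M)$. Without this argument your algorithm could accept a branch whose $P_R$ has no compatible completion, or miss a solution because it does not know which completions to test.

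A smaller point: your termination rule --- stop when $|M'|\le 2^{r_{\max}}(|D_M|+k)$ and ``reduce to the kernel of Theorem~\ref{thm:DIAM_k_r_comb}'' --- is not what you want here, since $k$ is not a parameter and that kernel bound involves $k$. The paper instead stops when $|M'|\le 2^r|D_M|+1$ and simply outputs the accumulated $(C,\SSS)$ as a candidate; the correctness argument (Lemma~\ref{lem:correctnessAlg}, via Lemma~\ref{lem:largeSubClique}) shows that once $C$ has absorbed all of $P^*\setminus D_M$, the remaining pool $M'$ must already be that small, so nothing is lost. Your leaf check via greedy completion and a cardinality test is the right idea, but the ``reduce to the kernel'' justification is a dead end.
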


\fi

\iflong
The idea is to enumerate possible sets $\SSS$ such that we
compute pairs $(S_i,r_i)$ in $\SSS$ ordered by size from largest to
smallest, give a lower bound on the size of the next largest
$r_i$-\core{}, and show that only $g(r,\MEP(M))$ such large
$r_i$-\core{} exist, for some function $g$. Note that if we guess an
$r_i$-\core{} $(S_i,r_i)$ with $|S_i|<r_i$, then if our guess is
correct, we can, by Lemma~\ref{lem:largeSunflowerArbitrary}, already
complete the vectors that can be completed to $r_i$-vector containing
$S_i$ and include all of them in the \DCLUS{}. The following lemma
shows that if we have some partial \DCLUS{} $C$ computed, then we can
pre-process the remaining instance such that it contains a \DCLUS{}
compatible with $C$ of size at least a $2^r$-fraction of the remaining
instance. This gives us that the intersection of $P^*$ and the
remaining instance has to be large, and in turn, implies that one of
the remaining $r_i$-\core{s} of $P^*$ has size roughly at least
($2^rr^{2r+1}-|D_M|$)-fraction of the remaining instance. Recall that,
for a vector $\vec{w}\in \{0,1,\blank \}^d$ and multiset $C$ of
vectors from $\{0,1\}^d$, $\allowedSet_C(\vec{w})$ denotes the set of
all completions of vector $\vec{w}$ at distance at most $r_{\max}$ to
all vectors in $C$, \ie, $\max_{\vec{c}\in
  C}\{\HDIST(\vec{c},\vec{c}_w)\}\le r_{\max}$.

\iflong
\begin{LEM} 
\fi
\ifshort
\begin{LEM}[$\spadesuit$]
\fi
\label{lem:largeSubClique}
	Let $C$ be a \DCLUS{} of vectors in $\{0,1\}^d$ such that
	$\{\vec{0},\vec{u}\}\subseteq C$ for some $r_{\max}$-vector $\vec{u}\in \{0,1\}^d$, where $r_{\max}\in [r]$. Suppose further that
	$\HDIST(\vec{a},\vec{b})\le r_{\max}$ for every distinct $\vec{a},\vec{b}\in C$, and
	let $M'$ be a multiset of vectors $\vec{w}$ with $\allowedSet_C(\vec{w})\neq \emptyset$. Then there exists $P\subseteq M'$ and a completion $P^*$ of $P$ such that $|P|\ge |M'|/2^{r_{\max}}$ and $P^*\cup C$ is a \DCLUS{} of diameter $r_{\max}$.
\end{LEM}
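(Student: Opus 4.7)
The plan is to exploit the strong structural constraints imposed by the simultaneous presence of $\vec{0}$ and the $r_{\max}$-vector $\vec{u}$ in $C$. Set $U = \HSET(\vec{u})$, a coordinate set of size exactly $r_{\max}$. The first key observation is that every completion $\vec{w}^* \in \allowedSet_C(\vec{w})$ is tightly constrained on and off $U$: writing $X = \HSET(\vec{w}^*) \cap U$ and $Y = \HSET(\vec{w}^*) \setminus U$, the inequalities $\HDIST(\vec{0},\vec{w}^*) \le r_{\max}$ and $\HDIST(\vec{u},\vec{w}^*) \le r_{\max}$ translate into $|X|+|Y| \le r_{\max}$ and $|Y| \le |X|$ respectively. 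Combining them yields $|Y| \le \min(|X|, r_{\max}-|X|) \le r_{\max}/2$, which is the heart of the whole argument.

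Next, I would apply a pigeonhole argument on the restriction to $U$. For each $\vec{w} \in M'$, fix an arbitrary witness $\vec{w}^* \in \allowedSet_C(\vec{w})$ (non-empty by the definition of $M'$), and classify $\vec{w}$ according to $\vec{w}^*|_U \in \{0,1\}^U$. Since $|U| = r_{\max}$, there are at most $2^{r_{\max}}$ classes, so the largest one contains a subset $P \subseteq M'$ of size at least $|M'|/2^{r_{\max}}$, all of whose chosen completions agree on $U$ with a common vector $\vec{x}$. Define $P^* := \{\vec{w}^* \mid \vec{w} \in P\}$.

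It then remains to verify that $P^* \cup C$ has diameter at most $r_{\max}$. Distances within $C$ are bounded by hypothesis, and each pair $(\vec{w}^*, \vec{c})$ with $\vec{w}^* \in P^*$ and $\vec{c} \in C$ has distance at most $r_{\max}$ by the definition of $\allowedSet_C$. For a pair $\vec{w}_1^*, \vec{w}_2^* \in P^*$, both agreeing with $\vec{x}$ on $U$, their Hamming distance reduces to $|Y_1 \triangle Y_2| \le |Y_1| + |Y_2|$, which by the first step is at most $2 \cdot r_{\max}/2 = r_{\max}$.

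The main obstacle is the first step, namely pinning $|Y_i|$ down to $r_{\max}/2$. A naive pigeonhole on the $U$-restriction alone looks too weak, since $Y_1$ and $Y_2$ could a priori be disjoint and large, giving pairwise distances well above $r_{\max}$. It is precisely the interplay of the two distance constraints coming from $\vec{0}$ and $\vec{u}$ both lying in $C$ that forces $|Y_i| \le r_{\max}/2$, and this is exactly what makes the crude symmetric-difference union bound tight enough to close the argument.
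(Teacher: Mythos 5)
Your proposal is correct and follows essentially the same route as the paper's proof: partition $M'$ by the trace of a chosen witness completion on $U=\HSET(\vec{u})$ ($2^{r_{\max}}$ classes), then bound the pairwise distance within a class by observing that the two distance constraints from $\vec{0}$ and $\vec{u}$ force $|\HSET(\vec{w}^*)\setminus U|\le\min(|S|,r_{\max}-|S|)\le r_{\max}/2$. The paper's write-up phrases this via a case split on $|S|$ versus $r_{\max}/2$, but your direct derivation of $|Y|\le r_{\max}/2$ and the subsequent symmetric-difference bound is the same argument stated more cleanly.
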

\iflong
\begin{proof}
	Fix a completion $M^*$ of $M'$ such that each vector $\vec{w}\in M'$ is completed to $\vec{c}_w\in \allowedSet_C(\vec{w})$. Let $S\subseteq \HSET(\vec{u})$, and let $M^*_S$ be the subset of $M^*$ containing precisely all the vectors $\vec{w}$ such that $\HSET(\vec{w})\cap \HSET(\vec{u})=S$. We claim that $C\cup M^*_S$ is a \DCLUS{}. First, note that $\HDIST(\vec{a},\vec{b})\le r_{\max}$ for every distinct $\vec{a},\vec{b}\in C$ and also for every distinct $\vec{a}\in C$ and $\vec{b}\in M^*_S$. It remains to show that $\HDIST(\vec{a},\vec{b})\le r_{\max}$ for every distinct $\vec{a},\vec{b}\in M^*_S$. 	Clearly, $\HDIST(\vec{a},\vec{b})\le |\HSET(\vec{a})|+|\HSET(\vec{b})|-2|S|$. Now $|\HSET(\vec{a})|\le r_{\max}$ and $|\HSET(\vec{b})|\le r_{\max}$ because $\vec{0}\in C$ and hence $|\HSET(\vec{a})|-|S|\le r_{\max}-|S|$ and $|\HSET(\vec{b})|-|S|\le r_{\max}-|S|$. Moreover $\HDIST(\vec{a},\vec{u})= r_{\max}-|S|+|\HSET(\vec{a})|-|S|\le r_{\max}$. Therefore, $|\HSET(\vec{a})|-|S|\le |S|$ and similarly $|\HSET(\vec{b})|-|S|\le |S|$. Hence $\HDIST(\vec{a},\vec{b})\le |\HSET(\vec{a})|+|\HSET(\vec{b})|-2|S|\le \min (2|S|, 2r_{\max}-2|S|)$. It follows that either $|S|\le r_{\max}/2$ and $\HDIST(\vec{a},\vec{b})\le r_{\max}$ or $|S|\ge r_{\max}/2$ and $\HDIST(\vec{a},\vec{b})\le 2r_{\max}-2|S|\le r$. Since there are $2^{r_{\max}}$ possibilities for the set $S$, and each vector in $M^*$ is in one of sets $M^*_S$, the lemma follows.
\end{proof}
\fi

As a corollary, we obtain the following lemma.
Before we state the lemma, we introduce some additional notations that will be useful.
Let $\SSS=\{(S_1,r_1),\ldots,  (S_\ell,r_\ell) \}$ and let $\vec{w}\in \{0,1,\blank\}$. We say that $\vec{w}$ is \emph{compatible} with $\SSS$ if there exists a completion $\vec{w}^*\in \{0,1\}^d$ of $\vec{w}$, called \emph{witness of compatibility}, and a pair $(S_i,r_i)\in \SSS$ such that $S_i\subseteq \HSET(\vec{w}^*)$ and $|\HSET(\vec{w}^*)|=r_i$. If $\vec{w}$ is compatible with $\SSS$, we will denote by $\someCompletion{\SSS}(\vec{w})$ the set of witnesses of compatibility for $\vec{w}$ and $\SSS$. If $\vec{w}$ is compatible with $\{(S',r')\}$, we simple say $\vec{w}$ is compatible with $(S',r')$.
\iflong
\begin{LEM} 
\fi
\ifshort
\begin{LEM}[$\spadesuit$]
\fi
\label{lem:largeCore}
	Let $P^*$ be a maximum \DCLUS{} in $(M,k,r)$, $\SSS$ the set of $r_i$-\core{s} that properly define $P^*\setminus D_M$, $\SSS'\subseteq \SSS$, let $C'$ be the multiset of vectors $\vec{w}$ in $M\setminus D_M$ with $\someCompletion{\SSS'}(\vec{w})\neq \emptyset$ and $C$ the multiset containing a vector $\vec{w}_c\in\someCompletion{\SSS'}(\vec{w})$ for every $\vec{w}\in C'$. Let $M'$ be the multiset consisting of all the vectors $\vec{w}\in M\setminus (C\cup D_M)$ with $\allowedSet_C(\vec{w})\neq \emptyset$. Then there exists $(S_i,r_i)\subseteq \SSS\setminus\SSS'$ such that at least $(|M'|/2^{r_{\max}}-|D_M|)/r^{2r+1}$ vectors in $M'$ are compatible with $(S_i,r_i)$. \end{LEM}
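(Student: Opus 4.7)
The plan is to combine Lemma~\ref{lem:largeSubClique} with a maximality argument for $P^*$. First, Lemma~\ref{lem:largeSubClique} will be applied to the partial cluster $C$ together with the vectors in $M'$, yielding a large subset $P\subseteq M'$ with $|P|\ge |M'|/2^{r_{\max}}$ whose chosen completion $\tilde{P}$ extends $C$ into a \DCLUS{} of diameter at most $r_{\max}$. The maximality of $P^*$ then forces all but at most $|D_M|$ of these completions to actually lie in $P^*\setminus D_M$, and therefore to be compatible with $\SSS$. A final pigeonhole step over the at most $r^{2r+1}$ cores in $\SSS\setminus\SSS'$ delivers the claimed single-core bound.

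To set up Lemma~\ref{lem:largeSubClique}, I would first verify that $C$ contains $\vec{0}$ and the normalized $r_{\max}$-vector $\vec{u}^*$; this is arranged at the outset of the iterative procedure by initializing $\SSS'$ to contain the $r_i$-\core{}s corresponding to $\vec{v}^*$ and $\vec{u}^*$, so both vectors always appear in $C$. The more subtle task is to bound all pairwise distances in $C$ by $r_{\max}$. Each vector in $C$ is a completion witnessing compatibility with some core in $\SSS'\subseteq\SSS$; Lemma~\ref{lem:largeSunflowerArbitrary} already gives pairwise distance at most $r$, and strengthening this to $r_{\max}$ uses that $\SSS$ \emph{properly} defines $P^*\setminus D_M$, so each core is witnessed either by a unique vector already in $P^*\setminus D_M$ or by a sunflower in $P^*\setminus D_M$ with at least $r+1$ petals, and these witnesses sit inside $P^*\setminus D_M$ where all pairwise distances are at most $r_{\max}$. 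A Lemma~\ref{lem:CLUS-SF-BASIC-A}-style argument then transfers this bound to the chosen completions in $C$.

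For the maximality step, consider any $\vec{w}\in P$ and let $\vec{w}^{**}\in \tilde{P}$ be its chosen completion. If $\vec{w}^{**}$ had distance at most $r$ to every vector of $P^*$, then swapping the completion of $\vec{w}$ inside $M^*$ for $\vec{w}^{**}$ would produce a strictly larger \DCLUS{}, contradicting the maximality of $P^*$; hence $\vec{w}^{**}\in P^*$ and $\vec{w}$ is compatible with $\SSS$. The cluster $\tilde{P}\cup C$ controls the distance from $\vec{w}^{**}$ to all vectors in $C$, but not to the vectors in $P^*\cap D_M$; since there are at most $|D_M|$ such vectors, this leaves at most $|D_M|$ exceptional $\vec{w}\in P$. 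Because $P\subseteq M\setminus (C\cup D_M)$, none of the non-exceptional $\vec{w}$ can be compatible with $\SSS'$, so compatibility must be witnessed by some core in $\SSS\setminus\SSS'$. Observation~\ref{obs:properlyDefines} bounds $|\SSS\setminus\SSS'|\le r^{2r+1}$, and a pigeonhole argument then yields a single $r_i$-\core{} covering at least $(|M'|/2^{r_{\max}}-|D_M|)/r^{2r+1}$ vectors of $M'$, as required. The main obstacle I anticipate is in executing the swap step rigorously: one must carefully track how the completion $M^*$ interacts with the chosen completion $\tilde{P}$ and with the freely-completed vectors in $D_M$, and show that precisely the $|D_M|$ slack is sufficient to absorb all boundary losses.
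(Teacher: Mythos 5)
The overall plan---apply Lemma~\ref{lem:largeSubClique}, then a maximality argument, then a pigeonhole over the at most $r^{2r+1}$ cores in $\SSS\setminus\SSS'$---matches the paper, but your execution of the maximality step has a genuine gap.

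You reason via a ``swap'': for each $\vec{w}\in P$ with chosen completion $\vec{w}^{**}\in\tilde{P}$, if $\vec{w}^{**}$ is at distance at most $r$ from \emph{every} vector of $P^*$, then maximality forces $\vec{w}$ to already have a completion in $P^*$, hence to be compatible with $\SSS$. To invoke this you must control $\HDIST(\vec{w}^{**},\vec{p})$ for \emph{all} $\vec{p}\in P^*$. You observe that $\tilde{P}\cup C$ controls the distance to $C$ and that the distance to $P^*\cap D_M$ is uncontrolled, and you conclude ``this leaves at most $|D_M|$ exceptional $\vec{w}$''. This fails in two ways. First, you have left out the vectors of $P^*\setminus(D_M\cup C')$: these are in $P^*$, are not in $D_M$, and their completions inside $P^*$ are \emph{not} the representatives chosen in $C$, so $\tilde{P}\cup C$ being a \DCLUS{} says nothing about $\HDIST(\vec{w}^{**},\vec{p})$ for such $\vec{p}$; there may be arbitrarily many of them. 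Second, even restricted to $P^*\cap D_M$, ``at most $|D_M|$ bad targets'' does not imply ``at most $|D_M|$ exceptional $\vec{w}$'': a single $\vec{p}\in P^*\cap D_M$ can be at distance greater than $r$ from a large fraction (even all) of the completions $\vec{w}^{**}$, so each bad target can create many exceptions. The $-|D_M|$ slack in the target bound therefore does not come out of this count.

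The paper's argument avoids both pitfalls by never trying to place individual vectors of $P'$ inside $P^*$. It uses Lemma~\ref{lem:largeSunflowerArbitrary} in the opposite direction: every vector of $P^*\setminus(D_M\cup C')$ is compatible with $\SSS$, hence (by that lemma) at distance at most $r_{\max}$ from all of $C$, hence belongs to $M'$. It then compares sizes only: $C\cup P'$ is a \DCLUS{} of size $|C|+|P'|$, so by maximality $|P^*|\geq|C|+|P'|$; on the other hand $|P^*|\leq |P^*\cap D_M|+|P^*\cap C'|+|P^*\setminus(D_M\cup C')|\leq |D_M|+|C|+|P^*\setminus(D_M\cup C')|$, giving $|P^*\setminus(D_M\cup C')|\geq |P'|-|D_M|$. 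The $|D_M|$ slack arises cleanly as an upper bound on $|P^*\cap D_M|$, and the pigeonhole then applies to the set $P^*\setminus(D_M\cup C')\subseteq M'$ of $\SSS\setminus\SSS'$-compatible vectors. Your preparatory observations about initializing $\SSS'$ so that $\vec{0},\vec{u}^*\in C$ and about strengthening the pairwise diameter bound on $C$ to $r_{\max}$ flag real points that the paper glosses over, but they do not rescue the faulty counting in the maximality step, which is where the argument breaks.
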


\iflong
\begin{proof}
	By Lemma~\ref{lem:largeSubClique}, $M'$ contains a \DCLUS{} $P'$ of size at least $|M'|/2^{r_{\max}}$ such that $C\cup P'$ can be completed to a \DCLUS{}.
	From Lemma~\ref{lem:largeSunflowerArbitrary}, it follows that
        every vector in $P^*\setminus (D_M\cup C')$\footnote{With a
          slight abuse of notation, we will allow set operations between sets of completed and uncompleted vectors assuming the natural bijection between completed and uncompleted vectors.} is at distance at most $r_{\max}$ to every vector in $C$ and hence it is in $M'$. Hence $C\cup (P^*\setminus (D_M\cup C'))$ is a \DCLUS{} of size $|P^*\setminus D_M|$. Because $C\cup P'$ can be completed to a \DCLUS{}, it follows from maximality of $P^*$ that $|P^*\setminus (D_M\cup C')|\ge |P'|-|D_M|\ge |M'|/2^r-|D_M|$. But every vector in $P^*\setminus (D_M\cup C')$ is compatible with some $(S_i,r_i)\in\SSS\setminus \SSS'$ and
	$|\SSS\setminus\SSS'|\le |\SSS|\le r^{2r+1}$ and the lemma follows.
\end{proof}
\fi

\newcommand{\enumClusters}{\textbf{FindClusters}}

\iflong
The algorithm now enumerates all ``important'' sets of $r_i$-\core{s} by calling procedure \enumClusters$(\{\vec{0}, \vec{u}^*\}, M, \emptyset)$ described in Algorithm~\ref{alg:FindMaxClique}.

\begin{algorithm}[h!]
	\SetAlgoLined
	\KwData{		a multiset $C$ of vectors from $\{0,1\}^d$,
		a multiset $M$ of vectors from $\{0,1,\blank\}^d$ with at most $r_{\max}$ ones,
		and a set $\SSS=\{(S_1,r_1), (S_2,r_2),\ldots, (S_q,r_q)\}$. Moreover, each vector $\vec{c}\in C$ is associated with a distinct vector $\vec{w}_c\in M$ such that $c$ is a completion of $\vec{w}_c$.}
	\KwResult{A set $\CCC = \{(C_1,\SSS_1), \ldots, (C_\ell,\SSS_\ell) \}$}
	\If{$|\SSS|>r^{2r+1}$}{\Return $\emptyset$;}
	$\CCC=\emptyset$;\\
	Let $M'$ be the set of vectors $\vec{w}$ in $M\setminus C$ with
	$\allowedSet_C(\vec{w})\neq\emptyset$;\label{algline:1.5}\\
	\If{$|M'|\le 2^r\cdot|D_M|+1$\label{algline:1.6}}{Add $\{(C,\SSS)\}$ to $\CCC$;\label{algline:1.7}}
	\ForEach{$S$ such that there exists $\vec{w}\in M'$ with $S\subseteq \{i\mid \vec{w}[i]=1\vee \vec{w}[i]=\blank \}$\label{algline:1.9}}{ 	
		\ForEach{$r'\in\{|S|,\ldots, r\}$\label{algline:1.10}}{
			Let $V$ be the multiset of vectors $\vec{a}$ in $M'$ with $\someCompletion{\{(S,r')\}}(\vec{a})\cap \allowedSet_C(\vec{a})\neq \emptyset$;\label{algline:1.11}\\
			Let $V'$ be the multiset that contains for each $\vec{a}\in V$ arbitrary one vector $\vec{a}'\in \someCompletion{(S,r')}(\vec{a})\cap \allowedSet_C(\vec{a})$;\label{algline:1.12}\\
			\If{$|V'|\ge ({|M'|}/2^r-|D_M|)/r^{2r+1}$\label{algline:1.13}}{
				Let $\CCC=\CCC\cup\enumClusters(C\cup V', M, \SSS\cup (S,r'))$;\label{algline:1.14}
			}
	}}
	\Return $\CCC$;

	\caption{The procedure {\enumClusters($C$, $M$, $\SSS$)}.}\label{alg:FindMaxClique}
	
\end{algorithm}
\fi

\iflong
\iflong \begin{LEM} \fi \ifshort \begin{LEM}[$\spadesuit$] \fi\label{lem:runningTimeAlg}
	Let $(M,k,r)$ be an instance of \DIAMCq{}. 	Then the procedure \enumClusters($\{\vec{0},\vec{u}\}$, $M\setminus D_M$, $\emptyset$) runs in \FPT-time parameterized by $r+\MEP(M)$.
\end{LEM}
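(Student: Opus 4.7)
The plan is to separately bound the depth of the recursion tree and the branching factor at each internal node, and then observe that the non-recursive work at each call is polynomial in the input size. Together, these will yield the desired \FPT{}\hy running time in $r+\MEP(M)$.

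The depth bound is immediate from the guard on line~1: every recursive call increases $|\SSS|$ by exactly one (line~\ref{algline:1.14}), and the algorithm returns as soon as $|\SSS|>r^{2r+1}$. Hence any root-to-leaf branch has length at most $r^{2r+1}$.

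The main obstacle is bounding the branching factor, i.e., the number of pairs $(S,r')$ on which a single call actually recurses. I would proceed by a double-counting argument using the preprocessing of the input. Since the vectors have been normalized so that $\vec{v}^*=\vec{0}$ is in the cluster, every vector $\vec{w}$ that survives in $M'$ has at most $r_{\max}\le r$ ones and at most $\MEP(M)$ missing coordinates, so $|\{i:\vec{w}[i]=1\vee\vec{w}[i]=\blank\}|\le r+\MEP(M)$. For each such $\vec{w}$, the number of pairs $(S,r')$ with $\vec{w}\in V(S,r')$ (computed on line~\ref{algline:1.11}) is therefore at most $(r+1)\cdot 2^{r+\MEP(M)}$: at most $2^{r+\MEP(M)}$ subsets $S$ of the potential one-coordinates, and at most $r+1$ valid values of $r'$ for each. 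Summing over all $\vec{w}\in M'$ yields
\[
\sum_{(S,r')}|V(S,r')|\;\le\;|M'|\cdot (r+1)\cdot 2^{r+\MEP(M)}.
\]
Now split into two cases according to the size of $M'$. If $|M'|\le 2^{r+1}\cdot\MEP(M)+2$, then $|M'|$ itself is bounded by a function of the parameter, so the number of candidate pairs $(S,r')$ enumerated on lines~\ref{algline:1.9}--\ref{algline:1.10} is already bounded by a function of $r+\MEP(M)$. Otherwise $|M'|>2^{r+1}|D_M|$, which using $|D_M|\le\MEP(M)$ gives $|M'|/2^r-|D_M|\ge |M'|/2^{r+1}$, so the threshold on line~\ref{algline:1.13} satisfies $(|M'|/2^r-|D_M|)/r^{2r+1}\ge |M'|/(2^{r+1}r^{2r+1})$. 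Dividing the displayed inequality by this lower bound shows that the number of pairs $(S,r')$ with $|V(S,r')|\ge |M'|/(2^{r+1}r^{2r+1})$ is at most $(r+1)\cdot 2^{r+\MEP(M)}\cdot 2^{r+1}\cdot r^{2r+1}$, which is again bounded by a function of $r+\MEP(M)$.

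Finally, the non-recursive work of a single call is polynomial: computing $M'$ on line~\ref{algline:1.5}, enumerating at most $|M'|\cdot 2^{r+\MEP(M)}$ candidate subsets $S$ on line~\ref{algline:1.9}, and computing $V$, $V'$ and $\allowedSet_C$ on lines~\ref{algline:1.11}--\ref{algline:1.12} (each requiring only $2^{\MEP(M)}\cdot |M|^{\bigoh(1)}$ time using the precomputed sets $\allowedSet$). Combining the depth bound $r^{2r+1}$, a branching factor of the form $h(r,\MEP(M))$, and polynomial per-node work yields a total running time of $h(r,\MEP(M))^{r^{2r+1}}\cdot |M|^{\bigoh(1)}$, establishing fixed-parameter tractability in $r+\MEP(M)$.
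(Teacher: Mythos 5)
Your proposal is correct and takes essentially the same approach as the paper's proof: bound the recursion depth via the guard on $|\SSS|$, bound the branching factor via the double-counting argument that a surviving vector $\vec{w}$ with at most $r+\MEP(M)$ potential one-coordinates can only appear in $V(S,r')$ for at most $2^{r+\MEP(M)}\cdot(r+1)$ pairs (the paper folds the $r'$ factor in slightly differently), and then split on whether $|M'|$ is small (bounded by the parameter) or large enough that the line~\ref{algline:1.13} threshold is $\Omega(|M'|/(2^{O(r)}r^{2r+1}))$ so the displayed sum caps the number of surviving pairs. The paper's case threshold is $2^{2r}|D_M|$ rather than your $2^{r+1}|D_M|$, and the paper packages the recursion as an explicit induction on $r^{2r+1}+1-|\SSS|$ rather than separate depth/branching bounds, but these are cosmetic differences.
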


\iflong \begin{proof}
	We will prove by induction on $r^{2r+1}+1-|\SSS|$ that the algorithm \enumClusters$(C\cup\{\vec{0}, \vec{u}\}, M\setminus D_M, \SSS)$ runs in time $(r^{2r+1}\cdot 2^{4r+\MEP(M)}\cdot |D_M|)^{r^{2r+1}+1-|\SSS|}\cdot |M|^3\cdot d$.

	Clearly, if $r^{2r+1}+1-|\SSS| \le 0$ then the algorithm outputs $\emptyset$ in constant time. Otherwise, the algorithm proceeds to computing the set $M'$. Note that $\vec{w}\in M\setminus D_M$ has at most $\MEP(M)$ $\blank$'s and at most $2^{\MEP(M)}$ completions. For each, we can in time $\bigoh(d\cdot |C|)$ verify if it is at distance at most $r_{\max}$ to all vectors in $C$. The algorithm then continues to go over all pairs $(S,r')$ such that $r'\in [r]$ and there exists $\vec{w}\in M'$ with $S\subseteq \{i\mid \vec{w}[i]=1\vee \vec{w}[i]=\blank \}$. Since $\vec{w}$ has at most $r_{\max}\le r$ ones and at most $\MEP(M)$ $\blank$'s, there are at most $2^{r+\MEP(M)}\cdot |M'|\cdot r$ pairs $(S,r')$. For each, we can compute sets $V$ and $V'$ in time $\bigoh(2^{\MEP(M)}\cdot |M'|\cdot |C|)$.
	If $|V'|< (|M'|/2^r-D_M)/r^{2r+1}$, the algorithm continues to a different choice of $(S,r')$; otherwise, it calls the subroutine \enumClusters$(C\cup\{\vec{0}, \vec{u}\}\cup V', M\setminus D_M, \SSS\cup \{(S,r')\})$. It remains to show that the subroutine \enumClusters$(C\cup\{\vec{0}, \vec{u}\}\cup V', M\setminus D_M, \SSS\cup \{(S,r')\})$ is called in at most $r^{2r+1}\cdot 2^{4r+\MEP(M)}\cdot |D_M|$ branches, because the running time we get from our inductive hypothesis for \enumClusters$(C'\cup\{\vec{0}, \vec{u}\}, M\setminus D_M, \SSS')$, with $|\SSS'|=|\SSS|+1$ dominates the running time so far.
	
	To do so, let us fix a vector $\vec{w}\in M'$ and let us compute in how many sets $V$ it can appear. Vector $\vec{w}$ can appear in $V$ for some pair $(S,r')$ only if $\someCompletion{(S,r')}(\vec{w})\neq \emptyset$. That can only happen if $S\subseteq \{i\mid \vec{w}[i]=1\vee \vec{w}[i]=\blank \}$. As $|\{i\mid \vec{w}[i]=1\vee \vec{w}[i]=\blank \}|\le r_{\max}+\MEP(M)$, it follows that $\vec{w}$ can appear in at most $2^{r+\MEP(M)}$ multisets. Hence, the sum of sizes of multisets $V$ for all pairs $(S,r')$ is at most $2^{r+\MEP(M)}|M'|$ and the lemma follows by distinguishing between two cases depending on whether $|M'|\le 2^{2r}|D_M|$ or $|M'|> 2^{2r}|D_M|$.
\end{proof} \fi

\iflong \begin{LEM} \fi \ifshort \begin{LEM}[$\spadesuit$] \fi\label{lem:correctnessAlg}
	Let $(M,k,r)$ be an instance of \DIAMCq{} and let $P^*$ be a \DCLUS{} of maximum size with $\{\vec{0},\vec{u}\}\subseteq P^*$ for some $r$-vector $\vec{u}\in \{0,1\}^d$. Then \enumClusters($\{\vec{0},\vec{u}\}$, $M\setminus D_M$, $\emptyset$) contains a pair $(C\cup\{\vec{0},\vec{u}\},\SSS)$ such that $P^*\setminus (D_M\cup \{\vec{0},\vec{u}\})$ is properly defined by some superset of $\SSS$ and $C$ is a \DCLUS{} defined by $\SSS$ of size $|P^*\setminus (D_M\cup \{\vec{u},\vec{w}\})|$.
\end{LEM}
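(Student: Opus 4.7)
The plan is to exhibit a single branch of the recursion tree that, at its terminal node, contributes the required pair to $\CCC$. Invoking Observation~\ref{obs:properlyDefines}, I fix $\SSS^{*}=\{(S^{*}_1,r^{*}_1),\dots,(S^{*}_q,r^{*}_q)\}$ of size $q\le r^{2r+1}$ that properly defines $P^{*}\setminus(D_M\cup\{\vec{0},\vec{u}\})$ (this applies because, after normalization, every vector of $P^{*}\setminus D_M$ has at most $r_{\max}\le r$ ones). I then proceed by induction on $j\in\{0,\dots,q\}$ with the following invariant: the search tree rooted at \enumClusters$(\{\vec{0},\vec{u}\},M\setminus D_M,\emptyset)$ contains a node corresponding to a recursive call \enumClusters$(C_j^{\mathrm{alg}},M\setminus D_M,\SSS_j)$ such that $\SSS_j\subseteq\SSS^{*}$, $|\SSS_j|=j$, and $\{\vec{w}_c : \vec{c}\in C_j^{\mathrm{alg}}\}$ equals $\{\vec{0},\vec{u}\}$ together with all $\vec{w}\in P^{*}\setminus(D_M\cup\{\vec{0},\vec{u}\})$ that are compatible with $\SSS_j$. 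The base case $j=0$ is the initial call.

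For the inductive step at level $j<q$, Lemma~\ref{lem:largeCore} yields a pair $(S^{*}_{j+1},r^{*}_{j+1})\in\SSS^{*}\setminus\SSS_j$ with at least $(|M'|/2^{r_{\max}}-|D_M|)/r^{2r+1}$ vectors of $M'$ compatible with it. I would then argue that every such vector $\vec{a}\in P^{*}\setminus D_M$ not already represented in $C_j^{\mathrm{alg}}$ lies in the set $V$ computed at this branching step: its $P^{*}$-completion $\vec{a}^{P^{*}}$ belongs to $\someCompletion{(S^{*}_{j+1},r^{*}_{j+1})}(\vec{a})$, and since every $\vec{c}\in C_j^{\mathrm{alg}}$ is itself compatible with some member of $\SSS^{*}$, Lemma~\ref{lem:largeSunflowerArbitrary} applied to $\SSS^{*}$ gives $\HDIST(\vec{a}^{P^{*}},\vec{c})\le r_{\max}$ for every such $\vec{c}$. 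A careful reading of the proof of that lemma yields this sharper bound $r_{\max}$ instead of $r$, because the witness vectors $\vec{v}_1,\vec{v}_2$ produced there all lie in $P^{*}\setminus D_M$, whose pairwise distances are bounded by $r_{\max}$. Hence $\vec{a}^{P^{*}}\in\allowedSet_{C_j^{\mathrm{alg}}}(\vec{a})$, so $\vec{a}\in V$; since $2^{r_{\max}}\le 2^{r}$, the set $V'$ meets the branching threshold at line~\ref{algline:1.13}, and the algorithm recurses on $\SSS_{j+1}=\SSS_j\cup\{(S^{*}_{j+1},r^{*}_{j+1})\}$. Any arbitrary witness chosen for $V'$ remains compatible with $\SSS^{*}$, so the invariant carries over to level $j+1$.

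At the terminal level $j=q$, the set $\{\vec{w}_c : \vec{c}\in C_q^{\mathrm{alg}}\}$ equals $P^{*}\setminus D_M$, and I would bound $|M'|$ so that line~\ref{algline:1.7} fires. Lemma~\ref{lem:largeSubClique} applied to $C_q^{\mathrm{alg}}$ yields $P\subseteq M'$ of size at least $|M'|/2^{r_{\max}}$ together with a completion $\widetilde{P}$ such that $\widetilde{P}\cup C_q^{\mathrm{alg}}$ is a \DCLUS{} of diameter $r_{\max}$. Extending this into a completion of $M$ and invoking the maximality of $P^{*}$ gives $|P|+|C_q^{\mathrm{alg}}|\le|P^{*}|$, hence $|P|\le|P^{*}\cap D_M|\le|D_M|$, and therefore $|M'|\le 2^{r_{\max}}|D_M|\le 2^{r}|D_M|$. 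Thus $(C_q^{\mathrm{alg}},\SSS^{*})$ is added to $\CCC$; the output pair $(C_q^{\mathrm{alg}}\setminus\{\vec{0},\vec{u}\},\SSS^{*})$ then satisfies all requirements, since $\SSS^{*}$ itself is a superset of $\SSS^{*}$ that properly defines $P^{*}\setminus(D_M\cup\{\vec{0},\vec{u}\})$ and $|C_q^{\mathrm{alg}}\setminus\{\vec{0},\vec{u}\}|=|P^{*}\setminus(D_M\cup\{\vec{0},\vec{u}\})|$.

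The main obstacle is the algorithm's arbitrary choice of witnesses in $V'$: these need not agree with the $P^{*}$-completions of the corresponding underlying vectors, so a priori a bad choice could prevent some later $P^{*}$-vector from satisfying the $\allowedSet_C$ constraint. This is resolved by the sharpened form of Lemma~\ref{lem:largeSunflowerArbitrary}: every witness produced by the algorithm is automatically compatible with $\SSS^{*}$, and therefore lies at distance at most $r_{\max}$ from any other completion compatible with $\SSS^{*}$. Consequently, no arbitrary choice of $V'$ can derail the invariant, since the $P^{*}$-completion of each remaining $P^{*}$-vector continues to be a valid witness at every subsequent branching step.
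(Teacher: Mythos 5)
Your proof follows essentially the same route as the paper's: induction on the branching depth (the paper phrases it as a claim and iterates it), with Lemma~\ref{lem:largeCore} supplying a large $r_i$-\core{} in $\SSS^*\setminus\SSS_j$ at each step, Lemma~\ref{lem:largeSunflowerArbitrary} ensuring that $P^*$-vectors compatible with $\SSS^*$ survive the $\allowedSet_C$ filter (the paper also implicitly uses the $r_{\max}$-strengthening you make explicit), and Lemma~\ref{lem:largeSubClique} together with maximality of $P^*$ to bound $|M'|$ at the terminal node so that line~\ref{algline:1.7} fires. The one place you assert slightly more than you argue is the ``equals'' in the invariant (you only show that every compatible $P^*$-vector enters $V$, not that nothing else does); the paper asserts $|C|=|P|$ with the same amount of justification, so this is consistent with, not a departure from, the paper's proof.
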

\iflong \begin{proof}
	Let $P\subseteq M\setminus (D_M\cup \{\vec{u},\vec{v}\})$ be the multiset of vectors that are completed to vectors in $P^*$, and let $\SSS$ properly define $P^*\setminus (D_M\cup \{\vec{u},\vec{v}\})$, and $\SSS'\subsetneq \SSS$.
	Let $C'$ be the set of vectors $\vec{w}$ in $P$ with $\someCompletion{\SSS'}(\vec{w})\neq \emptyset$.
	We will show the following claim that holds whenever $P\setminus C'\neq \emptyset$.
	
	\begin{CLM}\label{clm:FindMaxDiamClustInduction}
		\enumClusters($C^*\cup\{\vec{u},\vec{v}\}$, $M\setminus D_M$, $\SSS'$) for some multiset $C^*$ containing a vector $\vec{w}_c\in\someCompletion{\SSS'}(\vec{w})$ for every $\vec{w}\in C'$ calls as a subroutine \enumClusters($C^{**}\cup\{\vec{u},\vec{v}\}$, $M\setminus D_M$, $\SSS'\cup (S_i,r_i)$), where $(S_i,r_i)\in \SSS\setminus\SSS'$ and $C^{**}$ contains a vector $\vec{w}_c\in\someCompletion{\SSS'\cup\{(S_i,r_i)\}}(\vec{w})$ for every vector $\vec{w}\in P$ with $\someCompletion{\SSS'\cup\{(S_i,r_i)\}}(\vec{w})\neq \emptyset$.
	\end{CLM}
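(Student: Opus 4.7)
The plan is to unfold one iteration of the outer foreach loop of \enumClusters and identify the specific branch that extends the partial alignment from $\SSS'$ to $\SSS' \cup \{(S_i,r_i)\}$ for some $(S_i,r_i) \in \SSS \setminus \SSS'$. Since the hypothesis $P \setminus C' \neq \emptyset$ guarantees that there is still an uncovered vector in $P$, and since $\SSS$ properly defines $P^* \setminus (D_M \cup \{\vec{u},\vec{v}\})$, at least one such pair must exist.

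The key step is to apply Lemma~\ref{lem:largeCore} with $C^*$ playing the role of the lemma's $C$. The multiset $M'$ computed on Line~\ref{algline:1.5} of Algorithm~\ref{alg:FindMaxClique} then coincides with the $M'$ supplied by the lemma, since $\allowedSet_{C^* \cup \{\vec{u},\vec{v}\}}(\vec{w}) \neq \emptyset$ is the common membership criterion. The lemma produces a pair $(S_i,r_i) \in \SSS \setminus \SSS'$ together with a subset of $M'$ of size at least $(|M'|/2^{r_{\max}} - |D_M|)/r^{2r+1}$, each vector of which is compatible with $(S_i,r_i)$. Any such vector $\vec{w}$ witnesses that $S_i \subseteq \{i : \vec{w}[i] \in \{1,\blank\}\}$ and $r_i \in \{|S_i|,\ldots,r\}$, so the pair $(S_i,r_i)$ is enumerated by the foreach loops on Lines~\ref{algline:1.9}--\ref{algline:1.10}.

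For each compatible $\vec{w}$ I would verify that some witness $\vec{w}^* \in \someCompletion{(S_i,r_i)}(\vec{w})$ also lies in $\allowedSet_{C^* \cup \{\vec{u},\vec{v}\}}(\vec{w})$. The first containment is immediate; the second follows from Lemma~\ref{lem:largeSunflowerArbitrary}, which bounds the distance between any two vectors compatible with $\SSS$: here $\vec{w}^*$ is compatible with $(S_i,r_i) \in \SSS$, the vectors of $C^*$ are compatible with $\SSS' \subseteq \SSS$, and $\vec{0}, \vec{u} \in P^*$ sit at distance exactly $r_{\max}$ apart. Consequently $\vec{w}$ is picked up in $V$ on Line~\ref{algline:1.11}, $V'$ inherits the lower bound of Lemma~\ref{lem:largeCore}, and the threshold test on Line~\ref{algline:1.13} triggers the recursive call on Line~\ref{algline:1.14} with arguments $(C^* \cup V' \cup \{\vec{u},\vec{v}\}, M \setminus D_M, \SSS' \cup \{(S_i,r_i)\})$.

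Setting $C^{**} := C^* \cup V'$, it remains to verify completeness: for every $\vec{w} \in P$ with $\someCompletion{\SSS' \cup \{(S_i,r_i)\}}(\vec{w}) \neq \emptyset$, a witness must be present in $C^{**}$. If $\someCompletion{\SSS'}(\vec{w}) \neq \emptyset$ then $\vec{w} \in C'$ and the witness supplied by $C^*$ already lies in $\someCompletion{\SSS' \cup \{(S_i,r_i)\}}(\vec{w})$ because $\someCompletion{\SSS'}(\vec{w}) \subseteq \someCompletion{\SSS' \cup \{(S_i,r_i)\}}(\vec{w})$; otherwise $\vec{w}$ is compatible only through $(S_i,r_i)$, and the argument of the previous paragraph places $\vec{w}$ in $V$, so $V'$ supplies the required witness. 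The main obstacle will be the careful verification that the distance bound from Lemma~\ref{lem:largeSunflowerArbitrary} can be tightened from the generic $r$ to the guessed diameter $r_{\max}$ demanded by $\allowedSet$; this relies on tracing each witness back to a completion attaining the same diameter as $P^*$, as enforced by the normalisation step and the maximality of $P^*$.
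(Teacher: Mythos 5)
Your proposal follows essentially the same route as the paper's proof: both reduce the claim to the combination of Lemma~\ref{lem:largeSunflowerArbitrary} (to show that every still-uncovered vector of $P$ survives into $M'$, hence its core $(S_i,r_i)$ is enumerated on lines~\ref{algline:1.9}--\ref{algline:1.10} and its witness lands in $V$ on line~\ref{algline:1.11}) and Lemma~\ref{lem:largeCore} (to show that for at least one $(S_i,r_i)\in\SSS\setminus\SSS'$ the threshold on line~\ref{algline:1.13} is met, so the recursive call on line~\ref{algline:1.14} actually fires with $C^{**}=C^*\cup V'$). The only cosmetic difference is the order in which you present them — you lead with Lemma~\ref{lem:largeCore} to produce the pair, then verify enumeration and the $\allowedSet$ membership, whereas the paper first checks that all of $P\setminus C'$ makes it into $M'$ and invokes Lemma~\ref{lem:largeCore} at the end for the size bound. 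Your closing remark about needing to reconcile the $\le r$ bound of Lemma~\ref{lem:largeSunflowerArbitrary} with the stricter $\le r_{\max}$ demanded by $\allowedSet_C$ is a genuine subtlety that the paper's one-line invocation of the lemma leaves implicit; you correctly flagged it rather than hand-waving past it.
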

	
	\iflong \begin{claimproof}[Proof of Claim]
		Let $M'$ be the set of vectors computed on line~\ref{algline:1.5}. Note that every vector in $P'\setminus C'$ is compatible with $\SSS\setminus \SSS'$ and hence by Lemma~\ref{lem:largeSunflowerArbitrary} it is in $M'$. Therefore, we will enumerate over all $r_i$-\core{s} in $\SSS\setminus \SSS'$ on lines~\ref{algline:1.9}~and~\ref{algline:1.10}. Let $V$ and $V'$ be the multisets computed on lines~\ref{algline:1.11}~and~\ref{algline:1.12}, respectively. Then $C'\cup V$ contains all vectors $\vec{w}$ in $P$ with $\someCompletion{\SSS'\cup\{(S_i,r_i)\}}(\vec{w})\neq \emptyset$. Moreover, $C^*\cup V'$ contains a vector $\vec{w}_c\in\someCompletion{\SSS'\cup\{(S_i,r_i)\}}(\vec{w})$ for every vector $\vec{w}\in P$ with $\someCompletion{\SSS'\cup\{(S_i,r_i)\}}(\vec{w})\neq \emptyset$. By Lemma~\ref{lem:largeCore}, for some $(S_i,r_i)\in \SSS\setminus \SSS'$ is $|V'|\ge (|M'|/2^r-|D_M|)/r^{2r+1}$.
	\end{claimproof} \fi
	
	It follows from inductive application of Claim~\ref{clm:FindMaxDiamClustInduction} that {\enumClusters($\{\vec{0},\vec{u}\}$, $M\setminus D_M$, $\emptyset$)} calls as a subroutine {\enumClusters($C\cup\{\vec{0},\vec{u}\}$, $M\setminus D_M$, $\SSS'$)}, where $C$ contains a vector $\vec{w}_c\in\someCompletion{\SSS'}(\vec{w})$ for every vector $\vec{w}\in P$. 		It follows that $|C|=|P|=|P^*\setminus (D_M\cup \{\vec{0},\vec{u}\})|$ and $C$ is by its definition defined by $\SSS'\subseteq \SSS$. Since $P^*\setminus (D_M\cup \{\vec{0},\vec{u}\})$ is properly defined by $\SSS$, it follows from Lemma~\ref{lem:largeSunflowerArbitrary} that $C$ is a \DCLUS{}. It remains to show that {\enumClusters($C\cup\{\vec{0},\vec{u}\}$, $M\setminus D_M$, $\SSS'$)} actually adds the pair $(C\cup\{\vec{0},\vec{u}\}, \SSS')$ to the output $\CCC$ on line~\ref{algline:1.7}.
	
	Let $M'$ be the set of vectors computed on line~\ref{algline:1.5}. If $|M'|\ge 2^r\cdot|D_M|+1$, then by Lemma~\ref{lem:largeSubClique} there is a \DCLUS{} of size $|C\cup\{\vec{0},\vec{u}\}|+|D_M|+1>|P^*|$, which contradicts the choice of $P^*$. Hence the condition on line~\ref{algline:1.6} is satisfied and the pair $(C,\SSS)$ is added to $\CCC$.
\end{proof} \fi

\fi

\ifshort
With Lemma~\ref{lem:largeCore}, we can now enumerate all ``important'' sets of $r_i$-\core{s} in FPT-time ($\spadesuit$); crucially, at least one of the identified clusters is maximum. 
With this, we can put everything together to establish the desired result:
\fi

\iflong \begin{THE} \fi \ifshort \begin{THE}[$\spadesuit$] \fi
\label{thm:diamfpt}
	\DIAMCq{} is fixed-parameter tractable parameterized by $r+\MEP(M)$.
\end{THE}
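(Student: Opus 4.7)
The plan is to combine all the ingredients developed in this subsection into a branching algorithm whose depth is bounded by a function of $r$ and whose branching factor at each step is bounded by a function of $r+\MEP(M)$. First, I would dispose of the easy case: if $k < \MEP(M)+2$, then $k$ is itself bounded by $r+\MEP(M)$, so Theorem~\ref{thm:DIAM_k_r_comb} provides an FPT algorithm parameterized by $r+\MEP(M)$ already. We may therefore assume that the hypothetical optimal \DCLUS{} $P^*$ contains at least two vectors from $M\setminus D_M$. We then brute-force guess (i) the subset $P_R\subseteq D_M$ of deletion-set vectors whose completions appear in $P^*$ (only $2^{|D_M|}$ choices), and (ii) the pair $\vec{v}^*,\vec{u}^*\in M\setminus D_M$ together with their completions that realise the maximum distance $r_{\max}\le r$ inside $P^*\setminus D_M$; since $\vec{v}^*,\vec{u}^*\in M\setminus D_M$ each have at most $\MEP(M)$ missing entries, this guess costs $|M|^2\cdot 2^{2\MEP(M)}$ time. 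We then normalize so that $\vec{v}^*=\vec{0}$, which puts us in the regime of Lemma~\ref{lem:r-uniform-PAIR}.

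Next, I would invoke the iterative sunflower harvesting procedure \enumClusters$(\{\vec{0},\vec{u}^*\},M\setminus D_M,\emptyset)$. By Observation~\ref{obs:properlyDefines}, the cluster $P^*\setminus D_M$ (modulo the already fixed vectors) is properly defined by a set $\SSS$ of at most $r^{2r+1}$ many $r_i$-\core{}s, so the recursion depth is bounded by $r^{2r+1}$. At each recursive call with current partial cluster $C$ and partial core-set $\SSS'\subsetneq\SSS$, Lemma~\ref{lem:largeCore} guarantees that at least one $r_i$-\core{} still to be added is \emph{large}, meaning that it is compatible with a fraction $\ge 1/(2^{r_{\max}}r^{2r+1})$ (minus a small additive $|D_M|$ correction) of the remaining candidate vectors $M'$. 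Since each vector of $M\setminus D_M$ has at most $r_{\max}+\MEP(M)$ positions where a $1$ can sit, it is compatible with at most $2^{r+\MEP(M)}$ distinct $r_i$-\core{}s; a simple counting argument therefore shows that only $2^{2r+\MEP(M)}r^{2r+1}$ cores can be this large, and all of them can be enumerated in FPT time by scanning the surviving vectors and, for each, iterating over the (boundedly many) subsets of its $1$-positions and missing positions.

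The correctness of the branching is exactly what Lemma~\ref{lem:correctnessAlg} asserts: among the returned pairs $(C\cup\{\vec{0},\vec{u}^*\},\SSS)$ at least one has $|C|=|P^*\setminus (D_M\cup\{\vec{0},\vec{u}^*\})|$ with $C$ a genuine \DCLUS{} whose members can be completed compatibly with $\SSS$, using Lemma~\ref{lem:largeSunflowerArbitrary} to certify that every vector compatible with $\SSS$ fits inside the same \DCLUS. For each output pair, I would check in polynomial time whether $|C\cup\{\vec{0},\vec{u}^*\}|+|P_R|\ge k$ and whether the completion of $P_R$ can be chosen so that every vector in $P_R$ has distance at most $r_{\max}$ to every vector in $C$ (this can be verified coordinate-wise in polynomial time). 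Accepting iff some combination of guesses succeeds yields a correct algorithm; multiplying the depth bound $r^{2r+1}$ by the per-level branching factor $2^{2r+\MEP(M)}r^{2r+1}$ together with the initial enumeration cost gives an overall running time of the form $g(r,\MEP(M))\cdot |M|^{\bigoh(1)}\cdot d$.

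The main obstacle in executing this plan cleanly is controlling the interaction between the iterative harvesting and the eventual feasibility check: we must ensure that after committing to a sequence of $r_i$-\core{}s we have not over-specified the completions in a way that rules out extending the cluster to include the preselected $P_R$. This is resolved because the ``witnesses of compatibility'' chosen in \enumClusters are not final completions but merely representatives, and Lemma~\ref{lem:largeSunflowerArbitrary} lets us recompute consistent completions on the fly; the final verification against $P_R$ is then only a pairwise-distance check. Together with the FPT bounds from Lemmas~\ref{lem:runningTimeAlg} and \ref{lem:correctnessAlg}, this yields fixed-parameter tractability parameterized by $r+\MEP(M)$.
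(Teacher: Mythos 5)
Your plan follows the paper's own proof closely through the preprocessing, the guessing of $\vec{v}^*,\vec{u}^*$ and the normalization, the call to \enumClusters, and the appeals to Observation~\ref{obs:properlyDefines} and to Lemmas~\ref{lem:largeCore}, \ref{lem:runningTimeAlg} and~\ref{lem:correctnessAlg}. The genuine gap is in the final step, where a cluster candidate $C$ returned by \enumClusters must be combined with the guessed subset $P_R\subseteq D_M$.

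You identify the obstacle yourself (``we must ensure that \ldots we have not over-specified the completions in a way that rules out extending the cluster to include the preselected $P_R$''), but then wave it away. You say the compatibility of $P_R$ with $C$ is ``only a pairwise-distance check'' doable ``coordinate-wise in polynomial time.'' It is not. Each $\vec{w}\in P_R$ is a deletion-set vector and may have $\Theta(d)$ missing entries, so it has exponentially many completions; deciding whether one of them lies within Hamming distance $r$ of \emph{every} member of $C$ couples the coordinates through a global counting constraint and is essentially a Closest String instance, not a per-coordinate computation. The paper addresses exactly this with a dedicated Claim inside the proof of Theorem~\ref{thm:diamfpt}: using Lemma~\ref{lem:CLUS-SF-BASIC-A} together with the fact that $C$ is described by the sunflower cores in $\SSS_i$, it shows that for every $\vec{w}\in P_R$ and every coordinate $j$ outside $\HSET(\vec{u})\cup\bigcup_{(S',r')\in\SSS_i}S'$ one may safely set $\vec{w}[j]=0$. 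Only after this reduction to $\bigoh(r^{2r+2})$ relevant coordinates per vector can the algorithm brute-force over the remaining completions and test them. Your appeal to Lemma~\ref{lem:largeSunflowerArbitrary} does not fill this hole: it controls pairs of vectors both compatible with $\SSS$, but the $P_R$-vectors need not be compatible with $\SSS$ and the lemma says nothing about how to complete them. As a secondary omission, the check you describe verifies only distances from $P_R$ to $C$; the completed $P_R$-vectors also need to be pairwise within distance $r$ of one another, which your check never tests (the paper's brute-force enumeration handles it automatically once the relevant coordinates are bounded).
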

\iflong \begin{proof}
	Let $(M,k,r)$ be an instance of \DIAMCq{}. If $k\le \MEP(M)+2$, we run the algorithm from Theorem~\ref{thm:DIAM_k_r_comb}. Otherwise, every maximum size \DCLUS{} contains at least two vectors in $M\setminus D_M$. Let us fix some maximum cluster $P$ in $(M,k,r)$ together with its completion $P^*$. We branch over all pairs of vectors in $\vec{v}, \vec{u}\in M\setminus D_M$ as the two vectors completed to two vectors that are farthest apart in $P^*$. We branch over all completions $\vec{u}^*$ and $\vec{v}^*$ of vectors $\vec{u}$ and $\vec{v}$ and we normalize the instance so that $\vec{v}^*=\vec{0}$. In the correct branch, it holds after normalization that $\vec{u}^*, \vec{0}\subseteq P^*$. We fix $r_{\max}=|\HSET(\vec{u}^*)|$.
	Let $\SSS$ be the set that properly defines $P^*\setminus (D_M\cup\{\vec{0},\vec{u}^*\})$.
	By Lemmas~\ref{lem:runningTimeAlg}~and~\ref{lem:correctnessAlg} the procedure {\enumClusters($\{\vec{0},\vec{u}\}$, $M\setminus D_M$, $\emptyset$)} computes in \FPT-time a set $\CCC$ of pairs of form $(C_i,\SSS_i)$ such that there exists $(C_i,\SSS_i)\in \CCC$ such that $|C_i|=|P^*\setminus D_M|$ and some subset $\SSS'\subseteq \SSS$ defines $C_i\setminus\{\vec{0},\vec{u}^*\}$. It follows from Lemma~\ref{lem:largeSunflowerArbitrary} that $C_i\cup (P^*\cap D_M)$ is \DCLUS{}. For each $(C_i,\SSS_i)\in \CCC$, we enumerate all $2^{|D_M|}$ subsets of $D_M$. It remains to show that if $P^*\cap D_M$ is a completion of a multiset $P_R\subseteq D_M$, we can find a completion $P_R^*$ or $P_R$ such that $C_i\cup P_R^*$ is a \DCLUS{} in \FPT-time. Since all sets $S'$ with $(S',r')\in \SSS_i$ have size at most $r$ and $|\SSS_i|\le r^{2r+1}$, it suffices to show the following claim:
	\begin{CLM}
		For every $\vec{w}\in P_R$, if coordinate $j\in [d]$ is not in
		$\HSET(\vec{u})\cup \bigcup_{(S',r')\in \SSS_i}S'$, then we can safely set $\vec{w}[j]=0$.
	\end{CLM}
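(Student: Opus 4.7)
The plan is to prove the claim by an exchange argument on the completion of $\vec{w}$. Let $J=[d]\setminus\bigl(\HSET(\vec{u})\cup\bigcup_{(S',r')\in\SSS_i}S'\bigr)$ denote the set of ``non-special'' coordinates, and note that $|[d]\setminus J|\le r+r\cdot r^{2r+1}$ depends only on~$r$. Fix any completion $P_R^*$ of $P_R$ such that $C_i\cup P_R^*$ is a \DCLUS{} of diameter at most~$r$. I will exhibit another completion $P_R^{**}$ with $\vec{w}^{**}[j]=0$ for every $\vec{w}\in P_R$ and every $j\in J$ at which $\vec{w}[j]=\blank$, such that $C_i\cup P_R^{**}$ is still a \DCLUS{} of diameter at most~$r$. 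Granting this, the remaining completions of $P_R$ range only over the at most $\MEP(M)\cdot(r+r^{2r+2})$ coordinates in $[d]\setminus J$, so brute-force enumeration runs in time $2^{\MEP(M)(r+r^{2r+2})}\cdot|M|^{\bigoh(1)}$, which is \FPT{} in $r+\MEP(M)$.

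First, I would observe that after normalisation $\vec{0}\in C_i$ and the fixed completion $\vec{u}^*$ of $\vec{u}$ lies in $C_i$, with $\HSET(\vec{u}^*)=\HSET(\vec{u})$. Hence both $\vec{0}[j]=0$ and $\vec{u}^*[j]=0$ for every $j\in J$, so flipping $\vec{w}^*[j]$ from $1$ to $0$ can only decrease $\HDIST(\vec{w}^*,\vec{0})$ and $\HDIST(\vec{w}^*,\vec{u}^*)$.

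The main step is to argue that, without loss of generality, every $\vec{c}\in C_i$ also satisfies $\vec{c}[j]=0$ for each $j\in J$. Each such $\vec{c}$ was produced by the algorithm as an arbitrary witness in $\someCompletion{(S',r')}(\vec{a})\cap\allowedSet_{C}(\vec{a})$ for some input $\vec{a}\in M\setminus D_M$ and core $(S',r')\in\SSS_i$, so the only constraints on $\vec{c}$ are $S'\subseteq\HSET(\vec{c})$, $|\HSET(\vec{c})|=r'$, and distance at most $r_{\max}$ from the previously-fixed vectors in $C$. Since $|\HSET(\vec{u})|=r_{\max}\ge r'$, the $r'-|S'|$ surplus ones of $\vec{c}$ can always be rerouted into $\HSET(\vec{u})\setminus S'$: this preserves compatibility with $(S',r')$ and the distance bounds to $\vec{0}$ and $\vec{u}^*$, while mutual compatibility within $C_i$ is still enforced by Lemma~\ref{lem:largeSunflowerArbitrary}, which depends only on the pair's compatibility with $\SSS_i$ and not on the specific placement. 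With $\vec{c}[j]=0$ secured for every $\vec{c}\in C_i$ and $j\in J$, applying the $1\to 0$ flip to all $\vec{w}^*\in P_R^*$ simultaneously can only decrease pairwise Hamming distances, both between $P_R^{**}$ and $C_i$ and within $P_R^{**}$ itself. Hence $C_i\cup P_R^{**}$ remains a valid \DCLUS{} of diameter at most~$r$.

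The main obstacle will be the subcase in which some input vector $\vec{a}$ underlying a $\vec{c}\in C_i$ has $\vec{a}[j]=1$ \emph{fixed} for some $j\in J$, since then no choice of witness can force $\vec{c}[j]=0$. I expect this to be ruled out by appealing to the properness of the correct guess $\SSS$ of $P^*\setminus(D_M\cup\{\vec{0},\vec{u}^*\})$: such a rogue entry would either enlarge the sunflower core containing $\vec{c}$ to include $j$ or introduce a distinct $r'$-core through $j$, contradicting either that $j\notin\bigcup_{(S',r')\in\SSS_i}S'$ or the maximality of $P^*$ via Lemma~\ref{lem:r-uniform-PAIR} and Corollary~\ref{cor:setsofsets}. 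Formalising this rules-out argument is the principal technical hurdle of the plan.
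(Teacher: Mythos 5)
There is a genuine gap in your plan, and you identify it yourself at the end: you need every $\vec{c}\in C_i$ to satisfy $\vec{c}[j]=0$ for all $j\in J$, and your proposed ``rule-out'' of rogue entries does not work. A witness $\vec{c}\in C_i$ is a \emph{completion} of some $\vec{a}\in M\setminus D_M$, and if $\vec{a}[j]=1$ is a known (non-$\blank$) entry for some $j\in J$, then \emph{every} completion of $\vec{a}$ has a one at $j$; there is no freedom to ``reroute'' that surplus one into $\HSET(\vec{u})\setminus S'$. Such a $\vec{c}$ is a perfectly legitimate witness for a core $(S',r')$ with $|S'|<r'$: the coordinate $j$ simply sits inside the petal $\HSET(\vec{c})\setminus S'$, and this does not enlarge any sunflower core nor create a new one. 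Properness of $\SSS$ constrains the structure of $P^*$, not where the \emph{individual} petals may place their $r'-|S'|$ extra ones, so neither Lemma~\ref{lem:r-uniform-PAIR} nor Corollary~\ref{cor:setsofsets} rules this situation out. Your exchange argument therefore cannot be pushed through without first establishing something stronger about $C_i$ than the algorithm actually guarantees.

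The paper's proof avoids this entirely by not trying to normalise $C_i$. It compares the proposed completion $\vec{w}'$ directly to the true completion $\vec{w}^*\in P^*$ and bounds $\HDIST(\vec{c},\vec{w}')$ by a case analysis on the type of the core $(S',r')$ that $\vec{c}$ witnesses. When $|S'|=r'$ or $\vec{c}=\vec{u}^*$, the witness is uniquely determined, lies in $P^*$, and has all its ones inside $S'\cup\HSET(\vec{u}^*)$, so it avoids $J$ and the $1\to0$ flip can only decrease the distance. When $|S'|<r'$, properness gives $r+1$ petals of a sunflower with core $S'$ living in $P^*$; Lemma~\ref{lem:CLUS-SF-BASIC-A} (applied to $\vec{w}'$, which has at most $r$ ones) yields a petal $\vec{n}\in P^*$ of maximum distance to $\vec{w}'$ among all $r'$-vectors containing $S'$, so $\HDIST(\vec{c},\vec{w}')\le\HDIST(\vec{n},\vec{w}')$, and a direct calculation using $S'\cap J=\emptyset$ and $\HSET(\vec{w}')\subseteq\HSET(\vec{w}^*)$ shows this is at most $r$. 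Crucially, $\vec{c}$ itself is allowed to have ones in $J$; the sunflower in $P^*$ is what absorbs them. That is the missing idea in your plan.
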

	\iflong \begin{claimproof}[Proof of Claim]
		Let $\vec{w}^*$ be the completion of $\vec{w}$ in $P^*$ and let $\vec{w}'$ be the completion of $\vec{w}$ where we set all $\blank$'s in $\vec{w}$ at coordinates not in $\HSET(\vec{u})$ or any set $S'$ such that $(S',r')\in \SSS_i$ to zero and set all remaining $\blank$'s as in $\vec{w}^*$.
		
		Let $\vec{c}\in C_i$ be compatible with $(S',r')\in \SSS_i$. If $\vec{c}=\vec{u}$ or $r'=|S'|$, then $\vec{w}^*$ and $\vec{w}'$ are the same on coordinates in $\HSET(\vec{c})$ and $|\HSET(\vec{w})'|\le |\HSET(\vec{w})^*|$. Moreover, $\vec{c}$ is also in $P^*$ (or some copy of $\vec{c}$) as it is the unique vector that is compatible with $(S',r')\in \SSS_i$ and $\SSS_i\subseteq \SSS$. Hence $\HDIST(\vec{c},\vec{w}')\le r$.
		If $|S'|<r'$, then $P^*$ contains a set $N$ of $r'$-vectors of size $r+1$ such that $\FFF=\HSET(N)$ is a sunflower with core $S'$. Hence, by Lemma~\ref{lem:CLUS-SF-BASIC-A} there is a vector $\vec{n}\in N$ with maximum distance to $\vec{w}^*$ among all vectors in $\{0,1\}^d$ that contain $S'$. It is rather straightforward to see that $\HDIST(\vec{c},\vec{w}')\le \HDIST(\vec{n},\vec{w}^*)\le r$.
	\end{claimproof} \fi
	
	It follows from the above claim that it suffices to branch over all completions of $P_R$ that set all $\blank$'s not in $\HSET(\vec{u})\cup \bigcup_{(S',r')\in \SSS_i}S'$ to $0$. There are at most $r^{\bigoh(r^2|P_R|)}$ such completions and the theorem follows.
\end{proof} \fi
\fi

\subsection{\DIAMCq{} Parameterized by $k$}

\iflong
In this subsection, we show that \DIAMCq{} parameterized by $k$ is in \XP{}.
\fi
\ifshort
Here, we use an Integer Linear Programming subroutine to show that \DIAMCq{} parameterized by $k$ is in \XP{}.
\fi
Moreover, we also observe in Theorem~\ref{the:diam-w1-k} that, unless \W{1}=\FPT, this cannot be improved to
an \FPT-algorithm even for complete data.

\begin{THE}   \label{the:diam-xp-k}
  \DIAMCq{} is in \XP{} parameterized by $k$.
\end{THE}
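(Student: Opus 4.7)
The plan is to first brute-force the choice of which $k$ vectors of $M$ participate in the cluster, and then decide feasibility of the remaining completion task with an integer linear program of bounded dimension. Concretely, I would iterate over all submultisets $P\subseteq M$ of size $k$, of which there are at most $|M|^k$; in the correct branch, $P$ consists exactly of those vectors in $M$ whose completions form the desired \DCLUS{}. It then remains to decide, given $k$ fixed vectors $p_1,\dots,p_k\in\{0,1,\blank\}^d$, whether they admit simultaneous completions $p_1^*,\dots,p_k^*\in\{0,1\}^d$ with $\HDIST(p_j^*,p_{j'}^*)\le r$ for every pair $j,j'\in[k]$.

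The central observation for the inner problem is that each coordinate contributes to the $\binom{k}{2}$ pairwise distances independently of the other coordinates, and its contribution depends only on the column pattern $(p_1[i],\dots,p_k[i])\in\{0,1,\blank\}^k$ and the chosen completion $c\in\{0,1\}^k$. Since there are at most $3^k$ possible column patterns, I would group coordinates by pattern and let $n_p$ denote the number of coordinates whose pattern is $p$. For each pattern $p$ and each completion $c\in\{0,1\}^k$ consistent with $p$ (i.e.\ agreeing with $p$ on every non-$\blank$ entry), I introduce a nonnegative integer variable $x_{p,c}$ counting the coordinates of pattern $p$ completed to $c$. The constraints are
\[
\sum_{c\text{ consistent with }p} x_{p,c} \;=\; n_p \quad\text{for each pattern }p,\qquad
\sum_{p,c} x_{p,c}\cdot[\,c[j]\neq c[j']\,] \;\le\; r \quad\text{for each pair }j,j'\in[k].
\]

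The resulting ILP has at most $3^k\cdot 2^k = 6^k$ variables and $O(3^k+k^2)$ constraints, so by Lenstra's algorithm for ILP in bounded dimension it is solvable in time $f(k)\cdot\mathrm{poly}(d,|M|)$. Combining the outer enumeration with the inner ILP gives total time $|M|^k\cdot f(k)\cdot\mathrm{poly}(d,|M|)$, which is \XP{} in $k$. I do not anticipate serious technical obstacles: the only point that requires some care is verifying that the encoding is lossless, i.e.\ that any feasible solution to the ILP can be realised by assigning, for each pattern $p$, exactly $x_{p,c}$ of the $n_p$ coordinates of pattern $p$ to completion $c$, and that conversely any valid completion aggregates to a feasible ILP solution; both directions follow directly from the fact that coordinate contributions to Hamming distances are pattern-local and additive.
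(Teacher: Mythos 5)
Your proposal is correct and follows essentially the same approach as the paper's proof: enumerate the $|M|^k$ candidate $k$-subsets, group coordinates by their column pattern in $\{0,1,\blank\}^k$, introduce a variable per pattern/completion pair, impose one covering constraint per pattern and one distance constraint per pair of rows, and solve the resulting bounded-dimension ILP by Lenstra's algorithm.
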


\begin{proof}
  Let $(M,k,r)$ be an instance of \DIAMCq{}. The algorithm works by
  enumerating all potential clusters $C$ of size exactly $k$, and then
  uses a reduction to an ILP instance with $f(k)$ variables to check
  whether $C$ can be completed into a cluster. Since
  there are at most $|M|^k$ many potential clusters of size exactly
  $k$, it only remains to show how to decide whether a given set $C$ of exactly $k$
  vectors in $M$ can be completed into a \DCLUS{}.
  Let $M_C$ be the submatrix of $M$ containing only the vectors in
  $C$. Then $M_C$ has at most $3^k$ distinct columns, and moreover, each
  of those columns can be completed in at most $2^k$ possible ways.
  Let $T$ be the set of all columns occurring in $M_C$ and for a column
  $\vec{t} \in T$, let $F(\vec{t})$ be the set of all possible
  completions of $\vec{t}$,
  and let $\#(\vec{t}\ )$ denote the number of columns in $M_c$ equal to
  $\vec{t}$. For a vector $\vec{f}\in \{0,1\}^k$
  (representing the completion of a column), let $T(\vec{f})$ denote the
  subset of $T$ containing all columns $\vec{t}$ with $\vec{f} \in F(\vec{t})$.
  Moreover, for every $i$ and $j$ with $1\leq i <j \leq k$
  (representing the $i$-th and the $j$-th vectors in $C$), we denote by
  $FD(i,j)$ the set of all vectors
  (completions of columns) $\vec{f} \in \{0,1\}^k$ such that
  $\vec{f}[i]\neq \vec{f}[j]$.

  We are now ready to construct an ILP instance $\III$ with at most
  $3^k2^k$ variables that is feasible if and only if $C$ can be
  completed into a \DCLUS{}. $\III$ has one
  variable $x_{\vec{t},\vec{f}}$ for every $\vec{t} \in T$ and every $\vec{f} \in
  F(\vec{t})$ whose value (in a feasible assignment) represents how many
  columns of type $\vec{t}$ in $M_C$ will be completed to $\vec{f}$.
  Moreover, $\III$ has the following constraints:

  \begin{itemize}
  \item One constraint for every $\vec{t} \in T$ stipulating that every
    column of type $\vec{t}$ in $M_C$ is completed in some
    manner:\ifshort \(\sum\limits_{\vec{f} \in F(\vec{t})}x_{\vec{t},\vec{f}}=\#( \vec{t}\ )\).\fi
    \iflong\[ \sum_{\vec{f} \in F(\vec{t})}x_{\vec{t},\vec{f}}=\#( \vec{t}\ ). \] \fi
  \item For every $i$ and $j$ with $1\leq i <j \leq k$
    (representing the $i$-th and the $j$-th vectors in $C$),
    one constraint stipulating that the Hamming distance between
    the $i$-th and the $j$-th vectors in $C$ does not exceed $r$:
    \ifshort$\sum\limits_{\vec{f} \in FD(i,j)\land \vec{t}\in T(\vec{f})}x_{\vec{t},\vec{f}}\leq r$.\fi
    \iflong\[ \sum_{\vec{f} \in FD(i,j)\and \vec{t}\in T(\vec{f})}x_{\vec{t},\vec{f}}\leq r. \]\fi
  \end{itemize}
  This completes the construction of $\III$ and it is straightforward
  to verify that $\III$ has a feasible assignment if and only if $C$
  can be completed to a \DCLUS{}.  Since $\III$ has
  at most $3^k2^k$ variables, and since it is well known that ILP can be solved
  in \FPT-time w.r.t.\ the number of variables~\cite{Lenstra83},
  $\III$ can be solved in \FPT-time w.r.t.\ $k$.
\end{proof} 

\iflong
Next we show that, unless \FPT = \W{1}, we cannot get an \FPT-algorithm.
We start with a generic construction that is used in several hardness proofs throughout the paper as well.

Let $G$ be a graph, where $V(G)=\{v_1, \ldots, v_n\}$ and $m=|E(G)|$, and let $deg(v_i) \leq n-1$ denote the degree of $v_i$ in $G$. Fix an arbitrary ordering ${\cal O}=(e_1, \ldots, e_m)$ of the edges in $E(G)$.
For each vertex $v_i \in V(G)$, define a vector $\vec{a_i} \in \{0, 1\}^m$ to be the incidence/characteristic vector of $v_i$ w.r.t.\ ${\cal O}$; that is, $\vec{a_i}[j]=1$ if $v_i$ is incident to $e_j$ and $\vec{a_i}[j]=0$ otherwise. Afterwards, expand the set of coordinates of these vectors by adding to each of them $n(n-1)$ ``extra'' coordinates, $n-1$ coordinates for each $v_i$, $i \in [n]$; we refer to the $n-1$ (extra) coordinates of $v_i$ as the ``private'' coordinates of $v_i$. For each $v_i$, $i \in [n]$, we will choose a number $x_i \in \{0, \ldots, n-1\}$, where the choice of the number $x_i$ will be problem dependent, and we will set $x_i$ many coordinates among the private coordinates of $v_i$ to 1, and all other extra private coordinates of $v_i$ to 0. Let $M=\{\vec{a_i} \mid i \in [n]\}$ be the set of expanded vectors, where $\vec{a_i} \in \{0, 1\}^{m+n(n-1)}$ for $i \in [n]$. We have the following straightforward observation:

\begin{OBS}
	\label{obs:genericreduction}
	For each $v_i$, where $i \in [n]$, the number of coordinates in $\vec{a_i}$ that are equal to 1 is exactly $deg(v_i) + x_i$, and
	two distinct vertices $v_i, v_j$ satisfy $\HDIST(\vec{a_i},\vec{a_j}) =deg(v_i)+x_i + deg(v_j) +x_j$ if $v_i$ and $v_j$ are nonadjacent in $G$ and $\HDIST(\vec{a_i},\vec{a_j}) =deg(v_i)+x_i + deg(v_j) +x_j -2$ if $v_i$ and $v_j$ are adjacent.
\end{OBS}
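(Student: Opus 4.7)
My plan is to verify both claims by direct bookkeeping, exploiting the fact that the coordinates of $\vec{a_i}$ split cleanly into two blocks whose contributions can be analyzed separately: the $m$ edge coordinates indexed by $\mathcal{O}$, and the $n(n-1)$ extra coordinates partitioned into $n$ private blocks of size $n-1$, one per vertex.

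First I would prove the count of ones in $\vec{a_i}$. On the edge coordinates, $\vec{a_i}[j]=1$ precisely when $e_j$ is incident to $v_i$, which by definition happens for exactly $deg(v_i)$ choices of $j$. On the extra coordinates, only the private block of $v_i$ contains any ones in $\vec{a_i}$, and by construction that block contains exactly $x_i$ ones; the private blocks associated with $v_k$ for $k\neq i$ are entirely zero in $\vec{a_i}$. Summing yields $deg(v_i)+x_i$.

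For the Hamming distance, I would again split along the two blocks. On the extra coordinates, the ones of $\vec{a_i}$ lie only in the private block of $v_i$ (where $\vec{a_j}$ is zero since $i\neq j$), and symmetrically the ones of $\vec{a_j}$ lie only in the private block of $v_j$ (where $\vec{a_i}$ is zero); every other private block is zero in both vectors. Hence the extra coordinates always contribute exactly $x_i+x_j$ to $\HDIST(\vec{a_i},\vec{a_j})$, independently of adjacency. On the edge coordinates, a coordinate $e_l$ contributes iff $e_l$ is incident to exactly one of $v_i,v_j$.

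If $v_i$ and $v_j$ are nonadjacent, their incident edge sets are disjoint, so the number of such $e_l$ equals $deg(v_i)+deg(v_j)$, yielding the first formula. If they are adjacent, then the shared edge $\{v_i,v_j\}$ is incident to both endpoints and therefore contributes $0$, while the remaining $deg(v_i)-1$ edges incident only to $v_i$ and the $deg(v_j)-1$ edges incident only to $v_j$ each contribute $1$; the edge-block total is $deg(v_i)+deg(v_j)-2$, and adding $x_i+x_j$ gives the second formula. There is no genuine obstacle here beyond being careful to subtract $2$ (not $1$) in the adjacent case, since the shared edge simultaneously eliminates a potential contribution from both endpoints' incidence counts.
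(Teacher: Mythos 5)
Your proof is correct and is exactly the direct bookkeeping argument the paper implicitly relies on (the paper states the observation without proof, calling it straightforward). Splitting into the edge block and the private blocks, counting ones per block, and noting that an adjacent pair shares exactly one edge which drops the symmetric-difference count by $2$, is precisely the intended reasoning.
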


Throughout the paper, we denote by ${\cal R}$ the polynomial-time reduction that takes as input a graph $G$ and returns the set of vectors $M$ described above. We now complete the complexity landscape for \DIAMCq\ with the two missing lower-bounds results.
\fi

\iflong \begin{THE} \fi \ifshort \begin{THE}[$\spadesuit$] \fi\label{the:diam-w1-k}
	\DIAMCq{} is \Weft\emph{[1]}-hard parameterized by $k$ even if $\MEP(M)=0$.
\end{THE}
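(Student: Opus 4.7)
The plan is to reduce from \textsc{Clique} parameterized by solution size, which is well-known to be \W{1}-hard, using the generic reduction $\RR$ defined just above the theorem statement. Since the produced matrix $M$ will be complete, we have $\MEP(M)=0$ automatically, and the \DIAMCq\ instance coincides with a plain \DIAMC\ instance.

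Given an input $(G,k)$ of \textsc{Clique} with $n=|V(G)|$, I apply $\RR$ with the degree-equalizing choice $x_i := (n-1)-\deg(v_i)$ for every $i\in[n]$; note $0\le x_i\le n-1$ as required. Let $M$ be the resulting multiset of vectors in $\{0,1\}^{m+n(n-1)}$, and set $r:=2n-4$. Output the instance $(M,k,r)$ of \DIAMCq. By Observation~\ref{obs:genericreduction}, for every pair of distinct vertices $v_i,v_j$ we have
\[
\HDIST(\vec{a_i},\vec{a_j}) \;=\; \bigl(\deg(v_i)+x_i\bigr)+\bigl(\deg(v_j)+x_j\bigr)-2\cdot\mathbf{1}[v_iv_j\in E(G)] \;=\; 2(n-1)-2\cdot\mathbf{1}[v_iv_j\in E(G)],
\]
so the pairwise Hamming distance is exactly $2n-4$ if $v_iv_j\in E(G)$ and $2n-2$ otherwise.

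Correctness is then immediate in both directions. If $G$ contains a clique $K$ of size $k$, the corresponding set $\{\vec{a_i}\mid v_i\in K\}\subseteq M$ is a $k$-\DCLUS{} of diameter exactly $2n-4=r$. Conversely, any $k$-\DCLUS{} of diameter at most $r$ in $M$ can only contain vectors $\vec{a_i},\vec{a_j}$ with $\HDIST(\vec{a_i},\vec{a_j})\le 2n-4$, which by the computation above forces $v_iv_j\in E(G)$; hence the corresponding $k$ vertices form a clique in $G$. The reduction is clearly polynomial, preserves the parameter $k$ exactly, and produces a complete matrix (so $\MEP(M)=0$), establishing the claimed \W{1}-hardness.

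There is no real obstacle here beyond finding the right degree-equalizing choice of $x_i$; once distances depend solely on adjacency, the equivalence with \textsc{Clique} is forced. (If one prefers a reduction from \textsc{Partitioned Clique}, the same construction works verbatim, which can be convenient for reusing the reduction in later hardness proofs in the paper.)
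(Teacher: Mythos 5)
Your proposal is correct and matches the paper's own proof essentially verbatim: the same reduction $\RR$ from \textsc{Clique}, the same degree-equalizing choice $x_i=(n-1)-\deg(v_i)$, the same target value $r=2n-4$, and the same invocation of Observation~\ref{obs:genericreduction} to make pairwise distances depend only on adjacency. No differences worth noting.
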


\iflong \begin{proof}
	We prove \W{1}-hardness by giving a polynomial-time \FPT{} reduction from {\sc Clique}, which is \Weft[1]-hard~\cite{DowneyFellows13}.
	Given an instance $(G, k)$ of {\sc Clique}, where $V(G)=\{v_1, \ldots, v_n\}$, we set $x_i=n-1-deg(v_i)$ for $i \in [n]$, and apply the polynomial-time reduction ${\cal R}$ to $G$ to produce the set of vectors $M$. The reduction from {\sc Clique} to \DIAMCq{} produces the instance $\inst=(M, k, 2n-4)$ of \DIAMCq{} with $\MEP(M)=0$; clearly, this reduction is a polynomial-time \FPT-reduction.
	
	By Observation~\ref{obs:genericreduction}, for any two distinct vertices $v_i, v_j \in V(G)$, $\HDIST(\vec{a_i},\vec{a_j}) =2n-2$ if $v_i$ and $v_j$ are nonadjacent and $\HDIST(\vec{a_i},\vec{a_j}) =2n-4$ if $v_i$ and $v_j$ are adjacent.
	The proof that $(G, k)$ is a \yes-instance of {\sc Clique} iff $(M, k, 2n-4)$ is a \yes-instance of \DIAMCq{} is now straightforward.
	
\end{proof} \fi
We note that our second result
also establishes the \W{1}-hardness of \RADCq\ (since both problems coincide when $r=0$).

\iflong \begin{THE} \fi \ifshort \begin{THE}[$\spadesuit$] \fi
  \label{thm:diamrad-Whard-k}
 \DIAMCq\ and \RADCq\  are \W{1}\hy hard parameterized by $k$ even if $r=0$.
 \end{THE}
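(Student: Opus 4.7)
The plan is to give a direct \FPT-reduction from \textsc{Clique} parameterized by $k$, which is \W{1}-hard. The key preliminary observation is that when $r=0$ the two problems \DIAMCq\ and \RADCq\ coincide with a single combinatorial question: a $0$-\DCLUS\ requires all pairwise Hamming distances to be zero and a $0$-\RCLUS\ requires all vectors to equal the center, so in either case we are asking whether some $k$ vectors of $M$ can be completed to become identical, i.e., whether they admit a common completion. It therefore suffices to encode \textsc{Clique} into the common-completion problem.

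Given an instance $(G,k)$ of \textsc{Clique} with $V(G)=\{v_1,\dots,v_n\}$, I would introduce one vector $\vec{a}_i$ per vertex $v_i$, and for every non-edge $\{v_i,v_j\}$ of $G$ with $i<j$ introduce a fresh coordinate $c_{ij}$ with $\vec{a}_i[c_{ij}]=0$, $\vec{a}_j[c_{ij}]=1$, and $\vec{a}_\ell[c_{ij}]=\blank$ for every $\ell\notin\{i,j\}$. Let $M=\{\vec{a}_1,\dots,\vec{a}_n\}$ and output the instance $(M,k,0)$ of \DIAMCq\ (or \RADCq). This construction takes polynomial time and preserves the parameter $k$, so it is an \FPT-reduction.

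For correctness, a subset $S\subseteq M$ admits a common completion exactly when no coordinate is assigned both $0$ and $1$ by the vectors in $S$. In the construction, the only coordinate with conflicting specified values on $\{\vec{a}_i,\vec{a}_j\}$ is $c_{ij}$, which exists iff $\{v_i,v_j\}$ is a non-edge. Hence $S=\{\vec{a}_{i_1},\dots,\vec{a}_{i_k}\}$ is commonly completable iff no two vertices in $\{v_{i_1},\dots,v_{i_k}\}$ are non-adjacent, i.e., iff they form a $k$-clique in $G$. This yields the claimed \W{1}-hardness for both \DIAMCq\ and \RADCq\ parameterized by $k$ even when $r=0$.

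There is no real technical obstacle in this argument; the only ingredient one has to spot is that pairwise conflicts in the common-completion problem can encode non-edges one by one via a dedicated coordinate, and that $r=0$ collapses both the diameter and the radius variant to exactly this problem.
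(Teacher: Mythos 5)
Your proposal is correct and takes essentially the same approach as the paper: the paper reduces from \textsc{Independent Set} using one conflict coordinate per edge, whereas you reduce from \textsc{Clique} using one conflict coordinate per non-edge, which is the same construction applied to the complement graph. Both rely on the identical observation that $r=0$ collapses both problems to asking for $k$ vectors with no pairwise $0/1$ conflicts, i.e., a common completion.
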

\iflong \begin{proof}
              We provide a parameterized reduction from {\sc Independent Set}, which is well known to be \W{1}-complete when parameterized by the size of the sought-after set~\cite{DowneyFellows13}.
        Consider an instance $(G, k)$ of {\sc Independent Set} where $E(G)=\{e_1,\dotsc,e_m\}$, and let us fix an arbitrary ordering $\prec$ on $V(G)$.
  Construct the multiset $M$ of vectors as follows: for each $v\in V(G)$, add a vertex $\vec{v} \in
  \{0,1,\blank\}^{m}$ to $M$ such that
  \begin{itemize}
  \item $\vec{v}[i]=\blank$ if $e_i$ is not incident with $v$,
  \item $\vec{v}[i]=0$ if $e_i=\{u,v\}$ is incident with $v$ and $v\prec u$, and
  \item $\vec{v}[i]=1$ otherwise.
  \end{itemize}

    This completes the construction of $M$; the reduction outputs the instance $(M,k,0)$ of \RADCq\ (or \DIAMCq).

    For correctness, it is straightforward to verify that if $(G,k)$ is a \YES-instance, then $M$ admits a completion $M^*$ and a subset $S\subseteq M^*$ containing at least $k$ identical vectors (indeed, the set of vectors corresponding to an independent set $X$ in $G$ can only contain at most $1$ non-$\blank$ entry on each coordinate). On the other hand, if $M$ admits a completion $M^*$ and a subset $S\subseteq M^*$ containing at $k$ identical vectors, we observe that the vertices corresponding to the vectors in $S$ must form an independent set: indeed, any edge between two such vertices would prevent them to be completed in an identical way.
\end{proof} \fi

\section{Finding a RAD-Cluster in Incomplete Data}

In this section, we present our results for \RADCq{}. We will show that
\RADCq{} is \FPT{} parameterized by
$k+r+\MEP(M)$, and is in \XP{} parameterized by $r+\MEP{}$
alone. Notably, the degree of the polynomial in the run-time of our
\XP{} algorithm grows only logarithmically in $r$ and the algorithm
can be employed to solve the \textsc{Closest String
  with Outliers} problem~\cite{boucherma,BulteauS20}.

Before proceeding to the main contributions of this section, we
observe that, by combining the trivial branching procedure, in which we
branch over all sets of $k$ vectors from $M$ (where in each branch we
proceed under the assumption that all vectors outside of the set can
be deleted), with a previous result of Hermelin and
Rozenberg~\cite[Theorem 2]{hermelin}, which solves the special case of
\RADCq{} for $k=|M|$, we obtain:

\begin{OBS}\label{obs:XPk}
\RADCq{} parameterized by $k$ is in \XP.
\end{OBS}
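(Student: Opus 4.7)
The plan is to combine two standard ingredients. First, I would enumerate all size-$k$ subsets $S \subseteq M$, of which there are at most $\binom{|M|}{k}=O(|M|^k)$; producing them runs in \XP{}-time with respect to $k$. Any such subset is a candidate for the preimage of the desired $k$-\RCLUS{} under the bijection witnessing the completion, and exhaustive enumeration is plainly sound and complete: if $(M,k,r)$ is a \YES-instance, then the $k$ vectors in $M$ whose completions form the cluster constitute one of the enumerated subsets; conversely, any enumerated subset that can itself be completed to an $|S|$-\RCLUS{} of radius at most $r$ extends to a completion of all of $M$ by filling in the vectors in $M\setminus S$ arbitrarily (say, setting every $\blank$ to $0$), yielding a completion of $M$ that contains the required cluster.

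For each candidate subset $S$, deciding whether $S$ can be completed so that all its vectors lie within Hamming distance $r$ of some common center is precisely the \textsc{Closest String with Wildcards} problem on the $|S|=k$ vectors of $S$, i.e.\ the special case of \RADCq{} in which $k=|M|$. I would invoke Hermelin and Rozenberg's algorithm~\cite[Theorem 2]{hermelin} as a black-box subroutine on each such candidate. Since the input to each call has size $k\cdot d$ and their algorithm runs in time polynomial in this size for each fixed value of $k$ (and of $r$, which is bounded by $d$), the total running time is bounded by $O(|M|^k)$ multiplied by a polynomial in the input size, and hence lies in $|M|^{f(k)}$ for some computable $f$, placing \RADCq{} in \XP{} with parameter $k$.

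There is no substantive obstacle here, as both ingredients are already established and the statement is merely recording that they compose in the expected way. The only care required is to ensure that the inner subroutine is invoked on exactly the $k$ guessed vectors (and not on all of $M$), so that no dependence on $|M|$ leaks into the exponent coming from the Hermelin–Rozenberg routine; this is immediate from the black-box reduction described above.
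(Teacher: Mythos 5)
Your argument is precisely the one the paper sketches just before the observation: branch over all $\binom{|M|}{k}$ size-$k$ subsets of $M$, and on each subset invoke the Hermelin--Rozenberg algorithm for the special case $k=|M|$ (Closest String with Wildcards) as a black box. The only details you add — completing the discarded rows arbitrarily and noting that $r\le d$ without loss of generality — are implicit in the paper, so the two proofs are essentially identical.
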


Together with the previously-established Theorem~\ref{thm:diamrad-Whard-k} (showing the \W{1}\hy hardness for
\RADCq{} w.r.t.\ $k$ even for $r=0$), this gives us an almost
complete picture of the parameterized complexity of \RADCq{} for
any combination of the parameters $k$, $r$, $\MEP(M)$. The only two
questions that remain open are whether the \XP{} result for $r+\MEP(M)$ can
be improved to an \FPT-result (as this has been the case for
\DIAMCq{}), and whether it is possible to obtain an \FPT-algorithm
either for parameter $k$ or $k+\MEP(M)$. As a first step in this direction, we
present an \FPT-approximation scheme for parameter
$r+\MEP(M)$ in Section~\ref{sec:fpt-appr}.
The following observation will be useful:

\begin{OBS}\label{obs:center}
 Given a (complete) vector $\vec{s} \in \{0, 1\}^d$, in time $\bigoh(|M|d)$ we can decide if $\vec{s}$ is the center of a \RADC{} of $k$ vectors in $M$. 
\end{OBS}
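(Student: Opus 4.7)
The plan is to reduce the question to a purely local check, one vector at a time. For each $\vec{v}\in M$, the minimum achievable Hamming distance from $\vec{s}$ over all completions of $\vec{v}$ is exactly $|\HSET(\vec{s},\vec{v})|$: indeed, the coordinates in $\HSET(\vec{s},\vec{v})$ are forced disagreements, while every missing entry of $\vec{v}$ can be independently completed to match $\vec{s}[i]$, contributing nothing to the distance. Thus $\vec{v}$ admits a completion $\vec{v}^*$ with $\HDIST(\vec{s},\vec{v}^*)\leq r$ if and only if $|\HSET(\vec{s},\vec{v})|\leq r$.

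Since completions of distinct vectors in $M$ are chosen independently, a \RCLUS{} of $k$ vectors with center $\vec{s}$ exists in some completion of $M$ if and only if at least $k$ vectors $\vec{v}\in M$ satisfy $|\HSET(\vec{s},\vec{v})|\leq r$. The algorithm therefore iterates through $M$, computes $|\HSET(\vec{s},\vec{v})|$ for each $\vec{v}$ by a single linear scan over its $d$ coordinates, and maintains a counter of vectors meeting the threshold; it returns \yes\ iff the counter reaches $k$.

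The running time is $\bigoh(d)$ per vector and $\bigoh(|M|d)$ in total, as required. There is no substantive obstacle here: the only point to be careful about is that the bijection in the definition of a completion allows each $\vec{v}\in M$ to be completed independently, which is exactly what justifies replacing the global question by the per-vector test described above.
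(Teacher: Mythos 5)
Your proof is correct and takes essentially the same approach as the paper: completing each vector's missing entries to agree with $\vec{s}$ minimizes the distance to $\vec{s}$, which is exactly $|\HSET(\vec{s},\vec{v})|$, so a per-vector threshold test in $\bigoh(d)$ time suffices. The extra justification you give about independence of completions across vectors is correct and implicit in the paper's argument.
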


Observation~\ref{obs:center} is straightforward since we can find all vectors $\vec{w}\in M$ that can be completed to a vector~$\vec{w}^*$ at distance at most \(r\) from \(\vec{s}\) by letting \(\vec{w}^*[i] = \vec{s}[i]\) wherever \(\vec{w}^*[i] = \blank\), and then decide whether $\vec{w}^*$ is such a vector by computing  \(\HDIST(\vec{w}^*,\vec{s})\).

\subsection{\RADCq{} Parameterized by $k+r+\MEP(M)$.}

We start by showing that, as in the case of \DIAMCq{}, \RADCq{} parameterized by $k+r+\MEP(M)$ has a Turing
kernel. The approach is similar to that
in Subsection~\ref{sssec:diam-k-r-comb}\ifshort.\fi \iflong and will, in particular, make use of Lemma~\ref{lem:krcom-prune}
and Lemma~\ref{lem:krcom-kernel}.\fi
\iflong \begin{THE} \fi \ifshort \begin{THE}[$\spadesuit$] \fi\label{thm:ANY_k_r_comb}
  \RADCq{} parameterized by $k+r+\MEP(M)$ has a Turing-kernel
  containing at most $n=k3^{\MEP(M)+2r}+\MEP(M)+2$ vectors, each
  having at most $\max\{2r(n-1)+\MEP(M),\binom{\MEP(M)}{2}(2r+1)\}$ coordinates.
\end{THE}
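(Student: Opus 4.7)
The plan is to adapt the three-case analysis from the proof of Theorem~\ref{thm:DIAM_k_r_comb} to the radius setting. The central observation is that Lemma~\ref{lem:krcom-kernel}, applied with any $t$, simultaneously produces a \DCLUS{} of diameter $t$ \emph{and} a \RCLUS{} of radius $t/2$. Since any \RCLUS{} of radius $r$ has diameter at most $2r$, we can take $t=2r$ throughout; this both dictates the exponent of $3$ in the kernel size and the factor $2r$ (rather than $r$) appearing in the coordinate bound.

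First I would distinguish three cases according to how many vectors from $M\setminus D_M$ the target \RCLUS{} contains. In the first case (at least two such vectors), I guess $\vec{v},\vec{u}\in M\setminus D_M$ that are farthest apart within the solution among vectors in $M\setminus D_M$, and guess the exact eventual distance $t\le 2r$ between them. Lemma~\ref{lem:krcom-prune} then lets me safely prune every vector $\vec{m}$ violating any of its three conditions. If the resulting number of vectors in $M\setminus(D_M\cup\{\vec{v},\vec{u}\})$ exceeds the threshold of Lemma~\ref{lem:krcom-kernel}, the lemma hands us a \RCLUS{} of radius $t/2\le r$ and size at least $k$, and we output a trivial \YES-instance. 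Otherwise, the bounded number of vectors in $M\setminus D_M$, together with $D_M$ itself (after removing any vector at guaranteed distance more than $2r$ from $\vec{v}$), yields the stated vector bound. Every remaining vector then differs from $\vec{v}$ in at most $2r$ coordinates, so keeping only $\DCOOR(M)$ gives at most $2r(n-1)+\MEP(M)$ coordinates.

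The second case, where the solution contains exactly one vector from $M\setminus D_M$, is handled by guessing that vector $\vec{m}$, deleting all other vectors of $M\setminus D_M$, and pruning $D_M$ by the $2r$-distance criterion, yielding a small instance whose coordinate count is bounded analogously. The third case, where no vector of $M\setminus D_M$ is in the solution, is handled by removing $M\setminus D_M$ entirely and then applying the coordinate-reduction procedure from the \DIAMCq{} proof, now with threshold $2r$ in place of $r$: for every pair $\vec{m},\vec{m}'\in D_M$, add all of $\HSET(\vec{m},\vec{m}')$ to $D$ if it has size at most $2r$, otherwise add an arbitrary subset of $\HSET(\vec{m},\vec{m}')$ of size exactly $2r+1$. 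This gives $|D|\le \binom{\MEP(M)}{2}(2r+1)$.

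The main obstacle is verifying that $(M,k,r)$ and $(M_D,k,r)$ are equivalent instances of \RADCq{} in the third case, since the radius condition requires the existence of a single center rather than only pairwise bounds. The forward direction is immediate. For the backward direction, any \RCLUS{} of $M_D$ has diameter at most $2r$, so every cluster pair $\vec{m},\vec{m}'$ satisfies $|\HSET(\vec{m},\vec{m}')|\le 2r$ when restricted to $D$; by the construction of $D$, this forces $|\HSET(\vec{m},\vec{m}')|\le 2r$ already in $M$, so in particular all guaranteed disagreements of such a pair lie inside $D$. Consequently, on every coordinate outside $D$, any two cluster vectors agree wherever they are both non-$\blank$, which lets me uniformly complete all $\blank$-entries on these coordinates and extend the center of $M_D$ to $[d]$ without increasing any vector's distance to the center beyond $r$. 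Combining the three cases yields the claimed Turing kernel.
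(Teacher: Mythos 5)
Your proposal follows essentially the same route as the paper's own proof: the same three-case split according to how many solution vectors lie in $M\setminus D_M$, the same appeal to Lemma~\ref{lem:krcom-prune} and Lemma~\ref{lem:krcom-kernel} (exploiting that the latter simultaneously yields a $\textsc{Rad}$-cluster of radius $t/2$), and the same ``$r\to 2r$'' adjustment of the diameter threshold, reflected both in the exponent $3^{\cdot+2r}$ and in the coordinate bounds; case~(3) uses the identical pairwise coordinate-reduction construction with a $2r{+}1$ cap per pair. The only substantive difference is that the paper dismisses the equivalence of $(M,k,r)$ and $(M_D,k,r)$ in case~(3) as ``straightforward,'' whereas you give the argument explicitly and correctly: a radius-$r$ cluster in $M_D$ has pairwise Hamming distance at most $2r$ there, which by the construction of $D$ forces all guaranteed disagreements of any cluster pair to lie inside $D$, and therefore the completions and the centre can be extended uniformly on coordinates outside $D$ without increasing any distance to the centre. (As a side note, your choice to prune $D_M$ by the cluster-diameter bound $2r$ rather than by the guessed farthest-pair distance $t$ is in fact the cleaner and safer criterion, since vectors of $D_M$ in the solution could lie farther from $\vec{v}$ than $t$; either way the stated size bounds are unaffected.)
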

\iflong \begin{proof}
    In the following, let $D_M$ be a deletion set for $M$ such that every
  vector in $M\setminus D_M$ has at most $\MEP(M)$ missing entries.
  We distinguish three cases: (1) the solution cluster contains at
  least two vectors in $M\setminus D_M$, (2) the solution cluster
  contains exactly one vector in $M\setminus D_M$, and (3) the
  solution cluster does not contain any vector in $M\setminus D_M$.

  We start by showing the result for case (1).
  Our first goal is to reduce the number of vectors in $M\setminus
  D_M$. As a first step, we
  guess two vectors $\vec{v}$ and $\vec{u}$ of $M\setminus D_M$ that
  are farthest apart in the cluster w.r.t.\ to all vectors in
  $M\setminus D_M$; this is possible since we assume that the cluster
  contains at least two vectors in $M\setminus D_M$.
              We then guess the exact distance, say $t$, between $\vec{v}$
  and $\vec{u}$ in a completion leading to the cluster. Note that $t$ can
  be anywhere between $\HDIST(\vec{v},\vec{u})$ and
  $\min \{\HDIST(\vec{v},\vec{u})+2|D_M|\}$, but also at most $2r$.
  We now remove all vectors $\vec{m}$
  from $M$ for which either $\HDIST(\vec{v},\vec{m})>t$,
  $\HDIST(\vec{u},\vec{m})>t$, or
  $|\HSET(\vec{v},\vec{m})\cap\HSET(\vec{u},\vec{m})|>t/2$.
  Note that this is safe because of Lemma~\ref{lem:krcom-prune}.
  It now follows from Lemma~\ref{lem:krcom-kernel} that if $M\setminus
  (D_M\cup\{\vec{v},\vec{u}\})$
  contains more than $k3^{|D_M|+t}$ vectors, then we can return a
  trivial \YES\hy instance of \RADCq{}.
                                                                            Otherwise, we obtain that $|M\setminus
  D_M|\leq k3^{|D_M|+t}+2$. We now also need to add back the vectors
  in $D_M$. Clearly, if a vector in $D_M$ differs in more than $t$
  coordinates from $\vec{v}$, it
  cannot be part of an \RCLUS{} containing $\vec{v}$, and we can safely remove it from
  $D_M$. Hence every vector in $M$ now differs from $\vec{v}$ in at most $t$
  coordinates, which implies that $\DCOOR(M)\leq
  t(|M|-1)+|D_M|$. We now remove all coordinates outside of
  $\DCOOR(M)$ from $M$, since these can always be completed in the
  same manner for all the vectors. Now the remaining instance is a
  kernel containing at most $m=k3^{|D_M|+t}+2+|D_M|$ vectors each
  having at most $t(m-1)+|D_M|$ coordinates. This completes the proof for
  the case that the solution cluster contains at least two vectors in
  $M\setminus D_M$.

  In the following, let $t=2r$.

  For the second case, \ie, the case that the solution cluster
  contains exactly one vector in $M\setminus D_M$, we first guess the
  vector say $\vec{m}$ in $M\setminus D_M$ that will be included in the solution
  cluster and then remove all other vectors from $M \setminus
  D_M$.  This
  already leaves us with at most $|D_M|+1$ vectors and it only remains
  to reduce the number of relevant coordinates. Because we guessed
  that $\vec{m}$ will be in the solution cluster, we can now safely
  remove all vectors $\vec{m}' \in M$, with
  $\HDIST(\vec{m},\vec{m}')>t$. Now every remaining vector
  differs from $\vec{m}$ in at most $t$ coordinates and hence
  $\DCOOR(M) \leq t(|D_M|+1)+|D_M|$, which gives us the desired kernel.

  For the third case, \ie, the case that the solution cluster
  contains no vectors in $M\setminus D_M$, we first remove all vectors
  in $M\setminus D_M$. This leaves us with only $|D_M|$ vectors and it
  only remains to reduce the number of coordinates. To achieve this,
      we compute a set of coordinates
  that preserves the distance up to $t$ between any pair of vectors.
  Namely, we compute a set $D$ of relevant coordinates starting from
  $D=\emptyset$ by adding the following coordinates to $D$
  for every two distinct vectors $\vec{m}$ and $\vec{m}'$ in
  $M$:
  \begin{itemize}
  \item if $|\HSET(\vec{m},\vec{m}')|\leq t$, we add
    $\HSET(\vec{m},\vec{m}')$ to $D$ and otherwise
  \item we add an arbitrary subset of at most $t+1$ coordinates in
    $\HSET(\vec{m},\vec{m}')$ to $D$.
  \end{itemize}
  Let $M_D$ be the matrix obtained from $D$ after removing all
  coordinates/columns in $D$.
      It is now
  straightforward to show that $(M,k,r)$ and $(M_D,k,r)$ are
  equivalent instances of \RADCq{}. Since $|D|\leq
  \binom{|D_M|}{2}(t+1)$, the remaining instance has at most $|D_M|$
  vectors each having at most $\binom{|D_M|}{2}(t+1)$ coordinates.
\end{proof} \fi

\subsection{\RADCq{} Parameterized by $r+\MEP(M)$}

While, it is relatively easy to see that \RADCq{} parameterized by
$r+\MEP(M)$ can be solved in time $f(\MEP(M),r)n^{\bigoh(r)}$,
here we provide a more efficient algorithm by
reducing the degree of the polynomial in the run-time from $\bigoh(r)$
to $\log r$. Moreover, our algorithm can be
applied to the \textsc{Closest String with Outliers}
problem~\cite{BulteauS20}.

\ifshort \begin{THE}[$\spadesuit$] \label{thm:MAX-ANY-r+comb}
  \RADCq{} can be solved in time $\bigoh(|M|2^{\MEP(M)}(|M|(2^{2r}+d))^{\log r+1})$
  and is therefore in \XP{} parameterized $r+\MEP(M)$.
\end{THE}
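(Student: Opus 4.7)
The plan is to combine an initial guess of a cluster member and its completion with a recursive halving procedure that refines a candidate center. As a preliminary step, I would handle the edge case $k \le \MEP(M)$: since $|D_M|\le \MEP(M)$ and a $k$-cluster consisting only of vectors in $D_M$ would satisfy $k \le |D_M|$, this case is dispatched by enumerating all $\binom{|D_M|}{k}$ subsets of $D_M$ and checking each via Observation~\ref{obs:center}. In the main case $k > \MEP(M)$ the sought-after cluster must contain at least one vector $\vec{w}_0\in M\setminus D_M$, which has at most $\MEP(M)$ missing entries. I would enumerate $\vec{w}_0$ and each of its $2^{\MEP(M)}$ completions $\vec{w}_0^*$, producing the leading factor $|M|\cdot 2^{\MEP(M)}$. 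For the correct guess $\HDIST(\vec{w}_0^*,\vec{s}^*)\le r$ with $\vec{w}_0^*$ complete, so I would initiate a recursive routine $\textsf{Refine}(\vec{w}_0^*, r)$ maintaining the invariant that the true center $\vec{s}^*$ satisfies $|\HSET(\vec{s},\vec{s}^*)|\le r'$.

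The routine $\textsf{Refine}(\vec{s}, r')$ first checks, via Observation~\ref{obs:center}, whether $\vec{s}$ already covers $\ge k$ vectors within Hamming distance $r$ and returns success if so; if $r'=0$, it returns failure. Otherwise it enumerates all \emph{bad candidates} $\vec{v}\in M$ with $r<\HDIST(\vec{s},\vec{v})\le r+r'$. For each such $\vec{v}$, set $D_v:=\HSET(\vec{s},\vec{v})$ and observe $|D_v|\le r+r'\le 2r$. The routine then branches over every subset $X\subseteq D_v$ (at most $2^{|D_v|}\le 2^{2r}$ choices), forms $\vec{s}':=\vec{s}\oplus X$, and recursively calls $\textsf{Refine}(\vec{s}',\lfloor r'/2\rfloor)$. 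The branching factor per level is $|M|\cdot 2^{2r}$ and the per-node bookkeeping (Hamming-distance computations, identifying bad candidates, and enumerating subsets) takes $\bigoh(|M|(d+2^{2r}))$, which is precisely the $|M|(2^{2r}+d)$ term appearing in the statement.

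The crucial halving lemma I would prove is the following: if the selected bad candidate $\vec{v}$ truly lies in the cluster and we take $X=F\cap D_v$ with $F:=\HSET(\vec{s},\vec{s}^*)$, then $\vec{s}\oplus X$ satisfies $|\HSET(\vec{s}\oplus X,\vec{s}^*)|=|F\setminus D_v|\le \lfloor r'/2\rfloor$. This follows from the identity $\HDIST(\vec{v},\vec{s}^*)=|D_v|+|F\cap K_v|-2|F\cap D_v|\le r$, where $K_v$ denotes the known coordinates of $\vec{v}$, combined with $|D_v|\ge r+1$, which yields $|F\cap D_v|>|F\cap K_v|/2$ and, after substitution, $|F\setminus D_v|<|F|/2\le r'/2$. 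A straightforward pigeonhole shows that whenever $\vec{s}$ covers fewer than $k$ vectors within radius $r$, some cluster member must be a bad candidate, so the enumeration reaches the correct $(\vec{v},X)$ pair in some branch; since $r'$ halves in that branch, the recursion depth is at most $\log r+1$. Multiplying the branches and work gives the claimed total runtime $\bigoh(|M|\cdot 2^{\MEP(M)}\cdot(|M|(2^{2r}+d))^{\log r+1})$.

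The main obstacle I anticipate is the interaction of the halving argument with missing entries: if $\vec{v}$ has many unknown coordinates inside $F$, the value $|F\cap K_v|$ may be strictly smaller than $|F|$, weakening the halving bound derived above. I plan to address this by restricting bad-candidate enumeration to vectors in $M\setminus D_M$, where the deficit $|F|-|F\cap K_v|$ is at most $\MEP(M)$, and by handling the up-to-$|D_M|\le\MEP(M)$ possible cluster members from $D_M$ through an additional outer guess whose cost is absorbed into the leading $2^{\MEP(M)}$ factor. With this accommodation in place the analysis of the halving lemma and depth bound carries over unchanged, yielding the stated \XP{} bound in $r+\MEP(M)$.
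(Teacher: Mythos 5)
Your approach is essentially the paper's: the recursive routine $\textsf{Refine}(\vec{s},r')$ corresponds to the paper's $\findCenter$, your invariant that the true center $\vec{s}^*$ has $|\HSET(\vec{s},\vec{s}^*)|\leq r'$ is (a mild strengthening of) the paper's notion of an $(F,t)$-seed, the halving identity you derive matches the paper's argument that $\vec{v}'$ differs from $\vec{c}$ in at most $\min\{t-|C|,|C|-1\}\le t/2$ coordinates, and the run-time decomposition $|M|\cdot 2^{\MEP(M)}\cdot\bigl(|M|(2^{2r}+d)\bigr)^{\log r+1}$ is the same. A minor misstep is the edge case $k\leq\MEP(M)$: enumerating $k$-subsets of $D_M$ and ``checking each via Observation~\ref{obs:center}'' does not work because that observation takes a \emph{center} as input, and for a handful of incomplete vectors picked from $D_M$ the center is exactly what you do not yet have; the paper dispatches this case by invoking Theorem~\ref{thm:ANY_k_r_comb}.

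The more important issue is the one you yourself flag but then wave away. When the bad candidate $\vec{v}$ (the paper's $\vec{m}$) has missing entries on coordinates in $F\setminus D_v$, the identity
$\HDIST(\vec{v},\vec{s}^*)=|D_v|+|F\cap K_v|-2|F\cap D_v|$
only yields $|F\cap D_v|>|F\cap K_v|/2$, and therefore
$|F\setminus D_v| < (|F|-|F\cap K_v|)+|F\cap K_v|/2$, which can exceed $r'/2$ by up to the number of missing entries of $\vec{v}$. Restricting bad candidates to $M\setminus D_M$ caps this excess at $\MEP(M)$, but that does \emph{not} make ``the halving lemma and depth bound carry over unchanged'': the recurrence becomes $t_{i+1}\leq (t_i+\MEP(M))/2$, whose fixed point is about $\MEP(M)$ rather than~$0$, so the recursion does not terminate after $\log r$ levels as claimed. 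The standard way to repair this is to also guess the completion of the bad candidate's $\leq\MEP(M)$ missing coordinates before forming $D_v$ (costing a further $2^{\MEP(M)}$ factor per level), or to cut the recursion off once $t\leq\MEP(M)$ and brute-force the remaining coordinates; either way the book-keeping changes. It is worth noting that the paper's own write-up is terse here as well---the step bounding $\HSET(\vec{v}',\vec{c})$ by $r-(|D|-|C|)$ implicitly relies on $\vec{m}$ being defined wherever $\vec{v}'$ and $\vec{c}$ disagree---so you have in fact put your finger on the genuinely delicate point of this proof; you just need to carry the correction all the way through the depth analysis rather than assert that it absorbs into the existing factors.
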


\begin{proof}[Proof Sketch]
The main ideas behind the algorithm are captured by the
following definition and discussions.
Let $F \subseteq [d]$ and let $t$ be an integer, where $0 \leq t \leq r$.
We say that a vector $\vec{v}\in \{0,1\}^d$
is an \emph{$(F,t)$-seed} for a center $\vec{c} \in
\{0,1\}^d$ of a solution for $(M,k,r)$ if it satisfies:
\begin{description}
\item[(C1)] $\vec{c}$ agrees with $\vec{v}$ on all coordinates in $F$; and
\item[(C2)] $\vec{c}$ differs from $\vec{v}$ on at most $t$ coordinates
  outside of $F$.
\end{description}
We can show the following statement ($\spadesuit$). If $\vec{v}$ is an $(F,t)$-seed for $\vec{c}$,
then either $\vec{v}$ is the center of a solution for $(M,k,r)$, or
there is a vector $\vec{m} \in M$ with $r < \HDIST(\vec{v},\vec{m})
\leq 2r$ and a subset $C \subset D\setminus F$, where
$D=\HSET(\vec{v},\vec{m})$, such that the vector $\vec{v}'$ obtained
from $\vec{v}$ by complementing all coordinates in $C$ is an $(F\cup
D,t/2)$-seed for $\vec{c}$.  Note that testing the former possibility, that is, whether a vector $\vec{v} \in \{0,1\}^d$ is
  a center of a solution for $(M,k,r)$, can be done in time
  $\bigoh(|M|d)$ by Observation~\ref{obs:center}.

Since there at most $M2^{2r}$ possibilities for $\vec{m}$ and $C$, we
can use the above statement to obtain a $(F\cup D,t/2)$-seed from a given
$(F,t)$-seed. This can be employed within a recursive procedure that,
given a $(\emptyset,r)$-seed for some center $\vec{c}$ of a solution
either obtains a center of a solution or obtains a $(F',0)$-seed
(which itself is the center of a solution), in at
most $\log r$ recursive steps. It only remains to find a
$(\emptyset,r)$-seed for some center $\vec{c}$ of a solution, which
can be achieved by guessing the completion $\vec{v}$ of any vector in
$M\setminus D_M$ that will be in a solution. Note that if $M\setminus D_M$ does not contain a vector in the solution, then $k\le \MEP(M)$ and
 the result follows from  
  Theorem~\ref{thm:ANY_k_r_comb}.
  \end{proof}
\fi

\iflong \begin{THE} \label{thm:MAX-ANY-r+comb}
  \RADCq{} can be solved in time $\bigoh(|M|2^{\MEP(M)}(|M|(2^{2r}+d))^{\log r+1})$
  and is therefore in \XP{} parameterized $r+\MEP(M)$.
\end{THE}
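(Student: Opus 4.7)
My plan is to design a recursive branching procedure whose state is a so-called \emph{$(F,t)$-seed}: a complete vector $\vec{v}\in\{0,1\}^d$ that is known to agree with some hypothetical center $\vec{c}$ of a solution on a set $F\subseteq [d]$ and to differ from $\vec{c}$ on at most $t$ coordinates outside $F$. Starting at $t=r$, every refinement step either certifies that $\vec{v}$ itself is a valid center (tested in $\bigoh(|M|d)$ time via Observation~\ref{obs:center}, without needing any knowledge of $\vec{c}$) or produces a refined $(F',\lfloor t/2\rfloor)$-seed $\vec{v}'$ with $F\subsetneq F'$. Since $t$ halves at each step, the recursion has depth $\bigoh(\log r)$; once $t$ reaches $0$ the current seed must coincide with some valid center and the final center-test settles the instance.

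The heart of the argument is the following structural claim that drives the halving: if $\vec{v}$ is an $(F,t)$-seed for $\vec{c}$ but is not itself a center of any solution, then there exist $\vec{m}\in M$ with $r<\HDIST(\vec{v},\vec{m})\le 2r$ together with a subset $C\subseteq D\setminus F$, where $D=\HSET(\vec{v},\vec{m})$, such that the vector $\vec{v}'$ obtained by flipping exactly the coordinates in $C$ is an $(F\cup D,\lfloor t/2\rfloor)$-seed for $\vec{c}$. To prove the claim, I pick $\vec{m}$ so that its fixed completion $\vec{m}^*$ inside the solution satisfies $\HDIST(\vec{c},\vec{m}^*)\le r$ and $\HDIST(\vec{v},\vec{m}^*)>r$; such $\vec{m}$ must exist, because otherwise every solution vector would admit a completion at distance at most $r$ from $\vec{v}$ and $\vec{v}$ would itself be a center. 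I then partition $\HSET(\vec{v},\vec{m}^*)$ into coordinates on which $\vec{c}$ agrees with $\vec{v}$ and coordinates on which $\vec{c}$ disagrees with $\vec{v}$. Combining $\HDIST(\vec{v},\vec{m}^*)>r\ge\HDIST(\vec{c},\vec{m}^*)$ with this decomposition forces the ``disagreeing part'' of $D$ to strictly exceed the residual disagreement $\HSET(\vec{v},\vec{c})\setminus D$, and together with $|\HSET(\vec{v},\vec{c})|\le t$ this yields a residual disagreement of size at most $\lfloor(t-1)/2\rfloor\le\lfloor t/2\rfloor$, once $C$ is taken to be the disagreeing half of $D\setminus F$. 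The bound $\HDIST(\vec{v},\vec{m}^*)\le\HDIST(\vec{v},\vec{c})+\HDIST(\vec{c},\vec{m}^*)\le t+r\le 2r$ finishes the claim.

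Algorithmically, since the correct subset $C$ is unknown, the refinement step branches over all $\vec{m}\in M$ satisfying $r<\HDIST(\vec{v},\vec{m})\le 2r$ and over all $2^{|D\setminus F|}\le 2^{2r}$ candidate subsets $C\subseteq D\setminus F$; the branching factor per level is thus at most $|M|\cdot 2^{2r}$, and the per-node work is $\bigoh(|M|d)$ for the center test plus $\bigoh(d)$ for coordinate flipping, which I absorb into an $\bigoh(|M|(2^{2r}+d))$ factor per branch. To supply an initial $(\emptyset,r)$-seed, I distinguish two cases: if $k\le\MEP(M)$, I invoke Theorem~\ref{thm:ANY_k_r_comb} directly; otherwise, the hypothetical cluster contains at least one vector from $M\setminus D_M$, and its completion inside the cluster is at distance at most $r$ from $\vec{c}$ and therefore is already a $(\emptyset,r)$-seed, so I enumerate all $\le|M|\cdot 2^{\MEP(M)}$ such completions as initial candidates.

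The main obstacle will be establishing the halving inequality cleanly, especially in handling the case where $\vec{m}$ is incomplete and $\HSET(\vec{v},\vec{m})$ is a strict subset of $\HSET(\vec{v},\vec{m}^*)$; I will need to argue carefully that it suffices to work with the incomplete $\vec{m}$ for branching purposes, since any completion lying in the solution induces a valid target subset of $\HSET(\vec{v},\vec{m})$ that the exhaustive enumeration over $C$ will eventually consider. With the claim in place, multiplying the outer enumeration factor $|M|\cdot 2^{\MEP(M)}$ by the branching tree of depth $\log r+1$ and branching factor $|M|(2^{2r}+d)$ yields the claimed running time $\bigoh(|M|\cdot 2^{\MEP(M)}\cdot(|M|(2^{2r}+d))^{\log r+1})$, placing \RADCq{} in \XP{} parameterized by $r+\MEP(M)$.
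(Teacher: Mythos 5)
Your proposal takes essentially the same route as the paper's proof: the same $(F,t)$-seed invariant, the same halving claim obtained by bounding the residual disagreement of $\vec{v}'$ with $\vec{c}$ by $\min\{t-|C|,|C|-1\}\le t/2$, the same per-level branching over $\vec{m}\in M$ with $r<\HDIST(\vec{v},\vec{m})\le 2r$ and subsets $C\subseteq\HSET(\vec{v},\vec{m})\setminus F$ (giving factor $|M|\cdot 2^{2r}$, depth $\log r$), the center test via Observation~\ref{obs:center}, and the same initialization from the $\bigoh(|M|2^{\MEP(M)})$ completions of vectors in $M\setminus D_M$, with Theorem~\ref{thm:ANY_k_r_comb} as a fall-back for $k\le\MEP(M)$.

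The step you single out as the ``main obstacle'' is indeed the crux, and I would push you to not treat it as a routine loose end. The bound $|C|-1$ is obtained by charging the residual disagreements of $\vec{v}'$ with $\vec{c}$ outside $D=\HSET(\vec{v},\vec{m})$ to disagreements of $\vec{m}$ with $\vec{c}$ outside $D$; this is immediate when $\vec{m}$ is complete (then $\vec{v}=\vec{m}$ outside $D$), but on coordinates in $\MEv(\vec{m})$ the vector $\vec{m}$ carries no information, so $\HSET(\vec{v},\vec{c})\setminus D$ need not be contained in $\HSET(\vec{m},\vec{c})$. The natural flip set your analysis produces is $\HSET(\vec{v},\vec{m}^*)\cap\HSET(\vec{v},\vec{c})$, which can intersect $\MEv(\vec{m})$ and hence fall outside the set $D\setminus F$ you branch over; restricting it to $D$ leaves only the bound $t-|C|$, which can be as large as $t-1$ and does not halve. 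Your assertion that ``any completion lying in the solution induces a valid target subset of $\HSET(\vec{v},\vec{m})$'' is therefore precisely the claim that needs proof, not something that follows from exhaustive enumeration, and you should be aware that the paper's own write-up is equally terse here (it reasons with $\vec{m}$ as though it were complete). A plausible repair is to argue that a suitable $\vec{m}$ can always be chosen from $M\setminus D_M$ and to branch over $C\subseteq(\HSET(\vec{v},\vec{m})\cup\MEv(\vec{m}))\setminus F$, which keeps the enumeration bounded by $2^{2r+\MEP(M)}$; but that argument (and how it interacts with the requirement $|D|>r$ and with the case that all far cluster vectors lie in $D_M$) must actually be carried out.
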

\begin{proof}
  Note first that, by Observation~\ref{obs:center}, testing whether a vector $\vec{c} \in \{0,1\}^d$ is
  a center of a solution cluster for $(M,k,r)$ can be achieved in time
  $\bigoh(|M|d)$. Therefore, it suffices to focus on finding a center for the cluster.
  
  Let $(M,k,r)$ be an instance of $\RADCq$. If $k\le \MEP(M)$, then we can
  use the \FPT-algorithm parameterized by $k+r+\MEP(M)$ implied by
  Theorem~\ref{thm:ANY_k_r_comb}. Otherwise, every \RCLUS{} of size at
  least $k$ contains at least one vector in $M\setminus D_M$.

  Let $F \subseteq [d]$ and let $t$ be an integer, where $0 \leq t \leq r$.
  We say that a vector $\vec{v}\in \{0,1\}^d$
  is an \emph{$(F,t)$-seed} for a center $\vec{c} \in
  \{0,1\}^d$ of a solution for $(M,k,r)$ if it satisfies:

  \begin{description}
  \item[(C1)] $\vec{c}$ agrees with $\vec{v}$ on all coordinates in $F$; and
  \item[(C2)] $\vec{c}$ differs from $\vec{v}$ on at most $t$ coordinates
    outside of $F$.
  \end{description}

  We will show next that if $\vec{v}$ is an $(F,t)$-seed for $\vec{c}$,
  then either $\vec{v}$ is the center of a solution for $(M,k,r)$, or
  there is a vector $\vec{m} \in M$ with $r < \HDIST(\vec{v},\vec{m})
  \leq 2r$ and a subset $C \subset D\setminus F$, where
  $D=\HSET(\vec{v},\vec{m})$, such that the vector $\vec{v}'$ obtained
  from $\vec{v}$ by complementing all coordinates in $C$ is an $(F\cup
  D,t/2)$-seed for $\vec{c}$.

  Towards showing the claim, let $\vec{v}$ be an $(F,t)$-seed for some center $\vec{c}$ of a solution.
  Note that if $\vec{v}$ is not yet
  the
  center of a solution, then the solution with center $\vec{c}$ must contain a vector
  $\vec{m} \in M$ such that $r < \HDIST(\vec{v},\vec{m})$ and $\HDIST(\vec{m},\vec{c})\leq
  r$, and hence, $\HDIST(\vec{v},\vec{m})\leq 2r$. Let $C$ be the subset of coordinates in $D\setminus F$,
  on which $\vec{c}$ and $\vec{v}$ differ, and let $\vec{v}'$ be the
  vector obtained from $\vec{v}$ by complementing the values of the
  coordinates in $C$. Then $\vec{c}$ agrees with $\vec{v}'$ on all
  coordinates in $F\cup D$. Moreover, since $\vec{v}$ differs from
  $\vec{c}$ in at most $t$ coordinates, it follows that $\vec{v}'$
  differs from $\vec{c}$ in at most $t-|C|$ coordinates. In addition,
  $\vec{v}'$ differs from $\vec{c}$ in at most $r-(|D|-|C|)\leq |C|-1$
  coordinates; this is true since $\vec{c}$ differs from $\vec{m}$ in at least $|D|-|C|$ and at
  most $r$ coordinates. Therefore, $\vec{v}'$ differs
  from $\vec{c}$ in at most $\min\{t-|C|,|C|-1\}\leq t/2$ coordinates,
  implying that $\vec{v}'$ is a $(F\cup D,t/2)$-seed for $\vec{c}$.

  \newcommand{\findCenter}{\textbf{findCenter}}

  Now  suppose that we are given a vector $\vec{v} \in \{0,1\}^d$ that
  is an $(F,t)$-seed for the center $\vec{c}$ of a solution for
  $(M,k,r)$, and such that $\vec{v}$ itself is not the center of a solution
  for $(M,k,r)$. Then, the above claim shows us how to compute in time
  $\bigoh(|M|d2^{2r})$ a set $V' \subseteq \{0,1\}^d$ of
  at most $|M|2^{2r}$ vectors, that contains a vector $\vec{v}'$ that is an $(F',
  t/2)$-seed for $\vec{c}$ for some $F'$ satisfying $F \subseteq F'$. This
  procedure can now be used within a recursive procedure
  \findCenter$(\vec{v}, F, t)$ that given
  $\vec{v}$, $F$, and $t$ does the following. It first checks in time
  $\bigoh(|M|d)$ whether $\vec{v}$ is already the center of a
  solution for $(M,k,r)$. If so, it returns $\vec{v}$; otherwise it
  branches on the vectors in $V'$
  and calls itself recursively using the call
  \findCenter$(\vec{v}',F',t/2)$ for every vector $\vec{v}'\in V'$ and
  some $F\subseteq F'$, and returns the center of a solution returned
  by one of those recursive calls, or rejects if all recursive
  calls reject. This way \findCenter$(\vec{v},F,t)$
  is guaranteed to return the center of a solution as long as
  $\vec{v}$ is a $(F,t)$-seed for $\vec{c}$. Moreover, since the recursion depth of \findCenter$(\vec{v},F,t)$ is at
  most $\log t$, it runs in time $\bigoh(|M|(2^{2r}+d)(|M|2^{2r})^{\log t})=\bigoh((|M|(2^{2r}+d))^{\log t+1})$.

  Now suppose that $(M,k,r)$ has a solution with center $\vec{c}$. Since the solution contains at least one vector from $M\setminus
  D_M$, there is a vector $\vec{v}' \in M\setminus D_M$ and a completion
  $\vec{v}$ of $\vec{v}'$ such that $\HDIST(\vec{v},\vec{c})\leq
  r$. The vector $\vec{v}$ is a $(\emptyset,r)$-seed for $\vec{c}$ and
  calling \findCenter$(\vec{v},\emptyset, r)$ will return a center
  for a solution in time $\bigoh((|M|(2^{2r}+d))^{\log t+1})$. Since every
  center $\vec{c}$ of a solution is within the $r$-neighborhood of
  some completed vector in $M\setminus D_M$, we can now decide in time $\bigoh(|M|2^{\MEP(M)}(|M|(2^{2r}+d))^{\log t+1})$ whether
  $(M,k,r)$ has a solution (and if so return a center) by calling
  \findCenter$(\vec{v},\emptyset, r)$ for every completion $\vec{v}$
  of every vector $\vec{v}' \in M\setminus D_M$.
                                                                                                          \end{proof} \fi
We note that the algorithm provided by
the above theorem generalizes a previous algorithm of Marx~\cite[Lemma 3.2]{Marx08} for \textsc{Closest Substring}
to strings that may contain unknown characters. In particular, it lifts the concept of ``generators'' to strings with unknown characters by showing that
there are $\log r$
vectors that can be computed efficiently and that define at most $r\log r$ ``important'' coordinates for the center of
some solution.

\subsection{FPT Approximation Scheme Parameterized by \(r+\MEP(M)\)}\label{sec:fpt-appr}

In this subsection we give an algorithm that, for a given instance \((M,k,r)\) of \RADCq\ and \(\varepsilon\in \mathbb{R}\), where \(0<\varepsilon<1\), computes in FPT-time parameterized by \(r+\MEP(M)+\frac{1}{\varepsilon}\)
a center of a \textsc{Rad}-cluster of size at least \((1-\varepsilon)k\), or it correctly concludes that no \RCLUS{} of size $k$ exists.

\ifshort
\begin{THE} [$\spadesuit$] \label{theshort:radius-approximation}
		Given an instance \((M,k,r)\) of \RADCq\ and \(\varepsilon\in \mathbb{R}\), where \(0<\varepsilon<1\), there exists an \FPT{}
		algorithm \(\mathcal{A}\), parameterized by $r+\MEP(M)+\frac{1}{\varepsilon}$, such that \(\mathcal{A}\) either computes a \RCLUS{} of size at least $(1-\varepsilon)k$, or correctly concludes that \(M\) does not contain a \RCLUS{} of size \(k\).
	\end{THE}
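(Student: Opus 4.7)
My plan is to combine the exact FPT algorithm of Theorem~\ref{thm:ANY_k_r_comb} with a modified version of the seed-refinement procedure from Theorem~\ref{thm:MAX-ANY-r+comb}, exploiting the $1/\varepsilon$ slack to bound the per-step branching by a function of the parameters rather than by $|M|$.

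First, if $k \le 2(\MEP(M)+2)/\varepsilon$, then $k$ is already bounded by a function of $r+\MEP(M)+1/\varepsilon$, and I would invoke Theorem~\ref{thm:ANY_k_r_comb} directly to obtain either an exact solution or a correct \NO{} answer within the desired FPT running time. This handles the small-$k$ regime cleanly.

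Otherwise, $k > 2(\MEP(M)+2)/\varepsilon$, hence $|D_M| \le \MEP(M) < \varepsilon k/2$. In this regime, any $k$-\RCLUS{} $P^*$ contains at most $|D_M|$ vectors arising from completions of $D_M$, and removing these yields a \RCLUS{} of size at least $\lceil(1-\varepsilon/2) k\rceil$ with the same center; hence it suffices to search for a $(1-\varepsilon)k$-\RCLUS{} using only vectors from $M \setminus D_M$. I would guess, over $O(|M| \cdot 2^{\MEP(M)})$ options, a completed initial seed $\vec{v}^* \in P^*$ from some vector in $M \setminus D_M$; this is a $(\emptyset, r)$-seed for the true center $\vec{c}$. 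Starting from $\vec{v}^*$, I would run the seed-refinement recursion up to its full depth $O(\log r)$, but with two modifications: (i) at each level, test via Observation~\ref{obs:center} whether the current seed already serves as a center of a \RCLUS{} of size at least $(1-\varepsilon)k$, and terminate if so; and (ii) replace the branching over $|M| \cdot 2^{2r}$ vectors by a carefully chosen set of $f(r, \MEP(M), 1/\varepsilon)$ representative branches.

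The main obstacle will be to construct and justify these bounded representative branches in~(ii). The key observation is that if the current seed $\vec{v}$ is not yet a $(1-\varepsilon)$-approximate center, then more than $\varepsilon k$ cluster vectors lie at distance in $(r, 2r]$ from $\vec{v}$, and these vectors' differing-coordinate patterns exhibit sunflower structure amenable to a succinct representation. Specifically, I would group candidate refinement vectors $\vec{m}$ by their pattern $D = \HSET(\vec{v}, \vec{m})$ outside $F$, and invoke a packing argument analogous to Lemma~\ref{lem:largeCore} to show that only $f(r, \MEP(M), 1/\varepsilon)$ distinct such patterns need be considered to preserve an approximate cluster. The resulting branching factor $f$ per step, together with the $O(\log r)$ recursion depth, yields the desired FPT running time. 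Establishing the bound on representative patterns -- adapted from the diameter setting to the radius setting, where the center is not known a priori -- is the most delicate step of the plan; the remaining analysis (recursion depth, size of the initial guessing space, time for the approximation check) is routine.
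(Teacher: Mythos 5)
Your proposal takes a genuinely different route from the paper's, and that route has a gap at exactly the step you flag as delicate.

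The paper's algorithm does not use the $\log r$-depth seed-refinement from Theorem~\ref{thm:MAX-ANY-r+comb} at all. Instead, after guessing and completing \emph{two} farthest-apart cluster vectors $\vec{v}^*$ and $\vec{u}^*$ (you guess only one seed), normalizing $\vec{v}^*=\vec 0$ and discarding vectors not close to both, the paper first handles the regime $k' < |M|/2^{r_{\max}}$ outright (every remaining vector has at most $r$ ones outside $\HSET(\vec{u}^*)$, so branching over all $2^{r_{\max}}$ subsets of $\HSET(\vec{u}^*)$ already yields a large cluster). In the remaining regime $k' \geq |M|/2^{r_{\max}}$, the center is built one coordinate at a time, depth at most $r$, and the branching at each level is restricted to coordinates that appear in $\HSET(\vec{w})$ for at least $\frac{\varepsilon' |M|}{2^{r_{\max}}\, r}$ of the current far vectors; a simple double-counting over the at most $r_{\max}|M|$ $(\text{vector},\text{one-coordinate})$ incidences shows there are at most $\frac{2^{r_{\max}}\, r\, r_{\max}}{\varepsilon'}$ such heavy coordinates. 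No sunflowers are needed.

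The gap in your proposal is step~(ii): you want to replace the $|M|\cdot 2^{2r}$ branches in each refinement step of Theorem~\ref{thm:MAX-ANY-r+comb} by $f(r,\MEP(M),1/\varepsilon)$ representative branches, justified by "a packing argument analogous to Lemma~\ref{lem:largeCore}." But Lemma~\ref{lem:largeCore} hinges on Lemma~\ref{lem:r-uniform-PAIR}, i.e.\ on the fact that a \emph{maximal diameter}-cluster is an \threshold{$r$} subset of the completed matrix, so that its structure is fully described by a bounded family of sunflower cores (Corollary~\ref{cor:setsofsets}). There is no analogous saturation property for radius-clusters: membership in a \RCLUS{} is governed by distance to an unknown center $\vec{c}$, not by pairwise distances among the cluster vectors, and a sunflower among the patterns $D=\HSET(\vec{v},\vec{m})$ of the far vectors does not determine a usable representative. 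Knowing the common core $K$ of such a sunflower tells you neither which petal $D$ to fix in $F\cup D$ nor which subset $C\subseteq D\setminus F$ to complement; and restricting to $C\subseteq K$ fails to halve $t$, because the petal coordinates remain free. The $\varepsilon$-slack alone does not produce a parameter-bounded family of patterns: there can be $\Omega(|M|)$ distinct patterns $D$ even among the $\varepsilon k$ far cluster vectors. This is precisely the obstacle that leaves the exact \FPT{} status of \RADCq{} parameterized by $r+\MEP(M)$ open in the paper, so importing the diameter-side machinery wholesale is not enough; the paper's route sidesteps it by abandoning the $\log r$-depth recursion in favour of the coordinate-by-coordinate, depth-$r$, heavy-coordinate scheme described above.
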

	
\begin{proof}[Proof Sketch]
The algorithm starts by performing a similar branching and pre-processing to the \FPT{} algorithm for \DIAMCq\ parameterized by \(r+\MEP(M)\). Fix \(M^*\) to be a completion of \(M\) that contains a maximum size \RCLUS{}, let \(P^*\) be such a maximum size \RCLUS{} in \(M^*\), and let \(\vec{s}^*\) be a center of \(P^*\). The goal of the algorithm is to find \(\vec{s}^*\), as given \(\vec{s}^*\), by Observation~\ref{obs:center}, we can decide the instance in time $\bigoh(|M|d)$.

If $k<\frac{2\MEP(M)}{\varepsilon}+2$, then we use the algorithm in Theorem~\ref{thm:ANY_k_r_comb} to obtain a Turing kernel\iflong; since the vectors in the Turing kernel have their dimension bounded by a function of the parameter, we can enumerate all possible centers of the \RCLUS{} to solve every sub-instance of the Turing kernel. Now we can assume that $k\ge\frac{2\MEP(M)}{\varepsilon}+2$. We first observe that $|P^*\setminus D_M|$ contains at least two vectors (recall that \(D_M\) denotes the \(\MEP(M)\)-deletion set in \(M\)).
We now\fi\ifshort . Otherwise, we can\fi{} guess two vectors \(\vec{u}\) and \(\vec{v}\) in \(M\setminus D_M\), together with their respective completions \(\vec{u}^*\) and \(\vec{v}^*\), such that \iflong both\fi{} \(\vec{u}^*\) and \(\vec{v}^*\)\iflong{} are in \(P^*\) and\fi{} are the farthest vectors apart in \(P^*\setminus D_M\); fix $r_{\max}=\HDIST(\vec{v}^*,\vec{u}^*)$.\iflong{} Note that \(r_{\max}\le 2r\). Since there are at most \(|M|^2\) pairs of vectors in \(M\setminus D_M\) and each has at most \(\MEP(M)\) missing entries, there are at most \(2^{2\MEP(M)}|M|^2\) possible pairs of  \(\vec{u}^*\) and \(\vec{v}^*\), and our algorithm enumerates all these pairs. What we describe from now on assumes that the guess of  \(\vec{u}^*\) and \(\vec{v}^*\) is correct.\fi{}
We normalize all the vectors in $M$ so that
$\vec{v}^*$ becomes the all-zero vector\iflong , \ie, we replace $\vec{v}^*$ by
the all-zero vector, and for every other vector $\vec{w}\neq \vec{v}^*$,
we replace it with the vector $\vec{w}'$ such that $\vec{w}'[i]=0$ if
$\vec{v}^*[i]=\vec{w}[i]$, $\vec{w}'[i]=\blank$ if $\vec{w}[i]=\blank$, and $\vec{w}'[i]=1$, otherwise\fi .
Finally, for each vector $\vec{w}\in M$, we can in time $r_{\max}\cdot d$, where \(d\) is the dimension of the vectors in \(M\), check if there is a completion $\vec{w}^*$ of \(\vec{w}\) such that the distance from \(\vec{w}^*\) to both \(\vec{v}^*\) and \(\vec{u}^*\) is at most \(r_{\max}\)\iflong{}. This is because we can always set \(\vec{w}^*[i]=0\) for all \(i\in [d]\setminus\HSET(\vec{u}^*)\) such that  \(\vec{w}^*[i]=\blank\), since in that case \(\vec{v}^*[i]=\vec{u}^*[i]=0\) and we are looking for a completion that is close to both \(\vec{v}^*\) and \(\vec{u}^*\). Hence, to decide if there is such completion, we only need to decide how many coordinates of \(\vec{w}^*\) in \(\HSET(\vec{u}^*)\) we need to set to \(0\) and how many we need to set to \(1\). We can now remove all vectors \(\vec{w}\) that cannot be completed to a vector of distance at most $r_{\max}$ from both $\vec{v}^*$ and $\vec{u}^*$.\fi{}\ifshort{}; we remove all vectors \(\vec{w}\) that do not have such a completion.\fi
~Note here that some vectors in \(P^*\cap D_M\)\iflong{} might be further away from $\vec{v}^*$ or $\vec{u}^*$ and they\fi{} could have been removed from \(M\) at this step.
However,\iflong{} since $k\ge\frac{2\MEP(M)}{\varepsilon}+2$, we have\fi{} \(\MEP(M)\le \frac{\varepsilon}{2}k\) and\iflong{}, if our guess of $\vec{v}^*$ and $\vec{u}^*$ is correct,\fi{} we did not remove any vector from \(P^*\setminus D_M\). Hence, after the preprocessing, \(M\) contains a \RCLUS{} with center \(\vec{s}^*\) and at least \(k'=(1-\frac{\varepsilon}{2})k\) vectors. Our goal is to find a center for a \RCLUS{} with at least \((1-\frac{\varepsilon}{2})k' = (1-\frac{\varepsilon}{2})^2k\ge (1-\varepsilon)k\) vectors. For ease of exposition, we let \(\varepsilon' = \frac{\varepsilon}{2}\) and we let \(k'\ge(1-\varepsilon')k\) be the number of vectors of \(P^*\) still in \(M\).
Now we can show the following statement ($\spadesuit$):
After the above pre-processing, in time \(\bigoh(2^{r_{\max}}\cdot |M|)\), we can find a center for a \RCLUS{} of size $\frac{|M|}{2^{r_{\max}}}$.

Therefore, we can assume henceforth that \(k'\ge \frac{|M|}{2^{r_{\max}}}\)\iflong ; otherwise, Lemma~\ref{lem:existence_of_large_rad_cluster} computes a \RCLUS{} of size at least \(k'\)\fi .
Now the algorithm sets \(\vec{s}^*_0 = \vec{v}^* = \vec{0}\) and the goal is to iteratively compute
\(\vec{s}^*_1, \vec{s}^*_2, \vec{s}^*_3, \ldots, \vec{s}^*_{r'}\), \(r'\le r\), such that:

\begin{enumerate}
	\item for all \(i\in [r']\), we have	\(\HSET(\vec{s}^*_i) = \HSET(\vec{s}^*_{i-1})\cup \{c_i\}\) for some coordinate \(c_i\in \HSET(\vec{s}^*)\setminus \HSET(\vec{s}^*_{i-1})\);
	\item for all $j\in[r'-1]$, the number of vectors $\vec{w}$
          with \(\HDIST(\vec{w},\vec{s}^*_{j})\leq r\) is less than \((1-\varepsilon')k'\); and
	\item the number of vectors $\vec{w}$
          with \(\HDIST(\vec{w},\vec{s}^*_{r'})\leq r\) is at least \((1-\varepsilon')k'\).
\end{enumerate}

Let \(\vec{s}^*_i\) be such that \(\HSET(\vec{s}^*_i)\subseteq \HSET(\vec{s}^*)\) for some \(i\in [r'-1]\).  The number of vectors at distance at most \(r\) from \(\vec{s}^*_i\), \(i< r'\), is less than \((1-\varepsilon')k'\).  This means that at least \(\varepsilon' k' \ge \frac{\varepsilon'|M|}{2^{r_{\max}}}\) vectors whose completions are in \(P^*\) are at distance at least \(r+1\) from \(\vec{s}^*_i\). For every such vector \(\vec{w}\), it is easy to see that, since \(\HSET(\vec{s}^*_i)\subseteq \HSET(\vec{s}^*)\), it must be the case that \((\HSET(\vec{s}^*)\cap \HSET(\vec{w}) )\setminus\HSET(\vec{s}^*_i) \) is nonempty. Note that \(|\HSET(\vec{s}^*)|\le r\), and hence there exists \(c_{i+1}\in \HSET(\vec{s}^*)\) such that, for at least \(\frac{\varepsilon'|M|}{2^{r_{\max}}\cdot r}\) vectors \(\vec{w}\) in \(M\) at distance at least \(r+1\) from \(\vec{s}^*_i\), it holds that \(c_{i+1}\in \HSET(\vec{w})\). Moreover, for every vector \(\vec{w}\in M\), we have \(|\HSET(\vec{w})|\le r_{\max}\). It follows--by  a straightforward counting argument--that there are at most \(\frac{2^{r_{\max}}\cdot r}{\varepsilon}\cdot r_{\max}\) coordinates \(c\in [d]\) such that, for at least \(\frac{\varepsilon'|M|}{2^{r_{\max}}\cdot r}\) vectors \(\vec{w}\), it holds that \(c\in \HSET(\vec{w})\). Therefore, to obtain \(\vec{s}^*_{i+1}\) such that \(\HSET(\vec{s}^*_{i+1})\subseteq \HSET(\vec{s}^*)\), we only need to branch on one of at most \(\frac{2^{r_{\max}}\cdot r}{\varepsilon}\cdot r_{\max}\) coordinates. The statement of the theorem follows since we can exhaustively branch on the coordinates that are set to 1 in at least \(\frac{\varepsilon'|M|}{2^{r_{\max}}\cdot r}\) many vectors in \(M\), until either the number of vectors at distance at most \(r\) from \(\vec{s}^*_{i}\) is at least \((1-\varepsilon')k'\) or \(i\ge r\).
\end{proof}
\fi

\iflong
The algorithm starts by performing a similar branching and pre-processing to the \FPT{} algorithm for \DIAMCq\ parameterized by \(r+\MEP(M)\). Fix \(M^*\) to be a completion of \(M\) that contains a maximum size \RCLUS{}, let \(P^*\) be such a maximum size \RCLUS{} in \(M^*\), and let \(\vec{s}^*\) be a center of \(P^*\). The goal of the algorithm is to find \(\vec{s}^*\), as given \(\vec{s}^*\), by Observation~\ref{obs:center}, we can decide the instance in time $\bigoh(|M|d)$.

If $k<\frac{2\MEP(M)}{\varepsilon}+2$, then we use the algorithm in Theorem~\ref{thm:ANY_k_r_comb} to obtain a Turing kernel\iflong; since the vectors in the Turing kernel have their dimension bounded by a function of the parameter, we can enumerate all possible centers of the \RCLUS{} to solve every sub-instance of the Turing kernel. Now we can assume that $k\ge\frac{2\MEP(M)}{\varepsilon}+2$. We first observe that $|P^*\setminus D_M|$ contains at least two vectors (recall that \(D_M\) denotes the \(\MEP(M)\)-deletion set in \(M\)).
We now\fi\ifshort . Otherwise, we can\fi{} guess two vectors \(\vec{u}\) and \(\vec{v}\) in \(M\setminus D_M\), together with their respective completions \(\vec{u}^*\) and \(\vec{v}^*\), such that \iflong both\fi{} \(\vec{u}^*\) and \(\vec{v}^*\)\iflong{} are in \(P^*\) and\fi{} are the farthest vectors apart in \(P^*\setminus D_M\); fix $r_{\max}=\HDIST(\vec{v}^*,\vec{u}^*)$.\iflong{} Note that \(r_{\max}\le 2r\). Since there are at most \(|M|^2\) pairs of vectors in \(M\setminus D_M\) and each has at most \(\MEP(M)\) missing entries, there are at most \(2^{2\MEP(M)}|M|^2\) possible pairs of  \(\vec{u}^*\) and \(\vec{v}^*\), and our algorithm enumerates all these pairs. What we describe from now on assumes that the guess of  \(\vec{u}^*\) and \(\vec{v}^*\) is correct.\fi{}
We normalize all the vectors in $M$ so that
$\vec{v}^*$ becomes the all-zero vector\iflong , \ie, we replace $\vec{v}^*$ by
the all-zero vector, and for every other vector $\vec{w}\neq \vec{v}^*$,
we replace it with the vector $\vec{w}'$ such that $\vec{w}'[i]=0$ if
$\vec{v}^*[i]=\vec{w}[i]$, $\vec{w}'[i]=\blank$ if $\vec{w}[i]=\blank$, and $\vec{w}'[i]=1$, otherwise\fi .
Finally, for each vector $\vec{w}\in M$, we can in time $r_{\max}\cdot d$, where \(d\) is the dimension of the vectors in \(M\), check if there is a completion $\vec{w}^*$ of \(\vec{w}\) such that the distance from \(\vec{w}^*\) to both \(\vec{v}^*\) and \(\vec{u}^*\) is at most \(r_{\max}\)\iflong{}. This is because we can always set \(\vec{w}^*[i]=0\) for all \(i\in [d]\setminus\HSET(\vec{u}^*)\) such that  \(\vec{w}^*[i]=\blank\), since in that case \(\vec{v}^*[i]=\vec{u}^*[i]=0\) and we are looking for a completion that is close to both \(\vec{v}^*\) and \(\vec{u}^*\). Hence, to decide if there is such completion, we only need to decide how many coordinates of \(\vec{w}^*\) in \(\HSET(\vec{u}^*)\) we need to set to \(0\) and how many we need to set to \(1\). We can now remove all vectors \(\vec{w}\) that cannot be completed to a vector of distance at most $r_{\max}$ from both $\vec{v}^*$ and $\vec{u}^*$.\fi{}\ifshort{}; we remove all vectors \(\vec{w}\) that do not have such a completion.\fi
~Note here that some vectors in \(P^*\cap D_M\)\iflong{} might be further away from $\vec{v}^*$ or $\vec{u}^*$ and they\fi{} could have been removed from \(M\) at this step.
However,\iflong{} since $k\ge\frac{2\MEP(M)}{\varepsilon}+2$, we have\fi{} \(\MEP(M)\le \frac{\varepsilon}{2}k\) and\iflong{}, if our guess of $\vec{v}^*$ and $\vec{u}^*$ is correct,\fi{} we did not remove any vector from \(P^*\setminus D_M\). Hence, after the preprocessing, \(M\) contains a \RCLUS{} with center \(\vec{s}^*\) and at least \(k'=(1-\frac{\varepsilon}{2})k\) vectors. Our goal is to find a center for a \RCLUS{} with at least \((1-\frac{\varepsilon}{2})k' = (1-\frac{\varepsilon}{2})^2k\ge (1-\varepsilon)k\) vectors. For ease of exposition, we let \(\varepsilon' = \frac{\varepsilon}{2}\) and we let \(k'\ge(1-\varepsilon')k\) be the number of vectors of \(P^*\) still in \(M\).

\begin{LEM} \label{lem:existence_of_large_rad_cluster}
	After the above pre-processing, in time \(\bigoh(2^{r_{\max}}\cdot |M|)\), we can find a center for a \RCLUS{} of size $\frac{|M|}{2^{r_{\max}}}$.
\end{LEM}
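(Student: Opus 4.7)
My plan is a direct pigeonhole argument over the $2^{r_{\max}}$ possible intersections of a completion with $U := \HSET(\vec{u}^*)$, where $|U|=r_{\max}$. First, for each vector $\vec{w}\in M$ I will fix some valid completion $\vec{w}^*$---one with $\HDIST(\vec{w}^*,\vec{0})\le r_{\max}$ and $\HDIST(\vec{w}^*,\vec{u}^*)\le r_{\max}$---which is guaranteed to exist by the preprocessing; with this I will record $A(\vec{w}) := \HSET(\vec{w}^*)\cap U$ and $B(\vec{w}) := \HSET(\vec{w}^*)\setminus U$.

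The crucial observation will be that the two distance bounds together control $|B(\vec{w})|$: the bound $\HDIST(\vec{w}^*,\vec{0})\le r_{\max}$ reads $|A(\vec{w})|+|B(\vec{w})|\le r_{\max}$, while $\HDIST(\vec{w}^*,\vec{u}^*) = (r_{\max}-|A(\vec{w})|)+|B(\vec{w})|\le r_{\max}$ reads $|B(\vec{w})|\le |A(\vec{w})|$. Summing these gives $2|B(\vec{w})|\le r_{\max}$, so $|B(\vec{w})|\le r_{\max}/2\le r$, where the last inequality uses $r_{\max}\le 2r$ (both $\vec{v}^*$ and $\vec{u}^*$ lie in $P^*$, which has radius $r$).

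With this in hand I will partition $M$ into at most $2^{r_{\max}}$ classes according to the value of $A(\vec{w})\subseteq U$; by pigeonhole some class $M_{A^*}$ has size at least $|M|/2^{r_{\max}}$. Taking $\vec{s}\in\{0,1\}^d$ to be the vector with $\HSET(\vec{s})=A^*$, every $\vec{w}\in M_{A^*}$ satisfies $\HDIST(\vec{s},\vec{w}^*)=|B(\vec{w})|\le r$, since $\vec{s}$ and $\vec{w}^*$ agree on all of $U$ by construction and $\vec{s}$ is zero outside $U$; hence $\vec{s}$ is the desired center.

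For the running time, any $\blank$-position of $\vec{w}$ outside $U$ may be set to $0$ without loss (as $\vec{v}^*[i]=\vec{u}^*[i]=0$ there), so the only degrees of freedom lie in the at most $r_{\max}$ $\blank$-positions of $\vec{w}$ inside $U$; brute-forcing these $\le 2^{r_{\max}}$ choices and testing validity costs $\bigoh(2^{r_{\max}})$ per vector, for a total of $\bigoh(2^{r_{\max}}\cdot |M|)$. I do not anticipate a genuine technical obstacle here; the main conceptual point is to recognize that the two symmetric ``distance $\le r_{\max}$'' constraints force the outside-$U$ weight of any valid completion to be at most half of $r_{\max}$, which simultaneously enables the pigeonhole partition and meets the radius bound $r$.
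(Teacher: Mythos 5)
Your proposal is correct and follows essentially the same route as the paper's own proof: the same arithmetic on the two symmetric distance-$r_{\max}$ constraints yields the key bound $|\HSET(\vec{w}^*)\setminus\HSET(\vec{u}^*)|\le r_{\max}/2\le r$, and the same enumeration/pigeonhole over the $2^{r_{\max}}$ subsets of $\HSET(\vec{u}^*)$ produces the center. The only cosmetic difference is that you fix a completion per vector and pigeonhole, whereas the paper iterates over all candidate centers $\vec{s}^*_T$; these are the same argument.
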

\begin{proof}
	The lemma follows rather straightforwardly  from the following claim.	\begin{CLM}
		For every vector \(\vec{w}\in M\) it holds that \(|\HSET(\vec{w})\setminus \HSET(\vec{u}^*)|\le r\).
	\end{CLM}
	\begin{claimproof}[Proof of Claim]
		Let \(\vec{w}^*\) be a completion of \(\vec{w}\) such that \(\vec{w}^*\) is at distance at most \(r_{\max}\) from both \(\vec{u}^*\) and \(\vec{v}^*\). Notice that \(\HSET(\vec{w})\setminus \HSET(\vec{u}^*)\subseteq \HSET(\vec{w}^*)\setminus \HSET(\vec{u}^*)\); hence \(|\HSET(\vec{w})\setminus \HSET(\vec{u}^*)|\le |\HSET(\vec{w}^*)\setminus \HSET(\vec{u}^*)|\), and it suffices to show that \(|\HSET(\vec{w}^*)\setminus \HSET(\vec{u}^*)|\le r\).
		
		Now \(\vec{w}^*\) is at distance at most \(r_{\max}\) from \(\vec{v}^*\), and hence we have \(\HDIST(\vec{w}^*,\vec{v}^*)\le r_{\max}\) and \(\HSET(\vec{v}^*)=\emptyset\). It follows that  \(|\HSET(\vec{w}^*)\setminus \HSET(\vec{u}^*)| + |\HSET(\vec{w}^*)\cap \HSET(\vec{u}^*)|\le r_{\max}.\) Similarly, \(\vec{w}^*\) is at distance at most \(r_{\max}\) from \(\vec{u}^*\), i.e., \(\HDIST(\vec{w}^*,\vec{u}^*)\le r_{\max}\), and  \(|\HSET(\vec{w}^*)\setminus \HSET(\vec{u}^*)| + |\HSET(\vec{u}^*)\setminus \HSET(\vec{w}^*)|\le r_{\max}.\) Combining the two inequalities, we get \(2|\HSET(\vec{w}^*)\setminus \HSET(\vec{u}^*)| + |\HSET(\vec{u}^*)\cap \HSET(\vec{w}^*)| + |\HSET(\vec{u}^*)\setminus \HSET(\vec{w}^*)|\le 2r_{\max}.\) Since \(|\HSET(\vec{u}^*)\cap \HSET(\vec{w}^*)| + |\HSET(\vec{u}^*)\setminus \HSET(\vec{w})| = |\HDIST(\vec{v}^*,\vec{u}^*)| = r_{\max},\) we have  \(|\HSET(\vec{w}^*)\setminus \HSET(\vec{u}^*)|\le \frac{r_{\max}}{2}\le r.\)
	\end{claimproof} 
We can now branch over all possible subsets \(T\) of \(\HSET(\vec{u}^*)\), and for each such subset, try as the center the vector \(\vec{s}^*_T\) such that \(\HSET(\vec{s}^*_T)=T\) and compute the set of all vectors at distance at most \(r\) from \(\vec{s}^*_T\). To prove the lemma, it remains to show that every vector in \(M\) can be completed to a vector at distance at most \(r\) from at least one of \(2^{r_{\max}}\) vectors \(\vec{s}^*_T\).
Let \(\vec{w}\) be a vector in \(M\) that can be completed to a vector \(\vec{w}^*\)  with \(\HSET(\vec{w}^*)\cap \HSET(\vec{u}^*) = T\) for some \(T\subseteq \HSET(\vec{u}^*)\) . Now let \(\vec{w}^*\) be such completion of \(\vec{w}\) that \(\HSET(\vec{w}^*)\cap \HSET(\vec{u}^*) = T\) and \(\vec{w}^*[i]=0\) for all \(i\in [d]\setminus \HSET(\vec{u}^*)\) such that \(\vec{w}[i]=\blank\). It follows from the claim that \(\vec{w}^*\) is at distance at most \(r\) from \(\vec{s}^*_T\).
\end{proof}

Therefore, we can assume henceforth that \(k'\ge \frac{|M|}{2^{r_{\max}}}\)\iflong ; otherwise, Lemma~\ref{lem:existence_of_large_rad_cluster} computes a \RCLUS{} of size at least \(k'\)\fi .
Now the algorithm sets \(\vec{s}^*_0 = \vec{v}^* = \vec{0}\) and the goal is to iteratively compute
\(\vec{s}^*_1, \vec{s}^*_2, \vec{s}^*_3, \ldots, \vec{s}^*_{r'}\), \(r'\le r\), such that:

\begin{enumerate}
	\item for all \(i\in [r']\), we have	\(\HSET(\vec{s}^*_i) = \HSET(\vec{s}^*_{i-1})\cup \{c_i\}\) for some coordinate \(c_i\in \HSET(\vec{s}^*)\setminus \HSET(\vec{s}^*_{i-1})\);
	\item for all $j\in[r'-1]$, the number of vectors at distance at most \(r\) from \(\vec{s}^*_{j}\) is less than \((1-\varepsilon')k'\); and
	\item the number of vectors at distance at most \(r\) from \(\vec{s}^*_{r'}\) is at least \((1-\varepsilon')k'\).
\end{enumerate}

Let \(\vec{s}^*_i\) be such that \(\HSET(\vec{s}^*_i)\subseteq \HSET(\vec{s}^*)\) for some \(i\in [r'-1]\).  The number of vectors at distance at most \(r\) from \(\vec{s}^*_i\), \(i< r'\), is less than \((1-\varepsilon')k'\).  This means that at least \(\varepsilon' k' \ge \frac{\varepsilon'|M|}{2^{r_{\max}}}\) vectors whose completions are in \(P^*\) are at distance at least \(r+1\) from \(\vec{s}^*_i\). For every such vector \(\vec{w}\), it is easy to see that since \(\HSET(\vec{s}^*_i)\subseteq \HSET(\vec{s}^*)\), it must be the case that \((\HSET(\vec{s}^*)\cap \HSET(\vec{w}) )\setminus\HSET(\vec{s}^*_i) \) is not empty. Note that \(|\HSET(\vec{s}^*)|\le r\), and hence there exists \(c_{i+1}\in \HSET(\vec{s}^*)\) such that, for at least \(\frac{\varepsilon'|M|}{2^{r_{\max}}\cdot r}\) vectors \(\vec{w}\) in \(M\) at distance at least \(r+1\) from \(\vec{s}^*_i\), it holds that \(c_{i+1}\in \HSET(\vec{w})\). Moreover, for every vector \(\vec{w}\in M\), we have \(|\HSET(\vec{w})|\le r_{\max}\). It follows--by  a straightforward counting argument--that there are at most \(\frac{2^{r_{\max}}\cdot r}{\varepsilon}\cdot r_{\max}\) coordinates \(c\in [d]\) such that, for at least \(\frac{\varepsilon'|M|}{2^{r_{\max}}\cdot r}\) vectors \(\vec{w}\), it holds that \(c\in \HSET(\vec{w})\). Therefore, to obtain \(\vec{s}^*_{i+1}\) such that \(\HSET(\vec{s}^*_{i+1})\subseteq \HSET(\vec{s}^*)\), we only need to branch on one of at most \(\frac{2^{r_{\max}}\cdot r}{\varepsilon}\cdot r_{\max}\) coordinates. By exhaustively branching on the coordinates that are set to 1 in at least \(\frac{\varepsilon'|M|}{2^{r_{\max}}\cdot r}\) many vectors in \(M\), until either the number of vectors at distance at most \(r\) from \(\vec{s}^*_{i}\) is at least \((1-\varepsilon')k'\) or \(i\ge r\),  we get:

\begin{THE} \label{the:radius-approximation}
		Given an instance \((M,k,r)\) of \RADCq\ and \(\varepsilon\in \mathbb{R}\), where \(0<\varepsilon<1\), there exists an \FPT{}
		algorithm \(\mathcal{A}\), parameterized by $r+\MEP(M)+\frac{1}{\varepsilon}$, such that \(\mathcal{A}\) either computes a \RCLUS{} of size at least $(1-\varepsilon)k$, or correctly concludes that \(M\) does not contain a \RCLUS{} of size \(k\).
	\end{THE}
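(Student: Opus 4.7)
My plan is to build on the preprocessing machinery already developed for \DIAMCq\ parameterized by $r+\MEP(M)$, and then exploit the existence of an unknown center to branch on a small set of ``heavy'' coordinates. First I would dispatch the small case: if $k < 2\MEP(M)/\varepsilon + 2$, then $k$ is bounded by the parameter, so the Turing kernel of Theorem~\ref{thm:ANY_k_r_comb} reduces to sub-instances of dimension bounded in the parameter, where I can brute-force enumerate each candidate center and test it via Observation~\ref{obs:center}.

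For the remaining regime $k \ge 2\MEP(M)/\varepsilon + 2$, the key observation is that $\MEP(M) \le (\varepsilon/2)k$, so we can safely discard all vectors of the deletion set $D_M$ at the cost of only an $(\varepsilon/2)$-factor in the cluster size. I would then guess two vectors $\vec{u}, \vec{v} \in M\setminus D_M$ together with completions $\vec{u}^*, \vec{v}^*$ that are farthest apart in $P^*\setminus D_M$, where $P^*$ is a fixed hypothetical optimal solution; normalize so that $\vec{v}^* = \vec{0}$; and prune any vector admitting no completion at distance at most $r_{\max} := \HDIST(\vec{u}^*, \vec{v}^*)$ from both $\vec{v}^*$ and $\vec{u}^*$. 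By the choice of $\vec{u}^*$ and $\vec{v}^*$, at least $k' := (1-\varepsilon/2)k$ vectors from $P^*$ survive, so it then suffices to locate a center achieving an additional $(1-\varepsilon/2)$-approximation to obtain the claimed $(1-\varepsilon)$-approximation overall.

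Next, I would establish an auxiliary lemma showing that every surviving vector $\vec{w}$ satisfies $|\HSET(\vec{w})\setminus \HSET(\vec{u}^*)|\le r$, obtained by summing the two inequalities for a valid completion $\vec{w}^*$ at distance at most $r_{\max}$ from $\vec{v}^* = \vec{0}$ and from $\vec{u}^*$. This handles the easy subcase $k' < |M|/2^{r_{\max}}$: brute-force branching on all $2^{r_{\max}}$ subsets $T\subseteq \HSET(\vec{u}^*)$, setting $\HSET(\vec{s}^*_T) = T$ and testing each $\vec{s}^*_T$ via Observation~\ref{obs:center}, finds a \RCLUS{} of size at least $|M|/2^{r_{\max}} \ge k'$, because the projections of surviving completions onto $\HSET(\vec{u}^*)$ partition $M$ into $2^{r_{\max}}$ classes.

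In the remaining case $k' \ge |M|/2^{r_{\max}}$, I would iteratively extend a partial center $\vec{s}^*_i$ from $\vec{s}^*_0 = \vec{0}$ by adding one coordinate from $\HSET(\vec{s}^*)\setminus \HSET(\vec{s}^*_i)$ at a time, stopping as soon as at least $(1-\varepsilon/2)k'$ vectors of $M$ lie within radius $r$ of $\vec{s}^*_i$. The recursion depth is at most $r$ because $|\HSET(\vec{s}^*)|\le r$. The main obstacle is bounding the branching factor at each step. At any step where the target has not yet been met, at least $(\varepsilon/2)k' \ge \varepsilon|M|/(2 \cdot 2^{r_{\max}})$ uncovered vectors $\vec{w}$ from $P^*$ must intersect $\HSET(\vec{s}^*)\setminus \HSET(\vec{s}^*_i)$, so pigeonholing over its at most $r$ coordinates identifies a coordinate that is $1$ in at least $\varepsilon|M|/(r\cdot 2^{r_{\max}+1})$ vectors of $M$. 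Since every surviving $\vec{w}$ has $|\HSET(\vec{w})|\le r_{\max}$, a double-counting against the total number of ones, at most $r_{\max}|M|$, then caps the number of such heavy candidate coordinates by a function of $r$ and $1/\varepsilon$ alone, giving an \FPT{} branching factor and hence the claimed \FPT{} running time.
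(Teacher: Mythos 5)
Your proposal follows essentially the same route as the paper's proof: the same dispatch of the small-$k$ case via the Turing kernel of Theorem~\ref{thm:ANY_k_r_comb}, the same guess-and-normalize preprocessing with $\vec{u}^*,\vec{v}^*$ and $r_{\max}$, the same auxiliary claim that $|\HSET(\vec{w})\setminus\HSET(\vec{u}^*)|\le r$ yielding the easy subcase $k'<|M|/2^{r_{\max}}$ (Lemma~\ref{lem:existence_of_large_rad_cluster}), and the same iterative one-coordinate-at-a-time growth of the partial center with the pigeonhole-plus-double-counting bound of roughly $\frac{2^{r_{\max}}\,r\, r_{\max}}{\varepsilon}$ on the branching factor. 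The argument and all key quantities match the paper's proof.
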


\begin{proof}
	The initial phase of the algorithm either computes a \RCLUS{} of size at least \(k\) if \(k<\frac{2\MEP(M)}{\varepsilon}+2\), or branches into at most \(2^{2\MEP(M)}|M|^2\) many branches. Afterwards, the algorithm pre-processes every branch such that in each branch we can either compute a center for a \RCLUS{} of size at least \(k'\ge k-\MEP(M)\ge (1-\frac{\varepsilon}{2})k\) in time \(\bigoh(2^{r_{\max}}|M|)\) by Lemma~\ref{lem:existence_of_large_rad_cluster}, or we end up in an instance \((M',k',r)\) such that
		$k'\ge \frac{|M|}{2^{r_{\max}}}$ and
		\(|\HSET(\vec{w})|\le r_{\max}\) for all \(\vec{w}\in M\).
Moreover, we can assume that there exists a maximum size cluster \RCLUS{} \(P^*\) in \(M'\) with \(\vec{0}\in P^*\). Note that \(k'\ge (1-\frac{\varepsilon}{2})k\) and a center of \RCLUS{} of size \(k'\) in \(M'\) can be straightforwardly transformed to a center of \RCLUS{} of size \(k'\) in \(M\).

From now on the algorithm is rather simple. It starts with \(s^*_0= \vec{0}\). Now for each \(i\in \{0,\ldots,r\}\) it either concludes that \(s^*_i\) is a good enough center or it branches in at most \(\frac{2^{r_{\max}}\cdot r}{\varepsilon}\cdot r_{\max}\) possibilities for \(s^*_{i+1}\) with \(|\HSET(s^*_{i+1})|=i+1\) as follows.
\begin{enumerate}
	\item It first computes the set \(P_i\) of all vectors in \(M\) at distance at most \(r\) from \(\vec{s}^*_i\). This can be done in \(\bigoh(|M|d)\) time, where \(d\) is the dimension of the vectors in \(M\).
	\item If this set has size at least \((1-\varepsilon')k'\), it outputs as the center the vector \(\vec{s}^*_i\) and stops.
	\item Now, if \(i = r\) and we did not find a good center, then any center containing \(\vec{s}^*_r\) would be too far from \(\vec{0}\), so the algorithm outputs \textsc{Fail} and stops.
	
	\item Otherwise it branches over all coordinates \(c\in [d]\setminus \HSET(\vec{s}^*_i)\) it computes the set \(W\subseteq (M\setminus P_i)\) of all vectors \(\vec{w}\) with distance at least \(r+1\) from \(\vec{s}^*_i\) and \(c\in \HSET(\vec{w})\).
	\begin{enumerate}
		\item If \(|W|\ge \frac{\varepsilon'|M|}{2^{r_{\max}}\cdot r}\), then the algorithm branches on letting \(\vec{s}^*_{i+1}\) be the vector such that \(\vec{s}^*_{i+1}[c]=1\) and \(\vec{s}^*_{i+1}[j]=\vec{s}^*_{i}[j]\) for all \(j\neq c_i\).
		\item If \(|W| < \frac{\varepsilon'|M|}{2^{r_{\max}}\cdot r}\), discard the current choice of $c$.
	\end{enumerate}
\end{enumerate}

The above algorithm is a branching algorithm, where the maximum number of branches at each node is \(\frac{2^{r_{\max}}\cdot r}{\varepsilon}\cdot r_{\max}\)
and the branching depth is \(r\). Otherwise, the algorithm runs in polynomial time in every branching node. Hence the total running time of the algorithm is
\((\frac{2^{r_{\max}}\cdot r}{\varepsilon}\cdot r_{\max})^{r}\cdot |M|^{\bigoh(1)}\). If the algorithm outputs a vector \(\vec{s}^*_i\), then there are at least
\((1-\varepsilon')k'\) vectors in \(M'\) at distance at most \(r\) from \(\vec{s}^*_i\). On the other hand, if there is \RCLUS{} \(P^*\) of size at
least \(k'\) with center \(\vec{s}^*\) such that \(\vec{0}\in P^*\), then
\(\HSET(\vec{s}^*_0)=\emptyset\subseteq \HSET(\vec{s}^*)\),
\(|\HSET(\vec{s}^*)|\ge r\), and for every vector \(\vec{s}^*_i\) with
\(\HSET(\vec{s}^*_i)\subseteq\HSET(\vec{s}^*)\) we already argued that there
exists \(c_{i+1}\in \HSET(\vec{s}^*)\setminus \HSET(\vec{s}^*_i)\) such that for at least \(\frac{\varepsilon'|M|}{2^{r_{\max}}\cdot r}\) vectors \(\vec{w}\) in \(M\) at distance at least \(r+1\) from \(\vec{s}^*_i\) it holds \(c_{i+1}\in \HSET(\vec{w})\). Hence the node with \(\vec{s}^*_i\) contains a branch with center \(\vec{s}^*_{i+1}\) such that \(\HSET(\vec{s}^*_{i+1})\subseteq \HSET(\vec{s}^*)\). If \(\HSET(\vec{s}^*_{i})\subseteq \HSET(\vec{s}^*)\) and \(i = |\HSET(\vec{s}^*)|\), then \(\vec{s}^*_i = \vec{s}^*\) and there are \(k'\) vectors at distance at most \(r\) from \(\vec{s}^*_i\). It follows that if \((M',k',r)\) is a \yes-instance, then the above algorithm indeed outputs some vector \(\vec{s}^*_i\) with at least \((1-\varepsilon')k'\) vectors from \(M'\) at distance at most \(r\) from \(\vec{s}^*_i\).
\end{proof} \fi

\section{Concluding Remarks}

We studied the parameterized complexity of two fundamental problems pertaining to incomplete data that have applications in data analytics. In most cases, we were able to provide a complete landscape of the parameterized complexity of the problems w.r.t.~the parameters under consideration. 
It is worth noting that all algorithmic upper bounds obtained in this paper can also be directly generalized to vectors (i.e., matrices) over a domain whose size is bounded by the parameter value by using the encoding described by Eiben et al.~\cite{EibenGKOS21}.

Two important open questions ensue from our work, namely determining the parameterized complexity of \RADCq{} w.r.t.~each of the two parameterizations $k + \lambda$ and $r + \lambda$. In particular, the restrictions of these two problems to complete data (i.e., $\lambda =0$) remain open, and result in two important questions about the parameterized complexity of \RADC{} parameterized by the cluster size $k$ or by the cluster radius $r$.

\bibliography{literature}

\end{document}